\newcommand{\BEQ}{::=}
\newcommand{\BOR}{\mid}
\newcommand{\bool}{\mathtt{bool} }
\newcommand{\real}{\mathtt{real} }
\newcommand{\unit}{\mathtt{unit} }
\newcommand{\op}{\mathrm{op} }
\newcommand{\pred}{\mathrm{pred}}
\newcommand{\myif}{\mathtt{if}\;}
\newcommand{\mythen}{\;\mathtt{then}\;}
\newcommand{\myelse}{\;\mathtt{else}\;}
\renewcommand{\G}{\Gamma}
\newcommand{\D}{\Delta}
\newcommand{\dRD}[3] {#3.\mathtt{rd}_{#2}(#1)}
\newcommand{\fRD}[3] {#3.\mathtt{fd}_{#2}(#1)}
\newcommand{\rop}{\op_r}
\newcommand{\newcalR}[4]{#4.\mathcal{R}_{#3} (#1.\,#2 )}
\newcommand{\newcalRprop}[5]{#4.\mathcal{R}^{#5}_{#3}(#1.\,#2)}
\newcommand{\mylet}{\mathtt{let}\; }
\newcommand{\mybe}{\; = \; }
\newcommand{\myin}{\;\mathtt{in}\; }
\newcommand{\myletrec}{\;\mathtt{letrec}\; }
\newcommand{\amyletrec}[1]{\;\mathtt{letrec}_n }
\newcommand{\Dom}{\mathrm{Dom}}
\newcommand{\FV}{\mathrm{FV}}
\newcommand{\FFV}{\mathrm{FFV}}
\newcommand{\imp}{\Rightarrow} 
\newcommand{\impe}{\Rightarrow} 
\newcommand{\imps}{\leadsto}  
\newcommand{\den}[1]{[\![ #1 ]\!]}
\newcommand{\true}{\mathtt{true}}
\newcommand{\false}{\mathtt{false}}
\newcommand{\tuple}[1]{\langle #1 \rangle}
\newcommand{\mycut}[1]{}
\newcommand{\R}{\mathbb{R}}
  \newcommand{\grad}[2]{\mathtt{grad}_{#1}(#2)}
 \newcommand{\type }{\!:\!}
  \newcommand{\fclo}[2]{\mathbf{clo}_{#1} (#2)}
  \newcommand{\afclo}[3]{\mathbf{clo}_{#1,#2} (#3)}
\newcommand{\POPLomit}[1]{}
\newcommand{\POPLsomit}[1]{}
\newcommand{\T}{\mathbb{T}}
\newcommand{\dtrue}{t\!t}
\newcommand{\dfalse}{f\!\!\hspace{-1pt}f}
\newcommand{\pr}{\rightharpoonup}
\newcommand{\res}{\upharpoonright}
\newcommand{\eqdef}{=_{\small \mbox{def}}}
\newcommand{\ttfst}{\mathtt{fst}}
\newcommand{\ttsnd}{\mathtt{snd}}
                   \newcommand{\ov}[1]{\overline{#1}}
       \newcommand{\bx}{\mathbf{x}}
       \newcommand{\by}{\mathbf{y}}
       \newcommand{\bu}{\mathbf{u}}
\newcommand{\bv}{\mathbf{v}}
       \newcommand{\Cl}{\mathrm{Cl}}
\newcommand{\Mat}{\mathrm{Mat}}
\newcommand{\Op}{\mathrm{Op}}
\newcommand{\Pred}{\mathrm{Pred}}
\newcommand{\ev}{\mathrm{ev}}
\newcommand{\bev}{\mathrm{bev}}
 \newcommand{\Val}{\mathrm{Val}}
 \newcommand{\hole}{[\;]}
\newcommand{\jacobian}{\mathrm{J}}
\newcommand{\myd}{\mathrm{d}}
\newcommand{\N}{\mathbb{N}}
\newcommand{\mydr}{\mathrm{d}^r}
\newcommand{\smooth}[2]{\mathcal{S}[#1,#2]}
\newcommand{\cont}[2]{\mathcal{C}[#1,#2]}
\newcommand{\fst}{\pi_0}
\newcommand{\snd}{\pi_1}
\newcommand{\myundef}{\uparrow}
\newcommand{\mydef}{\downarrow}
\newcommand{\myDef}{\Downarrow}
\newcommand{\ldash}{\vdash_l}
\newtheorem{fact}{Fact} 
\newcommand{\mycomment}[1]{}
\newcommand{\ap}[1]{#1}
\newcommand{\mylcomment}[1]{}
\newcommand{\myacomment}[1]{}
\newcommand{\dtimes}{\mathbin{\dot{\times}}}
\newcommand{\dlessthan}{\mathbin{\dot{<}}}
\newcommand{\plainrdstring}{\mathtt{rd}}
\newcommand{\myletop}{\mathtt{let}}
\newcommand{\myletrecop}{\mathtt{letrec}}
\newcommand{\myletrecnospacebefore}{\mathtt{letrec}\; }
\newcommand{\myinnospacebefore}{\mathtt{in}\; }
\newcommand{\myParameterizedModel}{\mathit{myUntrainedModel}}
\newcommand{\myLoss}{\mathit{myLoss}}
\newcommand{\rate}{\mathit{rate}}
\newcommand{\myTrainedModel}{\mathit{myTrainedModel}}
\newcommand{\myLossNow}{\mathit{currentLoss}}
\newcommand{\maxLoss}{\mathit{maxLoss}}
\newcommand{\gradLossNow}{\mathit{gradLoss}}
\newcommand{\descend}{\mathit{descend}}
\newcommand{\s}{\hspace{1pt}}
\begin{document}

\title
{A Simple Differentiable Programming Language}   
\titlenote{This version of the paper is the POPL publication version, but with some minor corrections.}   


\author{Mart\'{i}n Abadi}
\affiliation{
  \institution{Google Research}            
  \country{United States}                    
}
\email{abadi@google.com }        

\author{Gordon D.\ Plotkin}
\affiliation{
  \institution{Google Research}           
  \country{United States}                   
}
\email{plotkin@google.com}         

\begin{abstract}
Automatic differentiation plays a prominent role in scientific computing and in modern machine learning, often in the context of powerful programming systems. The relation of the various embodiments of automatic differentiation to the mathematical notion of derivative
  is not always entirely clear---discrepancies can arise, sometimes inadvertently.
In order to study automatic differentiation in such programming contexts,
we define a small but expressive programming language that includes 
a construct for reverse-mode differentiation. 
We give operational and denotational semantics for this language. The operational semantics employs 
popular implementation techniques, while the denotational semantics employs 
notions of differentiation familiar from real analysis.
We establish that these semantics coincide.
\end{abstract}


\begin{CCSXML}
<ccs2012>
<concept>
<concept_id>10011007.10011006.10011050.10011017</concept_id>
<concept_desc>Software and its engineering~Domain specific languages</concept_desc>
<concept_significance>500</concept_significance>
</concept>
<concept>
<concept_id>10010147.10010257</concept_id>
<concept_desc>Computing methodologies~Machine learning</concept_desc>
<concept_significance>300</concept_significance>
</concept>
<concept>
<concept_id>10003752.10010124.10010131.10010133</concept_id>
<concept_desc>Theory of computation~Denotational semantics</concept_desc>
<concept_significance>500</concept_significance>
</concept>
<concept>
<concept_id>10003752.10010124.10010131.10010134</concept_id>
<concept_desc>Theory of computation~Operational semantics</concept_desc>
<concept_significance>500</concept_significance>
</concept>
</ccs2012>
\end{CCSXML}

\ccsdesc[500]{Theory of computation~Denotational semantics}
\ccsdesc[500]{Theory of computation~Operational semantics}
\ccsdesc[500]{Software and its engineering~Domain specific languages}
\ccsdesc[500]{Computing methodologies~Machine learning}

\keywords{
automatic differentiation, differentiable programming.}  

\maketitle




\POPLomit{
\tableofcontents}

\section{Introduction}\label{intro}

Automatic differentiation is a set of techniques for calculating the derivatives of functions described by computer programs  (e.g.,~\cite{DBLP:journals/toplas/PearlmutterS08,TapenadeRef13,JMLR:v18:17-468,Gri00}). These techniques are not required to produce symbolic representations for derivatives 
as  in classic symbolic differentiation; on the other hand, neither do they employ finite-difference approximation methods common in numerical 
differentiation. Instead, they rely on the chain rule from calculus to obtain the desired derivatives from those of the programs's basic operations. Thus, 
automatic differentiation is at the intersection of calculus and programming. However, the programs of interest are more than chains of operations: they 
may include control-flow constructs, data structures,  and computational effects (e.g., side-effects or exceptions). Calculus does not provide an immediate 
justification for the treatment of such programming-language features.

In the present work we 
help bridge the gap between rules for automatic differentiation in expressive programming languages
 and their mathematical justification in terms of denotational semantics. Specifically,
we consider automatic differentiation from a programming-language
perspective by defining and studying a small but powerful functional first-order language.
The language has 
conditionals and recursively defined functions (from which loops can
be constructed), but only rudimentary data structures. Additionally, it contains a construct for reverse-mode
differentiation, explained in detail below. Our
language is thus inspired by modern systems for machine learning,
which include standard programming constructs and support reverse-mode
differentiation.
 Reverse-mode differentiation permits the
computation of gradients,  forward-mode derivatives, and more.  Indeed as our differentiation construct  is a language primitive,  differentiations can be nested within differentiations, allowing the computation of higher-order derivatives.





In the setting of a language such as ours, we can consider some common approaches to implementing differentiation: 
\begin{itemize} 
\item 

One approach relies on code transformation, whether on source code or intermediate representations. 
For example, for the derivative of a conditional expression 
$\myif B \mythen M_1 \myelse M_2$, it would output $\myif B \mythen N_1 \myelse N_2$, where $N_1$ and $N_2$ are the derivatives of $M_1$ and $M_2$ respectively.
This approach is employed, for instance, in Theano~\cite{bergstra2010theano}, TensorFlow 1.0~\cite{tensorflow2016,Yu:2018:DCF:3190508.3190551}, and Tangent~\cite{tangent}.

\item Another approach relies on tracing, typically eliminating control  structures to produce a simpler form of code,  
which we call an \emph{execution trace}, that can more easily be differentiated. For example,  to produce the derivative of $\myif B \mythen M_1 \myelse M_2$, tracing would evaluate the conditional and produce a trace of the branch taken.
Execution traces correspond to graphs of basic operations, and can be taken to be sequences of elementary assignments or else functional programs in 
A-normal form. Their derivatives can be calculated by applying the chain rule to those basic operations, perhaps via a code transformation (but now of a much simpler kind).
Tracing  may also record some intermediate values in an \emph{evaluation trace},  
to reduce, or eliminate, the need for recomputation.\footnote{Terminology in the automatic differentiation literature varies. Here we follow~\cite{Gri00} for evaluation traces.  Our execution traces are, perhaps in somewhat different  manifestations, variously termed Wengert lists or  tapes or evaluation traces~\cite{JMLR:v18:17-468,DBLP:journals/toplas/PearlmutterS08}. They can also be seen as combinations of   the operation and index traces  of~\cite{Gri00}.}


This approach thereby conveniently avoids the problem of defining code transformations for conditionals and many other language constructs.
It can also be implemented efficiently, sometimes in part with JIT compilation.
For these reasons, trace-based differentiation is of growing importance. 
It is employed, for instance, in Autograd~\cite{autograd}, TensorFlow Eager Mode~\cite{agrawal2019tensorflow}, Chainer~\cite{chainerorg}, PyTorch~\cite{pytorch}, and JAX~\cite{JAX}\footnote{See \url{ https://www.sysml.cc/doc/146.pdf}.}.
\end{itemize}

We therefore focus on trace-based differentiation, 
and give our language an operational semantics using the trace-based approach.  To do so, we define a sublanguage of execution trace terms (called simply trace terms below). These 
have no conditionals, function definitions or calls, or reverse-mode differentiations. They do have local definitions, corresponding to fanout in the graphs, but may not be  in A-normal form. Tracing  
is modeled by a  new kind  of  evaluation,  called symbolic evaluation. This uses an environment for the free variables of a term to remove conditionals and function calls. 
Function derivatives at a given value are evaluated  in three stages: first, the function is traced at that value; next, the resulting  trace term is symbolically differentiated (largely just using the chain rule), resulting in another such trace term; and,  finally, that term is evaluated.

We do not account for some of the optimizations used in practice.
Doing so would have been a more complicated enterprise, possibly with more arbitrary choices tied to implementation details, and we wished to get a more straightforward formalization working first.



From a mathematical perspective, both approaches to implementing differentiation pose correctness problems.
In particular, functions defined using conditionals need not
be continuous, let alone differentiable. Consider, for example, the following definition
%
\[f(x\type\real)\type \real \mybe \myif (x < 0) \mythen 0 \myelse x\]
of the popular ReLU function \cite{goodfellow2016book}. This function is not differentiable at $0$. Further, changing the function body to $\myif (x < 0) \mythen 0 \myelse 1$ yields a non-continuous function.
What is more, both approaches can produce wrong answers even for differentiable functions! 
Consider, for example, the following definition of the identity function on the reals: 
%
\[g(x\type\real)\type \real \mybe \myif (x = 0) \mythen 0 \myelse x\]
The 
 derivative of this function at $x=0$ is $1$. However, differentiation ``by branches'' (whether by code transformation or tracing)
would produce the wrong answer,~$0$.

In order to capture the mathematical perspective,  in addition to its operational semantics we give our language a denotational
semantics.  This
semantics is based on classical notions of differentiation
from real analysis (see, for example,~\cite{Tr13}). That theory concerns multivariate functions on the reals defined on open domains, i.e., partial such functions with open domains of definition. In our semantics, we make use of those that are smooth  (that is, those that can be partially differentiated any number of times). 
A particularly pleasing aspect of this mathematical development  is how well domain theory (needed to account for recursion) interacts with differentiation.



Partiality is necessary, as for 
any language with general recursion, but it also gives us useful flexibility in connection with differentiation.
For example,  let $\dlessthan$ be 
the approximation to $<$ which is equal to it except on the diagonal (i.e., where both arguments are equal) where it is undefined. Then
\[\dot{f}(x\type\real)\type \real \mybe \myif (x \dlessthan 0) \mythen 0 \myelse x\]
defines an approximation to ReLU which is undefined at $0$. The approximation to $<$ is (unlike $<$) continuous
(i.e., the pre-images of $\true$ and $\false$ are open sets), and the approximation to ReLU is differentiable wherever it is defined. 
Therefore, we design the semantics of our language so that it forbids functions such as $f$ but allows related 
approximations such as $\dot{f}$. An interesting question is how satisfactory an idealization this is of programming practice (which in any case works with approximate reals). We return to this point in the final section.


%


Proceeding in this way, we obtain adequacy theorems (i.e., operational soundness and completeness theorems) 
connecting the operational semantics of our language with a denotational semantics based on the classical theory of differentiation of partially defined multivariate real functions. Our theorems apply not only to conditional expressions but to the full language.



In sum, the main contributions of this paper are: 
(1) a first-order language with conditionals, recursive function definitions, and a  reverse-mode differentiation construct; 
(2) an operational semantics that models one form of trace-based differentiation;
(3)  a denotational semantics based on standard mathematical notions from real analysis and domain theory;
and (4) theorems that show that the two semantics coincide, i.e.,  the
derivatives computed by the operational semantics are indeed the
correct derivatives in a mathematical sense. Beyond the specifics of
these results, this paper aims to give some evidence of the relevance of 
ideas and techniques from the programming-languages literature
for programming systems that
include automatic differentiation, such as current systems for machine learning.

\subsubsection*{Additional context}

While traditionally associated with scientific computing, automatic differentiation is now a central component of many modern machine learning  systems, and  those for deep learning in particular  \cite{goodfellow2016book}. These systems often employ automatic differentiation to compute the gradients of ``loss functions'' with respect to parameters, such as neural network weights. Loss functions measure the error resulting from particular values for the parameters. For example, when a machine-learning model is trained with a dataset consisting of pairs $(x_0,y_0), \ldots,(x_{n-1},y_{n-1})$, aiming to learn a function that maps the $x_i$'s to the $y_i$'s, the  loss function may be the distance between the $y_i$'s and the values  the model predicts when presented with the $x_i$'s. By applying gradient descent to adjust the parameters, this error can be reduced, until convergence or (more commonly) until the error is tolerable enough. This simple approach has proven remarkably effective: it is at the core of many recent successes of machine learning in a variety of domains.


Whereas gradients are for functions of type
$\real^n \rightarrow \real$, for $n \geq 0$, treating the more general
functions of type $\real^n \rightarrow \real^m$, for $n,m \geq 0$, works better with function composition,
and with the composite structures such as tensors of reals used in deep learning. 
The
literature contains two basic ``modes'' for differentiating such functions. 
Forward-mode 
extends the computation
of the function, step by step, with the computation of derivatives; it
can be seen as evaluating the function on dual numbers of the form $v
+ \dot{v}\varepsilon$ where $\varepsilon$ is nilpotent.
 In contrast,
reverse-mode 
propagates derivatives backwards from
each output, typically after the computation of the
function. Reverse-mode differentiation is often preferred because of
its superior efficiency for functions of type
$\real^n \rightarrow \real^m$ with $m <\!\!< n$. In particular,
systems for machine learning,
which often deal with loss functions for which $m = 1$,
generally rely on reverse-mode differentiation. We refer the reader to 
the  useful recent survey~\cite{JMLR:v18:17-468} for additional background on these two modes of differentiation; it also discusses the use of higher-order differentiation.

Applications to machine learning are our main motivation. 
Accordingly, our language is loosely inspired by systems for machine learning, and the implementation strategies that we consider are ones of current interest there. 
We also de-emphasize some concerns (e.g., numerical stability) that, at present, seem to play a larger role in scientific computing than in machine learning. As 
noted in~\cite{DBLP:journals/corr/BaydinPS16a}, the machine learning community has developed a 
mindset and a 
body of techniques distinct from those traditional in automatic differentiation.


The literature on scientific computing has addressed the correctness problem for conditionals~\cite{Beck:1994:IAD:195983.196017,Fischer2001ADR}, although not in the context of a formally defined programming language. 
In~\cite{May02} a formal proof of correctness 
for an algorithm for the automatic differentiation of straight-line sequences of Fortran assignments was given using the Coq   theorem prover \cite{Coq}. 
Closer to machine learning,~\cite{SLD17} consider a stochastic graphical formalism where the nodes are random variables, and use the Lean theorem prover~\cite{Lean} to establish the correctness of stochastic backpropagation. 
However, overall, the literature does
not seem to contain semantics and theorems for a language of the kind we consider here.
%


%
%

Our work is also related to important papers  by Ehrhard, Regnier, et al.~\cite{ehrhard2003differential},  and by Di Gianantonio and Edalat~\cite{GE13}.  
Ehrhard and Regnier introduce the differential $\lambda$-calculus; 
this
 is a simply-typed higher-order $\lambda$-calculus with a forward-mode differentiation construct which can be applied to functions of any type. 
It can be modeled using the category of convenient vector spaces and smooth functions between them (see~\cite{blute2010convenient,kriegl1997convenient}). 
Ehrhard and Regnier do not give an operational semantics but they do give rules for symbolic differentiation and it should not be too difficult to use them to give an operational semantics. However their language with its convenient vector space semantics only supports total functions. It therefore 
cannot be extended to include recursive function definitions or conditionals (even with total predicates, as continuous functions from $\R^n$ to the booleans are constant).
Di Gianantonio and Edalat  prove adequacy theorems for their language, as do we, but their work differs from ours in several respects. In particular,  their language has first-order forward-mode but no reverse-mode differentiation: our language effectively supports both, and 
at all orders. On the other hand,  their language allows recursively-defined higher-order functions and accommodates functions, such as the ReLU function, which are differentiable in only a weaker sense. 
As far as we know, no other work 
on differentiable programming languages (e.g.,~\cite{DBLP:journals/toplas/PearlmutterS08,Elliott-2018-ad-icfp,DBLP:journals/corr/abs-1803-10228,DBLP:journals/corr/abs-1806-02136,Man12})  gives operational and denotational semantics and proves adequacy theorems.
Further afield, there is a good deal of work in the categorical literature on categories equipped with differential structure, for example~\cite{blute2009cartesian,bucciarelli2010categorical}.
 
\subsubsection*{Outline}

Section~\ref{language} defines our language.
Section~\ref{opsem} gives it an operational semantics with rules for symbolically evaluating general terms to   trace terms, and for symbolically differentiating these terms. 
%
%
Sections~\ref{math} and~\ref{densem} cover the needed mathematical material and the denotational semantics.
Sections~\ref{adequacy} establishes the correspondence between operational and denotational semantics.
Section~\ref{query} concludes with discussion and some suggestions for future work. 

\section{A simple language} \label{language}














The \emph{types} $S, T, U,  \ldots$  of our language are given by the grammar:
%
%
\[\vspace{0pt}\begin{array}{lcl}T & \BEQ  & \real 
                                \BOR  \unit  \BOR T \times U 
             \end{array}\]
%
We will make use of iterated products $T_0 \times \ldots \times T_{n-1}$, defined to be $\unit$ when $n= 0$, $T_0$ when $n =1$, and, recursively,  $(T_0 \times \ldots \times T_{n-1}) \times T_n$, when $n > 1$; we write $\real^n$ for the $n$-fold iterated product of $\real$. Note that this type system includes the types of tensors (multidimensional arrays) of a given shape: the type of tensors of shape $(d_0,\ldots,d_{n-1})$ is the iterated product  $\real^{d_0} \times \ldots \times \real^{d_{n-1}}$. 
%
%
The  \emph{terms} $L,M,N,P, \ldots$ and  \emph{boolean terms} $B$  of the language are built from 
\emph{operation symbols} $\op \in \Op$ and \emph{predicate symbols} $\pred \in \Pred$. An example operation symbol could be 
$ \mathit{DProd}_{n}$ for dot product of vectors of dimension $n$  (for $n \in \mathbb{N}$); an example predicate symbol could be $\dlessthan$.

The terms are given by the following grammar, 
 where  $x$ and $f$  range over disjoint countably infinite alphabets of  ordinary  and function variables, respectively.
  We assume available a standard ordering of the function variables.
%
\[\hspace{-20pt}\begin{array}{lcl}
M & \BEQ  & x  \BOR   r \;\; (r\in \R) \BOR M + N  \BOR  \op(M)  \BOR \\ 
&&  \mylet   x:T  \mybe M \myin N \BOR\\
&& \ast \BOR \tuple{M,N}_{T,U}\BOR  \ttfst_{T,U}(M)\BOR \ttsnd_{T,U}(M)  \BOR\\\
                                              && \myif B \mythen M \myelse N \BOR \\  
&&  \myletrec f(x: T):U \mybe M \myin N  \BOR \\
&& f(M) \BOR \\
                                             && \dRD{x:T.\,N}{L}{M}\\\\

 B & \BEQ  & \pred(M) \BOR \true \BOR \false
\end{array}\]
These constructs are all fairly standard, except for $\mathtt{\plainrdstring}$, which is for reverse-mode differentiation, and which we explain  below. We treat addition separately from the operations to underline the fact that the commutative monoid it forms, together with zero, is basic for differentiation. For example,  the rules for symbolic differentiation given below make essential use of this structure, but do not need any more of the available vector space structure.


    
Note the type subscripts on  pairing and projection terms. Below, we rely on these subscripts for symbolic differentiation. In practice they could, if needed, be added when type-checking.

     
     The sets $\FV(M)$ and $\FFV(M)$ of \emph{free ordinary variables} and \emph{free function variables} of a term $M$ are understood as usual (and similarly for boolean terms).  
     As is also usual, we do not distinguish $\alpha$-equivalent terms (or boolean terms). 
     
 
 
   \mylcomment{carry this tupling through to long version, includes $\tuple{\D}$}  
   
   The useful abbreviation 
     \[\mylet  \tuple{x_0\type T_0, \ldots, x_{n-1}\type T_{n-1}} \mybe M \myin N\]
     provides an elimination construct for iterated products. When $n = 0$ this is  \[\mylet x\type \unit \mybe M \myin N\] where $x \notin \FV(N)$; when $n = 1$ it is the above let construct; otherwise, it is defined recursively by:  
  \[\mylet  \tuple{x_0\type T_0, \ldots, x_{n}\type T_{n}} \mybe M \myin N = 
         \begin{array}{l}
            \mylet z \type (T_0 \times \ldots \times T_{n}) \mybe M \myin\\
            \mylet  \tuple{x_0 \type T_0, \ldots, x_{n-1}\type T_{n-1}} \mybe \ttfst_{T_0 \times \ldots \times T_{n-1} ,T_n}(z) \myin\\
            \mylet x_{n}\type T_{n} \mybe \ttsnd_{T_0 \times \ldots \times T_{n-1} ,T_n}(z) \myin N\\
          \end{array}\]    
     (where $z$ is chosen not free in $N$).

%





  We have zero and addition only at type $\real$. At other types we proceed inductively:
\[0_{\real}  =  0
\qquad 0_{\unit}  = \ast
\qquad 0_{T \times U} = \tuple{0_{T}, 0_{U}}\]
and: 
\[\begin{array}{lcl}

M +_{\unit} N & \!=\! &   

\begin{array}{l} \mylet  x \type  \unit  \mybe M \myin \\
                         \mylet  y \type \unit \mybe N \myin \ast\\[.3em]\end{array}\\\\

M +_{T \times U} N & \!=\! &   
\begin{array}{l}
 \mylet  \tuple{x_1\type  T, x_2 \type U}  \mybe M \myin\\
 \mylet  \tuple{y_1\type  T, y_2  \type U}  \mybe N \myin\\\
    \; \tuple{x_1+_T y_1,x_2+_U y_2}
\end{array}\\                                 
\end{array}\]
  
Skating over the difference between terms and their denotations,   $\dRD{x\type T.\,N}{L}{M}$ is  the reverse-mode  derivative at $L\type T$, evaluated at $M\type U$,  of the function $f$ such that $f(x:T):U = N$. 
%
Reverse-mode differentiation includes gradients as a special case.  When $T = \real^n$ and $U = \real$, the gradient of $f$ at $L$ 
is given by:
%
\[\grad{L}{ x\type \real^n.\,  N}\;  = \; \dRD{x\type \real^n.\,N}{L}{1}\]
%


For definitions of  gradients, Jacobians, and derivatives see Section~\ref{cds} below, particularly equations~(\ref{grad-eq}),~(\ref{Jac-eq}), (\ref{fdiff-eq}), and (\ref{rdiff-eq}). More generally,  for  an introduction to real analysis including vector-valued functions of several variables  and their differentials and Jacobians, see, for example,~\cite{Tr13}. 



As in~\cite{chr12}, 
and as validated by equation~(\ref{f-from-r}) below, forward-mode differentiation can be defined using nested reverse-mode differentiation. 
We can set:


%
\[\fRD{x\type T.\,N}{U,L}{M} \; = \;
\dRD{y\type U.\, \dRD{x\type T.\,N}{L}{y}}{0_U}{M} \]
 So our language effectively also has forward-mode differentiation.



Function definitions can be recursive. Indeed a function can even  be defined in terms of 
its own derivative: in a  recursive function
definition $\myletrecnospacebefore f(x\type T)\type U \mybe M \myin N$, the  language allows
occurrences of $f$  
within the term $N'$ in a sub-term $\dRD{y:T'.\,N'}{L'}{M'}$  of $M$. 
This generality may be useful
---examples 
have arisen in the context of Autograd~\cite{autograd}\footnote{See \url{ https://dougalmaclaurin.com/talk.pdf} }. Pleasantly, both our operational and denotational 
semantics accommodate it without special complications.
When we define a function without recursion, we may abbreviate $\myletrecop$ to $\myletop$.





Turning to typing, operation and predicate symbols have given  arities, 
 written  $\op\type  T \rightarrow U$ and $\pred\type  T$;
 we write $\Op_{T,U}$ for the set of operation symbols of arity $T \rightarrow U$. For example, we would have 
$ \mathit{DProd_n\type \real^{n} \times \real^{n} \rightarrow \real} $ 
and $\dlessthan\type \real^2$.
Figure~\ref{ty-rules} gives typing rules for sequents 
\[\Phi \mid \G\vdash M\type T \qquad \Phi \mid \G\vdash B\] 
where \emph{(type) environments} $\G$ have the form \[x_0\type T_0, \ldots, x_{n-1}\type T_{n-1}\] ($x_i$ all different) 
and where \emph{function (type) environments} $\Phi$ have the form 
\[f_0\type T_0 \!\rightarrow \!U_0, \ldots, f_{n-1}\type T_{n-1} \!\rightarrow\! U_{n-1}\] ($f_i$ all different).  
 We adopt the usual overwriting notations $\G[\G']$ and $\Phi[\Phi']$ for type environments.

\newcommand{\ms}[1]{\mbox{\scriptsize $#1$}}


\begin{figure}[h]

\[\hspace{-0pt}\begin{array}{c}  \ms{ \Phi \mid \G\vdash x\type T \quad  (x\type T \in \G) \qquad  \Phi \mid \G\vdash r\type\real \quad (r \in \R)}\\\\
 
  \frac{\Phi \mid \G\vdash M\,:\, \real \qquad \Phi \mid \G\vdash N\,:\, \real }{\Phi \mid \G\vdash  M + N\,:\, \real }\\\\
  
   \frac{\Phi \mid \G\vdash M\,:\,T }{\Phi \mid \G\vdash \op(M)\,:\, U} \quad  {\ms{(\op\,:\, T \rightarrow U)}}\\\\
   
   \frac{\Phi \mid \G\vdash M\,:\, T   \qquad  \Phi \mid \G[x \,:\, T] \vdash  N\,:\,U}{\Phi \mid \G \vdash \mylet  x \,:\,T \mybe M \myin N\,:\,U}\\\\
   
  { \ms{\Phi \mid \G\vdash \ast \,:\, \unit}} \qquad \frac{\Phi \mid \G\vdash M \,:\, T \qquad \Phi \mid \G\vdash N\,:\, U}{\Phi \mid \G\vdash  \tuple{M,N}_{T,U}\,:\, T \times U }\\\\
   
    \frac{\Phi \mid \G\vdash M\,:\, T \times U}{\Phi \mid \G\vdash \ttfst_{T,U}(M)\,:\, T} \qquad \frac{\Phi \mid \G\vdash M\,:\, T \times U}{\Phi \mid \G\vdash \ttsnd_{T,U}(M)\,:\, U}\\\\
    
      \frac{\Phi \mid \G\vdash B \qquad \Phi \mid \G\vdash M\,:\,T \qquad  \Phi \mid \G\vdash N\,:\,T }{\Phi \mid \G\vdash \myif B \mythen M \myelse N\,:\, T}\\\\
      
        \frac{\Phi[f\,:\, T \rightarrow U] \mid x\,:\,T \vdash M\,:\,U \qquad   \Phi[f\,:\, T \rightarrow U] \mid \G \vdash N\,:\,S }{\Phi \mid \G\vdash \myletrec f(x \,:\, T)\,:\,U \mybe M \myin N\,:\, S} \\\\
        
         \frac{\Phi \mid \G\vdash M\,:\,T}{\Phi \mid \G\vdash  f(M)\,:\, U} \quad  {\ms{(f\,:\, T\rightarrow U \in \Phi)}}\\\\
         
          \frac{ \Phi \mid \G[x\,:\,T] \vdash N\,:\,U \qquad  \Phi \mid \G\vdash L\,:\,T \qquad \;\; \Phi \mid \G\vdash M\,:\,U}{\Phi \mid \G\vdash  \dRD{x \,:\,  T.\,N}{L}{M}\,:\,T}\\\\
 
 { \ms{\Phi \mid \G \vdash \true}} \qquad \ms{{\Phi \mid \G \vdash \false}} \qquad \frac{\Phi \mid \G \vdash M\,:\,T }{\Phi \mid \G \vdash  \pred(M)} \quad {\ms{(\pred \,:\, T)}}
  \end{array}\]
  
\caption{Typing rules}
 \label{ty-rules}
   \end{figure}
The typing rule for function definitions forbids any global variable occurrences (i.e., free variables in function definitions). This restriction involves no loss in expressiveness: as in lambda lifting,  one can just add any global variables to a function's parameters, and then apply the function to the global variables wherever it is called.  The restriction enabled us to prove part (2) of Theorem~\ref{op-corr} (below), but we conjecture it is not needed.

Our various abbreviations have natural admissible typing rules: %
\[   \frac{\Phi \mid \G\vdash M \type T_0 \times \ldots, \times T_{n-1}  \quad \;\; \Phi \mid \G[x_0\type T_0, \ldots, x_{n-1}\type T_{n-1}] \vdash  N:U}{\Phi \mid \G \vdash  
\mylet  \tuple{x_0\type T_0, \ldots, x_{n-1}\type T_{n-1}} \mybe M \myin N \type U }
\]

\vspace{4pt}


%
\[\frac{\Phi \mid \G\vdash N\type \real^n \qquad  \Phi \mid \G[x\type \real^n]\vdash M\type\real}{\Phi \mid \G\vdash \grad{N}{ x\type \real^n.\,  M} \type \real^n }\]

\vspace{4pt}

\[ \frac{ \Phi \mid \G[x \type T] \vdash N \type U \quad \;\; \Phi \mid \G\vdash L \type T \quad \;\; \Phi \mid \G\vdash M \type T}{\Phi \mid \G\vdash  \fRD{x \type  T.\,N}{U,L}{M} \type U}\]

\vspace{4pt}

\[\Phi \mid \G\vdash 0_T\type T
\qquad
\frac{\Phi \mid \G\vdash M\type T \qquad \Phi \mid \G\vdash N\type T }{\Phi \mid \G\vdash M +_T N \type T }\]

We may write $\G \vdash M\type T$ (or $\vdash M\type T$) instead of $\Phi \mid \G \vdash M\type T$ if $M$  has no free ordinary (or function) variables
(and similarly for boolean terms). Typing is unique: for any $\G$, $\Phi$, and $M$ 
  there is at most one type $T$ such that $\Phi \mid \G\vdash M:T$ holds.



 %

\POPLomit{ 
 \begin{proposition} \label{UT}
 \hspace{2em}
 \begin{enumerate}
 \item (Unique typing) For any $\G$, $\Phi$, and $M$ 
  there is at most one type $T$ such that $\Phi \mid \G\vdash M:T$ holds.
 \item For any $\G$, $\G'$, $\Phi$, $\Phi'$, $M$, and $T$, if $\G,\G'$ agree on the ordinary variables of $\FV(M)$ and  $\Phi,\Phi'$ 
agree on $\FFV(M)$  then $\Phi \mid \G\vdash M:T$ holds iff $\Phi' \mid \G'\vdash M:T$ does.
 \end{enumerate}
\end{proposition}
\begin{proof}

 Both cases are straightforward inductions on judgement proof size.
\end{proof}}

 

 

             
%
%

%



As an example, we use our language to program a miniature version of
an algorithm for training a machine learning model by gradient
descent, loosely based on \cite[Algorithm~8.1]{goodfellow2016book}. In
such training, one often starts with an untrained model, which is a
function from inputs (for example, images) and parameter values to
output ``predictions'' (for example, image labels). Relying on a dataset of
input/output pairs, one then picks values of the parameters by
gradient descent, as indicated in the Introduction.  
In our miniature version, we treat inputs, parameter values, and outputs as reals, 
and we assume that the training data consists of
only one fixed input/output pair $(a,b)$. 
We also assume that we have
real constants $w_0$ (for the initial value for gradient
descent), $\rate$ (for the learning rate, which is fixed) and $\maxLoss$ (for the
desired maximum loss on the dataset), 
and the infix predicate symbol $\dlessthan$.
We can then define the trained model from 
the untrained model and a loss function as follows:
\[
\begin{array}{l}
\mylet  \myTrainedModel(x \type \real) \type \real \mybe\\
\begin{array}{l}
\mylet \myLossNow(w \type \real) \type \real \mybe \myLoss(\tuple{b,\myParameterizedModel(\tuple{a,w})}) \myin\\
\mylet \gradLossNow(w \type \real) \type \real \mybe \grad{w}{w^{\prime} \type \real. \myLossNow(w^{\prime})} \myin\\
\myletrecnospacebefore \descend(w \type \real) \type \real \mybe\\  
\qquad   \myif 
\myLossNow(w) \dlessthan c 
 \mythen w 
   \myelse \descend(w - \rate * \gradLossNow(w))
   \myin\\
   \myParameterizedModel(\tuple{x,\descend(w_0)})
\end{array}\\
\myinnospacebefore \ldots
\end{array}
\]


The example above is typical of what can be expressed in our language, and many variants
of machine learning techniques that rely on gradient descent (e.g., as in  
\cite{goodfellow2016book}, and commonly used in systems like TensorFlow) are in scope as well. For
instance, there is no difficulty in expressing optimization with momentum, or differentially
private stochastic gradient descent (e.g.,~\cite{SCS13,ACG16}). 
Probabilistic choice may be treated via random number generators, as is done in practice.
Architectures that rely on convolutions or RNN cells can be expressed, even conveniently,   with a suitable choice of primitives.


 \section{Operational semantics} \label{opsem}

\newcommand{\myelsewithoutspaces}{\mathtt{else}}


 We give a big-step operational semantics,  
 specified with Felleisen and Friedman's method  using evaluation contexts and redexes \cite{FF86}. Other styles of operational semantics accommodating differentiation are surely also possible. 



Terms and boolean terms are \emph{(ordinarily) evaluated} to closed \emph{values} and (necessarily) closed \emph{boolean values}. 
The most original aspect of our operational semantics concerns the evaluation of differential terms; this is based  on the trace-based approach outlined in the Introduction, and uses a second mode of evaluation: \emph{symbolic evaluation}.

 The core idea is that to evaluate a differential term 
\[\dRD{x\type T.\,N}{L}{M}\]
 one first evaluates $L$ and $M$, and then performs  differentiation before evaluating further. There are  two  differentiation stages. First,  using the  closed value $V$ of $L$ for the \emph{differentiation variable} $x$, $N$  is \emph{symbolically evaluated} to a  trace term $C$, thereby removing all control constructs from $N$, but possibly keeping the variable $x$ free in $C$, as the derivative may well depend on it. For example, when $N$ is $\myif x \dlessthan 0 \mythen 0 \myelse x$, the value $V$ allows the guard of the conditional to be evaluated, but the occurrence
 of $x$ in the $\myelsewithoutspaces$ branch is not replaced by $V$. Second, $C$ is \emph{symbolically differentiated} at $V$ with respect to $x$.


 However, this idea is not enough by itself as the differential term may  occur inside yet another differential term. One therefore also needs to be able to symbolically evaluate the differential term. That is done much as in the case of ordinary evaluation, but now symbolically evaluating redexes in  $L$ and $M$ until one is left with the problem of symbolically evaluating a term of the form
 \[\dRD{x\type T.\,N}{V}{W}\]
 where $V$ and $W$ are values that may contain free variables. One then proceeds as above, symbolically evaluating $N$ (now using the closed value $V'$ of $V$) and then performing symbolic differentiation. As there is  some duplication between these two symbolic and ordinary evaluation processes,  our rule  for ordinarily evaluating  a differential term is designed, when executed,  to first symbolically evaluate the term, and then ordinarily evaluate the resulting trace term.
 
 The need to keep track of differentiation variables and their values for symbolic evaluation leads us to use  value environments for ordinary variables. It  is convenient to also use them for ordinary evaluation and to use function environments for function variables for both modes of evaluation.  

 Values $V,W, X, \ldots$ are terms  given by the grammar:
\[\hspace*{-1.2cm} \begin{array}{lcl}
V & \BEQ  & x \BOR  r \; (r  \in \R)  \BOR \ast \BOR  \tuple{V,W}_{T,U} 
 \end{array}\]
 %
 %
Note that, as indicated above, values may have free variables for the purposes of differentiation. Boolean values   $V_{\bool}$ are boolean terms given by: 
\[V_{\bool} \BEQ \true \BOR  \false\]
%
%
%
%
%
Closed values have unique  types $\vdash V\type T_V$;  the set of closed values of type $T$ is $\Val_T$;
and the set of boolean values is $\Val_{\bool}$.
 We assume  available operation  and predicate symbol \emph{evaluation} 
 functions 
 \[\ev: \Op_{T,U} \times \Val_T \pr \Val_U
 \qquad \quad 
 %
 %
 \bev: \Pred_{T} \times \Val_T \pr \Val_{\bool}\]
%
%
%
We also assume that for every operator $\op \type T \rightarrow U$ 
there is an operator $\rop \type T \times U \rightarrow T$.
The idea is that  $\rop(\tuple{L,M})$ is the reverse-mode  derivative of $\op$ at $L$ evaluated at $M$. We write $M.\rop(L)$ for $\rop(\tuple{L,M})$.
For example, for $\mathit{DProd_2}$ we would have:
\[\ev( \mathit{DProd}_2, \tuple{\tuple{a,b},\tuple{a',b'}}) =  aa' + bb'\]
and
\[\ev(( \mathit{DProd_2)_r}, \tuple{\tuple{\tuple{a,b},\tuple{a',b'}} ,c})=  \tuple{\tuple{ca',cb'},\tuple{ca,cb}}\]



 We  next define \emph{(value) environments} $\rho$, \emph{function environments} $\varphi$, and 
 \emph{(recursive function) closures} $\Cl$, the last two mutually recursively:

 \begin{itemize}
 \item[-] Value environments are  finite functions \[\rho = \{x_0 \mapsto V_0, \ldots, x_{n-1} \mapsto V_{n-1}\}\] from ordinary variables to closed  values.
 \item[-] Every finite function \[\varphi = \{f_0 \mapsto \Cl_0, \ldots, f_{n-1} \mapsto \Cl_{n-1}\}\] from function variables to closures is a  function environment.
  \item[-] If $\FV(M)  \subseteq   \{x\}$ and $\FFV(M)\backslash \{f\}    \subseteq     \Dom(\varphi)$
  then $\tuple{\varphi, f,x,T,U,M} $ is a closure,  written  $\fclo{\varphi}{f(x:T):  U.\, M}$.
   \end{itemize}
For any  $V$ and  $\rho$ with $\FV(V)  \hspace{-2pt} \subseteq  \hspace{-2pt}  \Dom(\rho)$, $\rho(V)$ is the closed value obtained by substituting $\rho(x)$ for all free occurrences of $x$ in $V$. 
 
\emph{Trace terms} $C,D,\ldots$,  are defined as follows:
 \[\begin{array}{lcl}
C \hspace{-2pt} & \BEQ \hspace{-2pt} & x  \BOR   r \;\; (r\in \R) \BOR C + D  \BOR  \op(C) \BOR \\
                                               &&  \mylet   x \type T  \mybe C \myin D \BOR\\
                                             && \ast \BOR \tuple{C,D}_{T,U}\BOR   \ttfst_{T,U}(C)\BOR \ttsnd_{T,U}(C) 
                                             \end{array}\]
 They are the  terms with no conditionals, function definitions or applications, or   differentiations.
 

We will define two \emph{ordinary} evaluation relations, and one \emph{symbolic} one:

 
 \begin{itemize}

 \item  For all $\varphi$ and $\rho$ we define  evaluation relations  between terms and closed values 
and  between boolean terms and closed boolean values 
 via rules establishing sequents of the forms:
\vspace{-4pt}
\[\vspace{-4pt}\varphi \mid \rho  \vdash M \impe V \qquad\quad  \varphi \mid \rho  \vdash B \impe V_{\bool}\]
%

%
%
%
 \item  For all $\varphi$ and $\rho$ we define a symbolic evaluation relation  between terms and trace terms via rules establishing sequents of the form:
\vspace{-4pt}
\[\vspace{-4pt}\varphi \mid \rho  \vdash M \imps C\]
%

 \end{itemize}

%

 %
%
 \emph{Evaluation contexts (boolean evaluation contexts)}, ranged over by $E$ (resp.\ $E_{\bool}$ ), 
are terms with a unique \emph{hole} $\hole$ : 
 \[\begin{array}{ccl}
E \hspace{-5pt} &  \BEQ  \hspace{-5pt} & \hole \BOR E + N  \BOR  V + E \BOR \op(E) \BOR \\

                                               &&  \mylet   x \type T  \mybe E \myin N \BOR\\
                                             &&  \tuple{E,N}_{T,U} \BOR \tuple{V,E}_{T,U} \BOR 
                                              \ttfst_{T,U}(E)\BOR \ttsnd_{T,U}(E)  \BOR\\ 
                                            && \myif E_{\bool} \mythen M \myelse N \BOR \\
                                           &&  
                                           f(E) \BOR \\
                                             && \dRD{x:T.\,N}{E}{M} \BOR \dRD{x:T.\,N}{V}{E}\\\\
                                             
 E_{\bool}\hspace{-5pt}  & \BEQ \hspace{-5pt} &  \pred(E)
 \end{array}\]
We write $E[M]$ for the term obtained by replacing the hole $\hole$ in $E$ by the term $M$ and $E_{bool}[M]$ similarly;
a context $E$ is \emph{trivial} if it is $[\;]$; and
 $\FV$ and $\FFV$ are extended to contexts. 
We have $\FV(E[M]) = \FV(E) \cup \FV(M)$ and $\FFV(E[M]) = \FFV(E) \cup \FFV(M)$ and similarly for boolean contexts.
 
 
 
 \emph{Redexes}, ranged over by $R$, and  \emph{boolean redexes}, ranged over by $R_{bool}$, 
 are given by: 
 
  \[\hspace{-5pt}\begin{array}{lcl}
R  \hspace{-10pt}  &\hspace{-5pt} \BEQ  &   V + W \BOR \op(V)\BOR\\

                                               &&  \mylet   x\type T  \mybe V \myin N \BOR\\
                                             &&   \ttfst_{T,U}(V)\BOR \ttsnd_{T,U}(V)  \BOR\\
                                           && \myif V_{\bool} \mythen M \myelse N \BOR \\
&&\!\!   \myletrec f(x\type  T)\type U \mybe M \myin N \BOR   f(V) \BOR \\
                                             && \dRD{x\type T.\,N}{V}{W} \\\\
R_{\bool} &\hspace{-5pt} \BEQ  & \pred(V)
 \end{array}\]
 Note that boolean expressions are useful here in that they enable separate conditional and predicate redexes, and so  evaluating predicates and making choices are distinct in the operational semantics.
 

The next lemma is the basis of a division into cases that supports  operational semantics using evaluation contexts 
in the style of Felleisen and Friedman.
 \begin{lemma}[Evaluation context analysis] \label{context-form-anal}
   \hspace{2em}
  \begin{enumerate}
  \item  
  Every  term $M$, other than a value, has exactly one of the following two forms:
 \begin{itemize}
  \item  $E[R]$ for a unique  evaluation context 
  and redex, or  
 \item  $E[R_{\bool}]$  for a unique, and non-trivial,  evaluation context  and boolean redex. 
\end{itemize}
\item
  Every  boolean term $B$, other than a boolean value, has exactly one of the following two forms:
 \begin{itemize}
  \item  $E_{bool}[R]$ for a unique, and non-trivial,  boolean evaluation context 
  and redex, or  
 \item  $E_{bool}[R_{\bool}]$  for a unique boolean evaluation context  and boolean redex. 
\end{itemize}
 \end{enumerate}
\end{lemma}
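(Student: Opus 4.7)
The plan is to prove both parts simultaneously by structural induction on the syntax of $M$ and $B$, case-splitting on the outermost constructor. Since the grammars of evaluation contexts $E$ and boolean evaluation contexts $E_{\bool}$ were crafted so that at each constructor exactly one subterm position is ``active'' (the leftmost non-value argument, following the Felleisen–Friedman discipline), existence and uniqueness will follow mechanically from this syntactic discipline.

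For the existence half of part (1), I would argue case by case on $M$. If $M$ is itself one of the redex shapes (e.g.\ $V+W$, $\op(V)$, $\mylet x\type T\mybe V\myin N$, $\ttfst_{T,U}(V)$, $\ttsnd_{T,U}(V)$, $\myif V_{\bool}\mythen M_1\myelse M_2$, $\myletrec f(x\type T)\type U\mybe M'\myin N$, $f(V)$, or $\dRD{x\type T.\,N}{V}{W}$), take $E=\hole$ and $R=M$. Otherwise $M$ has a non-value immediate subterm; pick the leftmost such (respecting the contextual grammar, e.g.\ in $M_1+M_2$, first try $M_1$, else $M_2$), and apply the induction hypothesis to that subterm to obtain an inner decomposition $E'[R']$ or $E'[R'_{\bool}]$, then wrap $E'$ with the appropriate outer context former (e.g.\ $E'+M_2$ or $V+E'$, $\op(E')$, etc.). The case $\myif B\mythen M_1\myelse M_2$ with $B$ not a boolean value uses part (2) of the induction hypothesis applied to $B$, wrapping with the context former $\myif[\;]\mythen M_1\myelse M_2$ (noting this produces a non-trivial boolean-redex decomposition when part (2) yields one, since $\pred(E)$ is the sole outer boolean-context shape). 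For part (2), the only non-value shape of a boolean term is $\pred(M)$, so one decomposes $M$ via part (1) (or notes $M$ is a value $V$, giving the redex $\pred(V)$ in trivial context).

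For uniqueness, I would show that no term $M$ can be decomposed simultaneously as $E_1[R_1]$ and as $E_2[R_2]$ with $E_1 \neq E_2$, and likewise rule out one decomposition being a term-redex form and the other a boolean-redex form. This again proceeds by induction on $M$: at each constructor, inspection of the context grammars shows that the hole must occupy a uniquely determined position (the leftmost non-value subterm), and the values act as a syntactic ``barrier'' preventing the hole from descending further left. In particular, the restriction in the $V+E$ and $\tuple{V,E}_{T,U}$ forms ensures that a left-hand value cannot itself harbour a context (values contain no redexes), so the active position is forced. The non-triviality claim about $E[R_{\bool}]$ in part (1) follows because any $R_{\bool}=\pred(V)$ is of type $\bool$ and so cannot itself be a term $M$ of some type $T$ without being wrapped by at least one outer context former.

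The main obstacle, and the thing requiring the most care, is the mutual dependency between the two parts via the constructors that cross the term/boolean-term boundary, namely $\pred(\cdot)$ (boolean-term $\to$ term subterm) and $\myif\cdot\mythen\cdot\myelse\cdot$ (term $\to$ boolean-term subterm). One must be scrupulous that the decomposition of a conditional whose guard reduces via a \emph{term} redex genuinely yields a non-trivial boolean context $\pred(E')$ nested inside the conditional context, and symmetrically that predicate redexes arising deep inside a conditional's guard are tracked as the boolean-redex case of part (1). A clean way to manage this is to maintain, throughout the induction, a tight invariant linking the type of the redex (ordinary vs.\ boolean) to the top-level syntactic category (term vs.\ boolean term) of the enclosing context, so that the disjunction of the two cases in each part is automatic and exclusive. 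The rest of the argument is a routine but somewhat tedious constructor-by-constructor verification.
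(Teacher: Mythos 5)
Your proposal is correct and follows essentially the same route as the paper, whose own (omitted) proof is exactly a straightforward simultaneous structural induction on terms and boolean terms; your constructor-by-constructor elaboration, with the leftmost-non-value discipline and the mutual recursion through $\pred(\cdot)$ and the conditional guard, is the intended argument.
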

  \POPLomit{ \begin{proof} The proof 
 is a straightforward simultaneous structural induction.
 \end{proof}}

The next lemma is useful to  track types when proving theorems about the operational semantics.
  \begin{lemma}[Evaluation context polymorphism]\label{ctx-type}
    Suppose that $\Phi \mid \G \vdash E[M] \type T$. Then, for some type $U$ we have $\Phi \mid \G \vdash M \type U$ and, whenever $\Phi \mid \G \vdash N \type U$, we have $\Phi \mid \G \vdash E[N] \type T$. 
    
    Analogous results hold for typings of any of the forms $\Phi \mid \G \vdash E[B]:T$ or  $\Phi \mid \G \vdash E_{\bool}[M]$ or $\Phi \mid \G \vdash E_{\bool}[B]$.
  \POPLomit{
  \hspace{2em}
  \begin{enumerate}
  \item  
  Suppose that $\Phi \mid \G \vdash E[M] \vdash T$. Then, for some type $U$ we have $\Phi \mid \G \vdash M \type U$ and, whenever $\Phi \mid \G \vdash N \type U$, we have $\Phi \mid \G \vdash E[N] \type T$.
  \item  Suppose that $\Phi \mid \G \vdash E[B]:T$. Then we have $\Phi \mid \G \vdash B$, and, whenever $\Phi \mid \G \vdash B'$, we have $\Phi 
  \mid \G \vdash E[B']:T$.
    
  \item Suppose that $\Phi \mid \G \vdash E_{\bool}[M]$. Then, for some type $U$ we have $\Phi \mid \G \vdash M \type U$ and, whenever $\Phi \mid \G \vdash N \type U$, we have $\Phi \mid \G \vdash E_{\bool}[N]$.
  
  \item  Suppose that $\Phi \mid \G \vdash E_{\bool}[B]$. Then we have  $\Phi \mid \G \vdash B$, and, whenever $\Phi \mid \G \vdash B'$, we have $
  \Phi \mid \G \vdash E_{\bool}[B']$.
  \end{enumerate}
  }
  \end{lemma}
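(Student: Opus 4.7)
The plan is to proceed by simultaneous structural induction on the evaluation context, handling all four claims together since the clauses for $E$ and for $E_{\bool}$ refer to each other (via $\myif E_{\bool} \mythen M \myelse N$ and $\pred(E)$). The base case is trivial: if $E = [\;]$, then $E[M] = M$, so one takes $U = T$, and $E[N] = N$ is typed at $T$ by hypothesis; the $E_{\bool}$-case has no trivial form, so there is nothing to do there.

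For the inductive step, in each non-trivial form of $E$ (or $E_{\bool}$) I would invert the root typing rule of $\Phi \mid \G \vdash E[M] \type T$. By unique typing (noted at the end of Section~\ref{language}), this inversion yields a well-defined typing $\Phi \mid \G' \vdash E'[M] \type T'$ (possibly in an extended environment $\G' = \G[x\type T']$, as for the $\mylet$-case) for the sub-context $E'$ containing the hole, together with unchanged typing premises for the remaining subterms of $E$. The induction hypothesis applied to $E'$ provides a type $U$ with $\Phi \mid \G \vdash M \type U$ (noting that the environment extension does not affect the type of $M$, since the bound variable is fresh with respect to the hole), and gives $\Phi \mid \G' \vdash E'[N] \type T'$ whenever $\Phi \mid \G \vdash N \type U$. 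Reapplying the same root typing rule with the unchanged side-premises then yields $\Phi \mid \G \vdash E[N] \type T$, as required.

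Two representative cases illustrate the pattern. For $E = \mylet x\type T' \mybe E' \myin N'$, inversion gives $\Phi \mid \G \vdash E'[M] \type T'$ and $\Phi \mid \G[x\type T'] \vdash N' \type T$; the induction hypothesis supplies $U$ and the conclusion follows by reapplying the $\mylet$-rule. For $E = \dRD{x\type T'.\,N'}{V}{E'}$, inversion of the differentiation rule gives $\Phi \mid \G \vdash E'[M] \type T'$ (together with the unchanged typings of $V$ and $N'$), and the induction step is analogous. The cross-dependency between the two grammars is handled in the $\myif E_{\bool} \mythen M' \myelse N'$ case of $E$, where inversion of the conditional rule produces $\Phi \mid \G \vdash E_{\bool}[M]$ and we invoke the $E_{\bool}$-clause of the induction hypothesis; symmetrically, the sole non-trivial boolean context $E_{\bool} = \pred(E)$ reduces to the $E$-clause.

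The proof is essentially bookkeeping — each of the roughly dozen non-trivial context forms is a mechanical inversion/reapplication — and I do not anticipate any real obstacle. The only subtlety worth being careful about is in let-contexts (and the analogous cases if any arise), where the bound variable extends the environment for the non-hole premise but not for the hole; unique typing ensures the intermediate type $U$ is unambiguous, and $\alpha$-renaming keeps the bound variable disjoint from anything relevant to $M$ and $N$.
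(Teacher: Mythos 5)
Your proposal is correct and follows essentially the same route as the paper, which proves the four statements by simultaneous structural induction on the evaluation context (pairing the term-hole and boolean-hole cases), using unique typing and inversion of the rules. One small simplification you could note: in this grammar the hole of an evaluation context never sits under a binder (e.g.\ in $\mylet x\type T' \mybe E' \myin N'$ the hole is in the bound-term position, typed in the unextended $\G$), so the worry about extended environments at the hole does not actually arise.
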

    \POPLomit{ \begin{proof}
  Parts (1) and (3) are proved by simultaneous structural induction, as are parts (2) and (4). Part (2) of Proposition~\ref{UT} is used in the proof 
  of parts (1) and~(3).
   \end{proof}}
 
   By the uniqueness of types, the types whose existence is claimed in the above lemma are unique.

\begin{figure}[h] 
\vspace{-15pt}
\begin{multicols}{2}

\[\begin{array}{l}

\ms{ \varphi \mid \rho  \vdash  V \impe \rho(V)}\\\\

   \frac{ \varphi \mid \rho \vdash V \impe r \quad \varphi \mid \rho  \vdash W \impe s}{\varphi \mid \rho  \vdash V + W \impe t} \\ \qquad \ms{(\mbox{where } t = r + s)}\\\\

       \frac{\varphi \mid \rho  \vdash V \impe V'}{\varphi \mid \rho  \vdash  \op(V) \impe W } 
       \\ \qquad \ms{(\mbox{where } \ev(\op,V')  \simeq  W)}\\\\

   \frac{\varphi \mid \rho  \vdash V \impe V' \quad \varphi \mid \rho[V'/x]  \vdash N \impe W}
   {\varphi \mid \rho  \vdash   \mylet   x \type T  \mybe V \myin N \impe W}
   \\\\

     \frac{\varphi \mid \rho  \vdash V \impe \tuple{W_1,W_2}_{T,U}}
             {\varphi \mid \rho  \vdash \ttfst_{T,U}(V) \impe W_1}\\\\
             
             \frac{\varphi \mid \rho  \vdash V \impe \tuple{W_1,W_2}_{T,U}}
             {\varphi \mid \rho  \vdash \ttsnd_{T,U}(V) \impe W_2}\\\\

    \end{array}\]

\columnbreak 
 \[\begin{array}{l}
 \frac{\varphi \mid \rho  \vdash M \impe V}{\varphi \mid \rho  \vdash \myif \true \mythen M \myelse N \impe V}\\\\
 
 \frac{\varphi \mid \rho  \vdash N \impe V}{\varphi \mid \rho  \vdash \myif \false \mythen M \myelse N \impe V}\\\\

   \frac{\varphi[\fclo{\varphi}{f(x:T):  U.\, M}/f] \mid \rho \vdash N \impe V }
                     {\varphi \mid \rho \vdash \myletrec f(x: T):U \mybe M \myin N \impe  V}
            \\\\

           \frac{\varphi \mid \rho \vdash V \impe V' \quad
            \varphi'[\varphi(f)/f]\mid \{x \mapsto V'\} \vdash M \impe W}
                     {\varphi \mid \rho \vdash f(V) \impe  W}\\
          \quad  \ms{(\mbox{where  $\varphi(f) = \fclo{\varphi'}{f(x:T):  U.\, M}$})}\\\\

 \frac{
\varphi \mid \rho  \vdash \dRD{x:T.\,N}{ V}{W} \imps C \quad \varphi \mid \rho  \vdash C \impe X}{\varphi \mid \rho  \vdash \dRD{x:T.\,N}{ V }{W} \impe X}\\\\

       \frac{\varphi \mid \rho  \vdash V \impe V'}{\varphi \mid \rho  \vdash  \pred(V) \impe W_{\bool} } 
       \\ \qquad {\ms{(\mbox{where } \bev(\pred,V')  \simeq  W_{\bool})}}
       \end{array}\]
       
    \end{multicols}
       
 \caption{Ordinary operational semantics: values and redexes}
 \label{op-sem-vr}
   \end{figure}    
   
   \begin{figure}[h]

\vspace{11.5pt}
   \[\hspace{-3pt}\begin{array}{l}

               \frac{\varphi \mid \rho \vdash R  \impe V \quad
                       \varphi \mid \rho[V/x] \vdash E[x] \impe W}
                      {\varphi \mid \rho \vdash E[R] \impe W}                                       
                                      \,\quad \ms{(\mbox{$E$ nontrivial, $x \notin  \Dom(\rho)$})}
              \\\\

         \frac{\varphi \mid \rho \vdash R_{\bool}  \impe V_{\bool} \quad
                       \varphi \mid \rho \vdash E[V_{\bool}] \impe W}
                      {\varphi \mid \rho \vdash E[R_{\bool}] \impe W} 
              \\\\

                \frac{\varphi \mid \rho \vdash R  \impe V \quad
                       \varphi \mid \rho[V/x] \vdash E_{\bool}[x] \impe W_{\bool}}
                      {\varphi \mid \rho \vdash E_{\bool}[R] \impe W_{\bool}} 
                                                            \qquad \ms{(x \notin  \Dom(\rho))}
              \\\\

         \frac{\varphi \mid \rho \vdash R_{\bool}  \impe V_{\bool} \quad
                       \varphi \mid \rho \vdash E_{\bool}[V_{\bool}] \impe W_{\bool}}
                      {\varphi \mid \rho \vdash E_{\bool}[R_{\bool}] \impe W_{\bool}}

   \end{array}\]     

                \caption{Ordinary operational semantics: contexts}
                              \label{op-sem-c}
   \end{figure}      

\begin{figure}[h]

\vspace{-15pt}
\begin{multicols}{2}
\[\begin{array}{l}

  \ms {\varphi \mid \rho  \vdash  V \imps V}\\\\

 \ms{ \varphi \mid \rho  \vdash V + W  \imps V + W}\\\\

     \ms{\varphi \mid \rho  \vdash \op(V) \imps \op(V)}\\\\

        \frac{\varphi \mid \rho  \vdash  V \impe V' \quad \varphi \mid \rho[V'/x]  \vdash N \imps C}
             { \varphi \mid \rho  \vdash \mylet   x :T  \mybe V \myin N  \imps   \mylet   x : T  \mybe V \myin C }
    \\\\

\ms{  \varphi \mid \rho  \vdash \ttfst_{T,U}(V) \imps \ttfst_{T,U}(V)}\\\\

\ms{  \varphi \mid \rho  \vdash \ttsnd_{T,U}(V) \imps \ttsnd_{T,U}(V)}\\\\ 
 
       \end{array}\]
 
 \columnbreak
 
 \[\begin{array}{l}

 \frac{\varphi \mid \rho  \vdash M \imps C}{\varphi \mid \rho  \vdash \myif \true \mythen M \myelse N \imps C}\\\\
 
 \frac{\varphi \mid \rho  \vdash N \imps C}{\varphi \mid \rho  \vdash \myif \false \mythen M \myelse N \imps C}\\\\

          \frac{\varphi[\fclo{\varphi}{f(x:T):  U.\, M}/f] \mid \rho \vdash N \imps C }
                     {\varphi \mid \rho \vdash \myletrec f(x: T):U \mybe M \myin N \imps  C}
            \\\\

             \frac{\varphi \mid \rho \vdash V \impe V' \quad
                       \varphi'[\varphi(f)/f]\mid \{x  \mapsto V'\} \vdash M \imps C}
                     {\varphi \mid \rho \vdash f(V) \imps   \mylet x \type T \mybe V \myin C}
            \\ \quad \ms{ (\mbox{where  } \varphi(f) = \fclo{\varphi'}{f(x:T):  U.\, M}})\\\\

                  \frac{\varphi \mid \rho \vdash  V \impe V' \quad
                          \varphi \mid \rho[V'/x] \vdash  N \imps C}
                       {\varphi \mid \rho \vdash  \dRD{x:T.\,N}{V}{W} \imps  \newcalRprop{x :  T }{C}{V}{W}{}} 
              \\\\
                                 \end{array}\]  
                                 
                                   \end{multicols}  
                                   \vspace{-12.5pt}   
                \caption{Symbolic operational semantics: values and redexes} 
                \label{op-sem-s-vr}
                    \end{figure}

\begin{figure}[h]

\[\begin{array}{l}
             
             \frac{\varphi \mid \rho \vdash R \imps C \quad
                       \varphi \mid \rho \vdash C \impe V \quad
                       \varphi \mid \rho[V/x] \vdash E[x] \imps D}
                      {\varphi \mid \rho \vdash E[R] \imps \mylet x:T_V \mybe C \myin D } 
                            \quad \ms{(\mbox{$E$ nontrivial and $x \notin  \Dom(\rho)$})}\\\\

             \frac{\varphi \mid \rho \vdash R_{\bool} \impe V_{\bool} \quad
                       \varphi \mid \rho \vdash E[V_{\bool}] \imps C}
                      {\varphi \mid \rho \vdash E[R_{\bool}] \imps C }

                      \end{array}\]
                                      
       \caption{Symbolic operational semantics: contexts}
\label{op-sem-s-c}
            \end{figure}

\mycut{ \subsection*{Alternative symbolic operational semantics: Values} \label{ord}

  \[\varphi \mid \rho  \vdash  V \imps V\]
  
        \subsection*{Alternative symbolic operational semantics: Redexes and boolean redexes}
     \subsubsection*{Addition redexes}
   
\myacomment{maybe z needs to be fresh below.....}

 \[\frac{\varphi \mid \rho  \vdash  V + V' \impe W \quad \varphi \mid \rho[W/z]  \vdash E[z] \imps C}{\varphi \mid \rho  \vdash E[V + V']  \imps \mylet z: T_W\, \mybe V + W \myin C}\]

   \subsubsection*{Operation redexes }

     \[\frac{\varphi \mid \rho  \vdash \op(V)    \impe W \quad \varphi \mid \rho[W/z]  \vdash E[z] \imps C}{\varphi \mid \rho  \vdash E[\op(V)]  \imps \mylet z: T_W\, \mybe \op(V) \myin C}\]

        \subsubsection*{Let redexes}

      \[\frac{\varphi \mid \rho  \vdash  V \impe W \quad \varphi \mid \rho[W/x]  \vdash E[N] \imps C}
             { \varphi \mid \rho  \vdash E[\mylet   x:T  \mybe V \myin N[x] ] \imps   \mylet   x:T  \mybe V \myin C }
    \]

     \subsubsection*{Product redexes}

   \[\frac{\varphi \mid \rho  \vdash   \ttfst_{T,U}(V) \impe W \quad \varphi \mid \rho[W/z]  \vdash E[z] \imps C}{\varphi \mid \rho  \vdash E[\ttfst_{T,U}(V)]  \imps \mylet z: T_W  \mybe \ttfst_{T,U}(V)  \myin C}\]
 
    \[\frac{\varphi \mid \rho  \vdash   \ttsnd_{T,U}(V) \impe W \quad \varphi \mid \rho[W/z]  \vdash E[z] \imps C}{\varphi \mid \rho  \vdash E[\ttsnd_{T,U}(V)]  \imps \mylet z: T_W \mybe \ttsnd_{T,U}(V)  \myin C}\]

 \subsubsection*{Conditional  redexes}
 
 \[\frac{\varphi \mid \rho  \vdash E[M] \imps C}{\varphi \mid \rho  \vdash E[\myif \true \mythen M \myelse N] \imps C}
 \qquad
 \frac{\varphi \mid \rho  \vdash N \imps C}{\varphi \mid \rho  \vdash E[\myif \false \mythen M \myelse N] \imps C}\]

            \subsubsection*{Recursive function redexes}

          \[\frac{\varphi[\fclo{\varphi}{f(x:T):  U.\, M}/f] \mid \rho \vdash E[N] \imps C }
                     {\varphi \mid \rho \vdash E[\myletrec f(x: T):U \mybe M \myin N] \imps  C}
            \]

            \[\frac{\varphi \mid \rho \vdash V \impe W \quad
                       \varphi'[\varphi(f)/f]\mid \{x \mapsto W\} \vdash E[M] \imps C}
                     {\varphi \mid \rho \vdash E[f(V)] \imps   \mylet x:T \mybe V \myin C}
            \]
            
            where  $\varphi(f) = \fclo{\varphi'}{f(x:T):  U.\, M}$

             \subsubsection*{Reverse-mode differentiation redexes}

             \[ \frac{\varphi \mid \rho \vdash  V \impe V' \quad
                          \varphi \mid \rho[V'/x] \vdash  N \imps C \quad 
                       \varphi \mid \rho \vdash    E[\newcalRprop{x :   T }{C}{V}{W}{}]  \imps D}
                       {\varphi \mid \rho \vdash E[ \dRD{x:T.\,N}{V}{W}] \imps  D} 
              \]

       \myacomment{should be able to make more efficient as contexts and lets should suitably commute}
             
        \subsubsection*{Boolean redexes}

 \[\frac{\varphi \mid \rho  \vdash V \impe V' \quad \varphi \mid \rho  \vdash E[W_{\bool}] \imps C}{\varphi \mid \rho \vdash E[\pred(V)]  \imps C} \quad (\bev(\pred,V')  \simeq  W_{\bool})\]}

\newcommand{\newcalRfig}[4]{#4\hspace{-0.75pt} .\hspace{-0.5pt}\mathcal{R}_{#3} (#1\hspace{-1pt}.\hspace{0.75pt}#2 )}

\begin{figure}[h]

\[{\hspace{-6pt}\begin{array}{lll}

   \newcalRfig{     x\type T   }{y}{V}{W} & = &  \left \{\begin{array}{ll} W & (y = x)\\ 0_T & (y  \neq  x)
                                                           \end{array} \right .\\[10pt]
                                                           
                          
          \newcalRfig{    x\type T  }{r}{V}{W} & = &  0_T \qquad (r \in \R)\\[10pt]
          
                                      
                          \newcalRfig{    x\type T  }{\hspace{-0.5pt}D \!+ \! E}{V}{W} & = &   
                                            \newcalRfig{    x\type T  }{D}{V}{W} \!+_T\! \newcalRfig{    x\type T  }{E}{V}{W}\\[10pt] 

                                               
                                                                             
                           \newcalRfig{    x\type T  }{\op(D)}{V}{W} & = &   
                                                                             \begin{array}{l}\mylet     x\type T \,   \mybe V \myin\\
                                                                             \mylet y \type S\, \mybe W.\rop(D) \myin   \newcalRfig{   x\type T }{D}{V}{y} 
                                                                   \end{array}   \\ && \qquad (y \notin \FV(V), \op\type  S \rightarrow U )
                                                                            \\[10pt]

%


                                                                             %

    \newcalRfig{   x\type T }
                      {\mylet   y \type S \! \mybe\! D \myin E}{V}{W} & = &                                                 
                                                                             
        
 \hspace{-4pt} \begin{array}{l} 
   \mylet     x  \type T \!  \!\mybe\! V \myin\\
   \mylet     y \type  S\! \mybe \! D \myin\\
\;\;\newcalRfig{   x\type T  }{E} { V}{W}  \; +_T   \; \\
\;\;(\mylet  \ov{y}\type  S\! \mybe\! \newcalRfig{ y\type S }{E} { y}{W} \! \myin\! \, \newcalRfig{   x\type T }{D}{V}{\ov{y}})
  \end{array}   \\
&&   \hspace{20pt} (y \notin \FV(W), y,\ov{y}\notin \FV(V,D))
  
  \\[10pt]   
  
                                                                             
        
  
  
    %
  


          \newcalRfig{   x\type T  }{ \ast}{V}{W} & = &  0_T\\[10pt]     


           \newcalR{    x\type T  }{\tuple{D,E}_{U,S} }{V}{W} & = &   
                \begin{array}{l} \mylet y \type U, z \type S \mybe W \myin\\ \; \newcalR{    x\type T  }{D}{V}{y} +_T \newcalR{ x\type T }{E}{V}{z} \\
                \quad\quad  (y,z \notin \FV(V,D,E) 
                )
                \end{array}\\[10pt] 


  
             \newcalRfig{    x\type T  }{ \ttfst_{U,S}(D)\hspace{-0.5pt}}{V}{W} & = &  
                                                                                             \begin{array}{l}
                                                                                               \mylet     x\type T \,   \mybe V \myin\\
                                                                                                \mylet y\type  U \times S \! \mybe \! D \myin  \newcalRfig{    x\type T  }{ D}{V}{\tuple{W,0_{S}}}
                                                                                               \end{array} 
                                                                       \\&&\hspace{30pt} (y \notin \FV(V,W,D))\\[10pt]


  
   \newcalRfig{    x\type T  }{ \ttsnd_{U,S}(D)\hspace{-0.5pt}}{V}{W} & = &  
                                                                                             \begin{array}{l}
                                                                                               \mylet     x\type T \,   \mybe V \myin\\
                                                                                                \mylet y\type  U \times S \! \mybe \! D \myin \newcalRfig{    x\type T  }{ D}{V}{\tuple{0_{U},W}}
                                                                                               \end{array} 
                                                          \\&&\hspace{30pt} (y \notin \FV(V,W,D))\\[10pt]


 \end{array}}\]
 
 \mylcomment{bug corrected in let case (adding $y \notin \FV(V)$). \\
 Bug corrected in tuple case; old version commented out above\\
Variables renamed in tuple case\\
 Check in proofs!}
 \caption{Definition  of $  \newcalR{  x\type T }{C}{V}{W}$}
 \label{for-diff}
 \end{figure}
    %


The rules for ordinary evaluation are given in Figures~\ref{op-sem-vr} and~\ref{op-sem-c}; those for symbolic evaluation are given in Figures~\ref{op-sem-s-vr} and~\ref{op-sem-s-c}. The definitions are mutually recursive. They make use of the symbolic differentiation of trace terms: given a trace term $C$, and  values $V$ and $W$ (not necessarily closed), 
     we  define a trace term
            \[\newcalRprop{ x \type T}{C}{V}{W}{}\]
 intended to denote the reverse-mode  derivative of the function  $x \type T \mapsto C$, at $V$, evaluated at $W$. A definition is given  in Figure~\ref{for-diff}; 
 %
            %
in the definition we assume that $x \notin \FV(V,W)$, 
and, as is common, that all binding variables 
are different.

\mylcomment{make sure these remarks carry over to the long version, including the SCT}

 \begin{proposition} \label{symb-diff-type}
The following  typing rule is admissible:
   \[\frac{ \G[x\type T] \vdash C \type  U \quad   \G \vdash V \type T \quad  \G  \vdash W \type U}{ \G  \vdash \newcalRprop{ x\type T}{C}{V}{W}{} \type  T}\]
\end{proposition}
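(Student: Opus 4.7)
The proof proceeds by straightforward structural induction on the trace term $C$, following exactly the case analysis used to define $\newcalRprop{x:T}{C}{V}{W}{}$ in Figure~\ref{for-diff}. In each case we assume the three typing premises $\Gamma[x:T] \vdash C : U$, $\Gamma \vdash V : T$, and $\Gamma \vdash W : U$, and we verify that the defining right-hand side is well-typed with type $T$ in $\Gamma$, appealing to the typing rules of Figure~\ref{ty-rules}, the admissible rules for the abbreviations $0_T$ and $+_T$, and (for the recursive cases) the induction hypothesis. The freshness conventions on the locally bound variables stated below Figure~\ref{for-diff} ensure that all let-bindings and pair-eliminations can be introduced without variable capture.

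The individual cases are essentially mechanical. For $C = y$, if $y = x$ then necessarily $U = T$ by uniqueness of typing so $W : T$; if $y \neq x$ the result is $0_T : T$. For $C = r$ and $C = \ast$ the result is $0_T$. For $C = D + E$ we have $U = \real$ and both recursive reverse-derivatives have type $T$ by induction, so their $+_T$-sum does too. For $C = \op(D)$ with $\op : S \to U$, the reverse operator $\op_r : S \times U \to S$ gives $W.\op_r(D) : S$, so the inductively typed inner term $\newcalR{x:T}{D}{V}{y}$ has type $T$. The let-case combines two applications of the induction hypothesis (once with differentiation variable $x : T$, once with differentiation variable $y : S$) and an outer $+_T$. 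The pair and projection cases are analogous, using that $\tuple{W, 0_S}$ and $\tuple{0_U, W}$ have the appropriate product types.

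The main (mild) obstacle is simply bookkeeping: making sure in each clause that the intermediate let-bound variables receive the correct types so that the recursive invocations of the induction hypothesis are applicable, and that the outer term really lives in $\Gamma$ rather than in some extended environment. This is resolved by observing that each intermediate binding is introduced with the type dictated by the subterm being eliminated (e.g.\ $y : U \times S$ in the projection case, matching the type of $D$), and that the freshness conditions ensure the bound variables do not clash with free variables of $V$, $W$, or the other subterms. No subtler argument, e.g.\ involving the typing of $V$ and $W$ or Proposition analogous to weakening, is required beyond what is already packaged into the typing rules.
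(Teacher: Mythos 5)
Your proof is correct and follows exactly the paper's approach: the paper's proof is simply ``By induction on $C$, using the definition of $\newcalRprop{x\type T}{C}{V}{W}{}$ for the various cases,'' which is precisely the structural induction and case-by-case verification you carry out. Your additional remarks on quantifying the induction hypothesis over the environment (so that the let and projection cases, where the recursive calls occur under extended environments or with a different differentiation variable, go through) correctly identify the only bookkeeping point the paper leaves implicit.
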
  
\POPLomit{\begin{proof}
By induction on $C$, using the definition of $\newcalRprop{x\type T}{C}{V}{W}{}$ for the various cases.
\end{proof}}


In large part because of the restrictions on trace terms,
their symbolic differentiation is just a systematic, formal application of the chain rule.
In our setting, this application requires a fair amount of attention to detail, for instance 
the use of the type decorations when giving derivatives of pairing and projection terms. 

The reader may wish to try the following two evaluation examples with nested differentiation:
 \[\dRD{x \type \real.\, x\, \times \, \dRD{y \type \real.\, x + y}{1}{1}}{1}{1} \impe 1 \]
 and 
 \[ \myletrec \; f(x \type  \real ) \type  \real  \mybe \dRD{y \type  \real.\,x+y}{1}{1} \; 
 \myin \; \dRD{x \type  \real .x + f(x)}{1}{1} \impe 1\]
 Examples of this kind can be used to illustrate perturbation confusion in forward differentiation, e.g.,~\cite{SP05,SP08}.


We need some basic results on our evaluation relations. Two are standard: determinacy and type safety, and are used implicitly throughout the rest of the paper. The third connects symbolic and ordinary evaluation: one can interpolate  symbolic evaluation within ordinary evaluation. It is principally helpful to reduce the completeness part of symbolic evaluation to the completeness of ordinary evaluation (see Theorem~\ref{op-com}).





\POPLomit{ \begin{lemma}\label{general}
 \hspace{2em}
   \begin{enumerate}
   \item 
   For any $\rho$, $\rho'$, $\varphi$, $\varphi'$, and $M$, if $\rho$ and $\rho'$ agree on the ordinary variables of $M$ and  $\varphi$ and 
   $\varphi'$ agree on the function variables of $M$  then, for all $V$, 
   $\varphi \mid \rho\vdash M \impe V$ holds iff $\varphi' \mid \rho'\vdash M \impe V$ does, and similarly for ordinary evaluation of boolean terms or symbolic evaluation.
   \item For all $\varphi$, $\rho$, $M$, and $C$, if $\varphi \mid \rho\vdash M \imps C$, then $\FV(C) \subseteq \FV(M)$. 
   \item  For all $V$, $W$, $C$ we have:
   \[\FV(\newcalRprop{  \D \,  }{C}{V}{W}{}) \subseteq \FV(V,W) \cup (\FV(C) \backslash \{x_0,\ldots, x_{n-1}\})\]
 where $  \D \,   = x_0\type T_0, \ldots, x_{n-1}\type T_{n-1}$.  \end{enumerate}
   \end{lemma} }
   \POPLomit{ \begin{proof}
   For part (1), it suffices to prove the implications from left to right, and they are proved by mutual  induction on the sizes of the proofs that  
   $\varphi \mid \rho\vdash M \impe V$ or 
      $\varphi \mid \rho\vdash B \impe V_{\bool}$ or 
      $\varphi \mid \rho\vdash M \imps C$  hold. The proof uses part (2).
      
   Part (2) is proved by induction on the size of the proof that  $\varphi \mid \rho\vdash M \imps C$ holds.  The proof uses part (3).
   
   Part (3) is proved by induction on $C$, and by cases on  the clauses of the definition of $\FV(\newcalRprop{ x \type T}{C}{V}{W}{})$.

   \end{proof}}

\POPLsomit
{
We define the relation %
\[(\varphi \mid \rho \vdash M \impe  V) \sim_{\alpha} (\varphi' \mid \rho \vdash M' \impe  V)\]
 to hold if the environments $\rho$ and $\rho'$ have  respective forms $ \{x_0 \mapsto V_0, \ldots, x_{n-1} \mapsto V_{n-1}\}$ and $ \{x'_0 \mapsto V_0, \ldots, x'_{n-1} \mapsto V_{n-1}\}$ and
$M' = M[x'_0/x_0, \ldots, x'_{n-1}/x_{n-1}]$. We define relations 
\[(\varphi \mid \rho \vdash B \impe  V) \sim_{\alpha} (\varphi' \mid \rho \vdash B' \impe  V)\]
and 
\[(\varphi \mid \rho \vdash M \imps  C) \sim_{\alpha} (\varphi' \mid \rho \vdash M' \imps  C')\]
 similarly. These relations are all called \emph{sequent (ordinary) $\alpha$-equivalence}.


\begin{lemma}[$\alpha$-equivalence] \label{a-equivalence}
If two sequents are ordinarily $\alpha$-equivalent, then one holds iff the other does.
\end{lemma}
}
\POPLomit{ \begin{proof}
To be supplied
\end{proof}}

\begin{proposition}[Determinacy of evaluation] \label{ev-det} The following hold: 
\hspace{2em}
\begin{enumerate}
\item For any $\varphi$, $\rho$, and $M$, there is at most one  value $V$ s.t.\  $\varphi \mid \rho \vdash M \impe  V$.
\item For any $\varphi$, $\rho$, and $M$, there is at most one trace term $C$ s.t.\  $\varphi \mid \rho \vdash M \imps  C$.
\end{enumerate}
\end{proposition}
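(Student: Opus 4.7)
The plan is to prove both parts simultaneously by induction on the combined size of the derivation trees establishing the sequents, leaning essentially on Lemma~\ref{context-form-anal} (evaluation context analysis) to guarantee that the ``top-level'' rule applied to any non-value term is forced. More precisely, I would suppose two derivations $\pi_1,\pi_2$ of $\varphi\mid\rho\vdash M\impe V_1$ and $\varphi\mid\rho\vdash M\impe V_2$ (resp.\ of symbolic evaluations to $C_1,C_2$), and argue that $V_1=V_2$ (resp.\ $C_1=C_2$) by induction on $|\pi_1|+|\pi_2|$, mutually for the two parts.

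The case split is driven by the shape of $M$. If $M$ is a value, then the only applicable rule in Figure~\ref{op-sem-vr} (or Figure~\ref{op-sem-s-vr}) concludes $\rho(V)$ (resp.\ $V$ itself), determined uniquely. Otherwise, Lemma~\ref{context-form-anal} tells us $M=E[R]$ or $M=E[R_{\bool}]$ with unique $E$ and (boolean) redex. This forces both $\pi_1$ and $\pi_2$ to end with the same context rule from Figure~\ref{op-sem-c} or Figure~\ref{op-sem-s-c}, so the residual problem reduces to showing that evaluating the unique redex $R$ (or $R_{\bool}$) to a value/trace term is deterministic; this is where the sub-case analysis on the form of the redex happens. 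For each redex form---addition, operation application, $\mathtt{let}$, projections, conditional, $\mathtt{letrec}$, function call, predicate---the premises of the applicable rule pin down the subsequent sub-evaluations, and the induction hypothesis gives their unique values. One then needs the unique decomposition of values from the grammar (e.g., a pair value $\tuple{W_1,W_2}_{T,U}$ has unique components) to propagate uniqueness through the projection and $\mathtt{let}$ cases.

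Two subtleties deserve care. First, the ordinary evaluation rule for $\dRD{x\type T.\,N}{V}{W}$ invokes symbolic evaluation as a premise---this is precisely why the two parts must be proved in tandem; the induction hypothesis for part~(2) on the smaller symbolic sub-derivation yields a unique trace term $C$, and then the induction hypothesis for part~(1) on the subsequent ordinary evaluation $C\impe X$ finishes the case. Second, the symbolic rule for $\dRD{}$ produces the syntactic term $\newcalRprop{x\type T}{C}{V}{W}{}$, which is a function of $C,V,W$ defined in Figure~\ref{for-diff}; uniqueness here hinges on the determinacy of the sub-evaluation $\varphi\mid\rho[V'/x]\vdash N\imps C$ (inductive hypothesis), together with the fact that $\newcalRprop{}{}{}{}{}$ is a deterministic syntactic operation up to the arbitrary choices of freshly bound variables, which are irrelevant since we work modulo $\alpha$-equivalence.

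The only genuine obstacle is the mutual recursion between ordinary and symbolic evaluation through the $\mathtt{rd}$ construct, combined with the possibility of closures in $\varphi$ whose bodies themselves contain $\mathtt{rd}$ terms: one must be careful to formulate the induction on derivation size (not on term size or type) so that the sub-derivations invoked in the function-call and differentiation premises are genuinely strictly smaller. Once that measure is fixed, every other case reduces to routine inspection of Figures~\ref{op-sem-vr}--\ref{op-sem-s-c} coupled with Lemma~\ref{context-form-anal}.
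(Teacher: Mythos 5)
Your proposal matches the paper's proof: a simultaneous induction on the sizes of the two kinds of derivations, with the case split supplied by Lemma~\ref{context-form-anal} and the mutual recursion between ordinary and symbolic evaluation through the $\mathtt{rd}$ rules handled exactly as you describe. The one detail the paper makes explicit that you gloss over is the arbitrary choice of the fresh variable $x \notin \Dom(\rho)$ in the context rules of Figures~\ref{op-sem-c} and~\ref{op-sem-s-c}: since that $x$ occurs \emph{free} in $E[x]$ and in the domain of $\rho[V/x]$, term-level $\alpha$-equivalence does not identify the premisses arising from two different choices of $x$, and the paper invokes a dedicated sequent-level $\alpha$-equivalence lemma to close this case.
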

\POPLomit{ \begin{proof} The two statements are proved by simultaneous structural induction on the sizes of the proofs of each of judgements, and 
by cases on the forms of the terms, making use of Lemma~\ref{context-form-anal} to distinguish the various cases, and 
Lemma~\ref{a-equivalence} to handle cases involving ordinary variable binding.
\end{proof}}





The following interpolation proposition  establishes a certain consistency between the ordinary and symbolic evaluation relations.
\begin{proposition}[Operational interpolation] \label{ord-symb-con}
For all $\varphi$, $\rho$, and closed values $V$, the following are equivalent: 
\begin{itemize}
\item[(1)] $\varphi \mid \rho \vdash M \impe V$,
\item[(2)] $\varphi \mid \rho \vdash M \imps  C$ and $\varphi \mid \rho \vdash C \impe  V$, for some $C$.
\end{itemize}


\end{proposition}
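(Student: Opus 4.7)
The plan is to prove both directions by simultaneous structural induction, together with a small auxiliary lemma governing how let-bindings interact with ordinary evaluation of trace terms.

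For the direction (2) $\Rightarrow$ (1), I would induct on the derivation of $\varphi \mid \rho \vdash M \imps C$ and perform a case analysis on the final rule applied. The value case is immediate since $C = V'$ and the ordinary value rule yields $\rho(V') = V$. The rules for $+$, $\op$, tuple, projections, let, conditionals, $\myletrecop$, and function application each correspond directly to an ordinary-evaluation rule, and the IH together with $\varphi \mid \rho \vdash C \impe V$ delivers the required ordinary derivation. The reverse-mode differentiation case is handled almost for free: the ordinary-evaluation rule for $\dRD{x:T.\,N}{V}{W}$ is itself stated as ``symbolically evaluate, then ordinarily evaluate the resulting trace term,'' so one may just plug in the given $C$ and the assumption $\varphi \mid \rho \vdash C \impe V$. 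The delicate case is the symbolic context rule, which produces a trace term of the form $\mylet x\type T_V \mybe C' \myin D$; here I would use the auxiliary lemma below to decompose $\varphi \mid \rho \vdash \mylet x\type T_V \mybe C' \myin D \impe V$ into evaluations of $C'$ and of $D$ in the extended environment, invoke IH on each premise, and recombine via the ordinary context rule.

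For the direction (1) $\Rightarrow$ (2), I would induct on the derivation of $\varphi \mid \rho \vdash M \impe V$ and again analyse on the final rule. The base and structural cases mirror their symbolic counterparts; the differential-term case is direct because the ordinary rule for $\dRD{x:T.\,N}{V}{W}$ already exhibits a symbolic trace term $C$ and an ordinary evaluation $\varphi \mid \rho \vdash C \impe X$. The context rule requires more care: given $\varphi \mid \rho \vdash R \impe V_1$ and $\varphi \mid \rho[V_1/x] \vdash E[x] \impe W$, the IH provides trace terms $C'$ and $D$ with $\varphi \mid \rho \vdash R \imps C'$, $\varphi \mid \rho \vdash C' \impe V_1$, $\varphi \mid \rho[V_1/x] \vdash E[x] \imps D$, and $\varphi \mid \rho[V_1/x] \vdash D \impe W$. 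The symbolic context rule then yields $\varphi \mid \rho \vdash E[R] \imps \mylet x\type T_{V_1} \mybe C' \myin D$, and the auxiliary lemma (used in the reverse direction) delivers $\varphi \mid \rho \vdash \mylet x\type T_{V_1} \mybe C' \myin D \impe W$.

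The auxiliary lemma I need is: for a trace term $C$, one has $\varphi \mid \rho \vdash \mylet x\type T \mybe C \myin D \impe W$ iff there exists a closed value $V$ with $\varphi \mid \rho \vdash C \impe V$ and $\varphi \mid \rho[V/x] \vdash D \impe W$. This is proved by induction on $C$, distinguishing whether $C$ is a value (so the ordinary let rule fires directly) or of the form $E_1[R_1]$ or $E_1[R_{1,\bool}]$ (so one rewrites $\mylet x\type T \mybe E_1[R_1] \myin D$ as $(\mylet x\type T \mybe E_1 \myin D)[R_1]$ and applies the context rule), combined with determinacy of ordinary evaluation (Proposition~\ref{ev-det}) to ensure the decomposition is unique.

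The main obstacle will be the context-rule case, since symbolic evaluation under a context introduces a let-binding that has no counterpart in ordinary evaluation; the proof that this let-binding can be inserted or eliminated without changing the value is precisely what the auxiliary lemma provides, and it is the only nontrivial ``glue'' between the two induction arguments. All other cases reduce to a straightforward pairing of an ordinary rule with its symbolic analogue.
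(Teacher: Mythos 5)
Your overall strategy matches the paper's: each direction is proved by induction on the size of the relevant derivation (ordinary for (1)$\Rightarrow$(2), symbolic for (2)$\Rightarrow$(1)), with a case analysis on the last rule, and the only genuinely non-routine cases are the context rules, where the symbolic semantics manufactures a $\mathtt{let}$ that ordinary evaluation never produces. Your auxiliary lemma decomposing $\varphi \mid \rho \vdash \mylet x\type T \mybe C \myin D \impe W$ into an evaluation of $C$ and an evaluation of $D$ in the extended environment is exactly the right glue for those cases (note that for trace terms the $E_1[R_{1,\bool}]$ branch is vacuous, since trace terms contain no predicates), and making it explicit is if anything an improvement on the paper's very terse proof sketch.

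The one step you gloss over is the function-application case (and, similarly, $\myletrecop$), which you file under ``corresponds directly to an ordinary-evaluation rule.'' It does not quite: the symbolic rule for $f(V)$ evaluates the closure body $M$ in the callee's environments $\varphi'[\varphi(f)/f]\mid\{x\mapsto V'\}$ and returns the trace term $\mylet x \type T \mybe V \myin C$, which must then be ordinarily evaluated in the \emph{caller's} environments $\varphi\mid\rho$. To recombine, you need to transfer $\varphi'[\varphi(f)/f]\mid\{x\mapsto V'\}\vdash C \impe W$ to $\varphi\mid\rho[V'/x]\vdash C \impe W$ (and conversely in the other direction). This requires knowing that evaluation depends only on the environment's values at the free (ordinary and function) variables of the term, together with the fact that $\FV(C)\subseteq\FV(M)\subseteq\{x\}$ whenever $\varphi\mid\rho\vdash M\imps C$ and that trace terms contain no function variables. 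The paper proves these facts in a separate lemma and explicitly flags that both function-redex cases depend on it; your proof needs the same lemma (it is also what justifies discarding the fresh let-bound variable inside your auxiliary lemma). This is a standard but necessary addition rather than a flaw in the approach.
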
  
\POPLomit{ \begin{proof} 
The proofs are by  induction on the size of the proofs of $\varphi \mid \rho \vdash M \impe V$ or $\varphi \mid \rho \vdash M \imps  C$
The implication from (1) to (2) is proved by  induction on the size of the proof that $\varphi \mid \rho \vdash M \impe V$; 
the converse is proved by  induction on the size of the proof that $\varphi \mid \rho \vdash M \imps  C$.
In each case, both function redex cases make use of Lemma~\ref{general}.

\end{proof}}

For a type safety theorem, we  need  typing judgments $\rho \type  \G$, $\varphi \type  \Phi$, and 
$\Cl\type  T\rightarrow U$ for environments, function environments, and closures (implicitly extending the notion of type). These are defined inductively by the following rules:
%

\[\frac{\vdash V_i \type  T_i \quad (i = 0,n-1)}
{\vdash \{x_0 \mapsto V_0,\ldots, x_{n-1}\mapsto V_{n-1}\} \;\type \; x_0: T_0,\ldots, x_{n-1}: T_{n-1}}\]

 \vspace{4pt}
 
 
 \[\frac{\vdash \Cl_i \type  T_i \rightarrow U_i \quad (i = 0,n-1)}
 {\vdash\{f_0 \mapsto \Cl_0, \ldots, f_{n-1} \mapsto  \Cl_{n-1}\}\; \type \; f_0:T_0 \!\rightarrow \!U_0, \ldots, f_{n-1}:T_{n-1} \!\rightarrow\! U_{n-1}}\]
 
\vspace{4pt}

 \[\frac
 {\vdash \varphi: \Phi  \quad \Phi[f\type T \rightarrow U] \mid  x:T\vdash M\type  U}
 {\vdash \fclo{\varphi}{f(x:T):  U.\, M}\,\type \, T \rightarrow U}
  \]
  %
%
Note that a closure $\fclo{\varphi}{f(x:T):  U.\, M}$ can only have type $T \rightarrow U$. So in the third rule $\Phi$ is determined up to the ordering of its function type declarations. Whether the conclusion of the rule follows does not depend on the choice of this ordering.  We write $\Phi_{\varphi}$ for the choice of $\Phi$ with  declarations ordered using the standard function variable ordering.

\begin{proposition}[Type safety] \label{type-safety} Suppose   $\Phi\!\mid\! \G \!\vdash\! M\type T$, $\vdash \varphi\type  \Phi$ and $\vdash \rho\type \G$. 
Then we have:
\[ \varphi\mid \rho\vdash M \;\impe\; V \implies \vdash V\type T \]
and
 \[ \varphi\mid \rho \vdash M \;\imps\; C \implies \G  \vdash C\type T \]
\end{proposition}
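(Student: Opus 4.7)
The plan is to prove both implications simultaneously by induction on the proof size of the evaluation derivation $\varphi \mid \rho \vdash M \Rightarrow V$ or $\varphi \mid \rho \vdash M \leadsto C$, with parallel statements for boolean terms added to the inductive hypothesis (so $\Phi \mid \Gamma \vdash B$ and $\varphi \mid \rho \vdash B \Rightarrow V_{\bool}$ give $\vdash V_{\bool} : \bool$). For each rule in Figures~\ref{op-sem-vr},~\ref{op-sem-c},~\ref{op-sem-s-vr}, and~\ref{op-sem-s-c}, I would first invert the typing judgment $\Phi \mid \Gamma \vdash M : T$ using uniqueness of types to extract typings for the immediate subterms, apply the inductive hypothesis to the evaluation premises, and then reassemble the conclusion using the corresponding typing rule from Figure~\ref{ty-rules}. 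For the value case $\varphi \mid \rho \vdash V \Rightarrow \rho(V)$, a small sublemma (by induction on $V$) shows that substituting $\rho$ into a well-typed value yields a closed value of the same type, and an analogous statement holds for symbolic evaluation (where $\rho(V) = V$ and $V$ keeps its type under $\Gamma$).

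The routine cases (addition, operations, pairing, projections, conditionals, let, predicates, and the evaluation-context rules via Lemma~\ref{ctx-type}) go through directly. For the recursive function definition rule, I would extend $\Phi$ with $f : T \to U$ and observe that $\vdash \varphi[\mathbf{clo}_\varphi(f(x:T):U.\,M)/f] : \Phi[f:T\to U]$ follows from the closure typing rule together with the premise of the $\myletrecop$ typing rule. For the function application rule, the closure extracted from $\varphi(f)$ has a typing derivation $\Phi' \mid x:T \vdash M : U$ (with $\vdash \varphi' : \Phi'$), so extending $\varphi'$ by binding $f$ to itself and typing $\rho' = \{x \mapsto V'\}$ against $x:T$ via the inductive hypothesis applied to $V$ feeds the IH for $M$, yielding $\vdash W : U$ as required. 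For symbolic evaluation of function application, the resulting $\mylet\; x:T \mybe V \myin C$ is typed by combining the IH on $M$ (giving $x:T \vdash C : U$) with the IH on $V$.

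The most delicate cases are the two rules for reverse-mode differentiation. For symbolic evaluation of $\dRD{x:T.\,N}{V}{W}$, inversion gives $\Phi \mid \Gamma[x:T] \vdash N : U$, $\Phi \mid \Gamma \vdash V : T$, and $\Phi \mid \Gamma \vdash W : U$. Applying the IH to $V$ produces $\vdash V' : T$, so $\rho[V'/x]$ types against $\Gamma[x:T]$, and the IH on $N$ then yields $\Gamma[x:T] \vdash C : U$. Proposition~\ref{symb-diff-type} is precisely what is needed to conclude that $\newcalRprop{x:T}{C}{V}{W}{} : T$ under $\Gamma$. For ordinary evaluation of the same term, we invoke Proposition~\ref{ord-symb-con} implicitly via the rule's structure: the symbolic premise gives $\Gamma \vdash C : T$ by what was just argued, and the ordinary premise $\varphi \mid \rho \vdash C \Rightarrow X$ then yields $\vdash X : T$ by the IH (since $\vdash \rho : \Gamma$).

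The main obstacle is keeping the typing context correct in the symbolic differentiation case, since $V$ and $W$ may have free variables and the output trace term is typed under $\Gamma$ rather than closed; without Proposition~\ref{symb-diff-type} this would require a separate subinduction on the definition in Figure~\ref{for-diff}. A minor bookkeeping issue is that closures have unique function types, which is why the third typing rule for closures determines $\Phi_\varphi$ unambiguously and the IH can be applied cleanly to the body $M$ inside a closure.
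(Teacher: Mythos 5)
Your proposal is correct and matches the paper's proof in all essentials: a simultaneous induction on the sizes of the two evaluation derivations, a preliminary sublemma that evaluating a well-typed value under $\rho$ preserves its type, Lemma~\ref{ctx-type} for the evaluation-context rules, and Proposition~\ref{symb-diff-type} for the reverse-mode differentiation redexes under symbolic evaluation. (The appeal to Proposition~\ref{ord-symb-con} in the ordinary differentiation case is unnecessary, as you half-acknowledge: the evaluation rule itself already carries the symbolic premise, so the inductive hypothesis suffices.)
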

\POPLomit{ \begin{proof} 
A straightforward induction on values $W$ shows that if $\vdash \rho\type \G$, $ \G \vdash W\type T$, and $\rho\vdash W \;\impe\; V$ then $ \G 
\vdash V:T$. The proof is then a simultaneous induction on the sizes of the proofs of the two judgements. In the case of evaluation 
contexts, the proof makes use of Lemma~\ref{ctx-type}. The proof of the second implication uses Proposition~\ref{symb-diff-type} in the 
case of reverse-mode differentiation redexes.
\end{proof}}

\section{Mathematical preliminaries}\label{math}


\newcommand{\cX}{\mathcal{X}}
\newcommand{\cY}{\mathcal{Y}}


We now turn to the mathematical facts needed for the denotational semantics of our language. These concern the two modes of differentiation and their interaction with domain theory. We follow~\cite{AJ94} for domain theory, but write dcppo for pointed dcpo, and say
a partial order is \emph{coherent} iff every compatible subset has a lub.   (A subset is \emph{compatible} if any two of its elements have an upper bound.) 
Every coherent partial order is a dcppo.


 
 The collection of  partial functions $f \type   X \pr Y$ with open domain between two topological spaces forms a partial order under  graph inclusion:
\[f \leq g \iff f \subseteq g\]
equivalently,  using the Kleene order\!
\footnote{We write $e \preceq e'$ for two mathematical expressions $e$ and $e'$ to mean that if  $e$ is defined so is $e'$, and they are then equal.}:
\[f \leq g \iff \forall \bx \in X.\, f\bx \preceq g\bx\]
This partial order is a coherent dcppo with $\perp$ the everywhere undefined function and compatible sups given by unions. A partial function $f \type   X \pr Y$ is \emph{continuous} if $f^{-1}(B)$ is open whenever $B$ is; the subcollection of continuous partial functions forms a coherent subdcppo.
%
This holds as, for any open set $B \subseteq Y$
and compatible collection of partial functions $f_i \; (i \in I) \type   X \pr Y$,  we have:
\[(\bigvee_{i \in I} f_i)^{-1}(B) = \bigcup_{i \in I} f_i^{-1}(B)\]
%
We write $\cont{X}{Y}$ for the dcppo of partial continuous functions from $X$ to $Y$.

 It is convenient to use a variation on cartesian product when working with powers of $\R$. We set:
\[\R^m \dtimes \R^n \;\eqdef\; \R^{m+n} \quad (m,n \geq 0)\]
This version of product is associative. Vector concatenation then serves as tupling; however, for clarity,  we may use the usual notation $(\bx_0,\ldots, \bx_{k-1})$ instead of 
$\bx_0\ldots \bx_{k-1}$. There are evident definitions of  the projections 
$\pi_i^{m_0,\ldots, m_{k-1}}\type  \R^{m_0} \dtimes \ldots \dtimes\R^{m_{k-1}} \rightarrow \R^{m_i}$, and of the tupling  
\[\tuple{f_0,\ldots, f_{k-1}}\type  \R^n \pr  \R^{m_0} \dtimes \ldots \dtimes\R^{m_{k-1}} \]
 of  $f_i\type  \R^n \pr \R^{m_i}$. We may ignore the superscripts on the projections when they can be understood from the context.


\subsection{Continuity, Differentiability, and Smoothness} \label{cds}

Standard multivariate analysis of vector-valued real functions from $\R^n$ to $\R^m$ (with $n,m >0$) considers functions $f$ defined on an open domain of $\R^n$, see, e.g.,~\cite{Tr13}. These are precisely  the partial functions:
\[f \type   \R^n \pr \R^m \quad (n,m >0)\]
with open domain. 

%

When $m \!= \!1$, such a function $f$ has partial derivatives 
\[\partial_j(f) \type   \R^n \pr \R \qquad (j = 0, n-1)\]
where
\[\partial_j(f)(x_0,\ldots,x_{n-1})  \simeq  _{\mathrm{def}} \frac{\partial f}{\partial x_j}\]
viewing $f$ as a function of $x_0,\ldots,x_{n-1}$. 
Taken together, these partial derivatives form its \emph{gradient} 
\[\nabla(f) \type   \R^n \pr \R^n \]
where
%
\begin{equation}\label{grad-eq}
\nabla(f)(\bx)  \simeq   \tuple{\partial_0(f)(\bx), \ldots, \partial_{n-1}(f)(\bx)}\
\end{equation}
%
We write $\nabla_{\bx}(f)$ for $\nabla(f)(\bx)$.

We say  $f$  is \emph{continuously differentiable} if  all the 
$\partial_j(f)$ are continuous with domain that of $f$,  equivalently if  $\nabla(f)$ is continuous with domain that of $f$.  As 
an example, removing $0$ from the domain of definition of the non-differentiable ReLU function $f(x) = \max(x,0)$, 
we obtain a continuously differentiable partial function with domain $\R \backslash  0$; its derivative also has 
domain $\R \backslash  0$, with  value $0$, if $x < 0$, and $1$, if $x > 0$.




 
 We now turn to the general case where $f \type   \R^n \pr \R^m$.  The  \emph{Jacobian} 
 $\jacobian_{\bx}(f)$ of $f$ at $\bx \in \R^n$  is the $m$ by $n$ matrix: 
 \begin{equation}\label{Jac-eq}
 \jacobian_{\bx}(f)_{i,j} \simeq \partial_{j}(\pi_i\circ f)(\bx) 
 \end{equation}
where the matrix is undefined if any of the $\partial_{j}(\pi_i\circ f)(\bx) $ are. 
%
 %
 These Jacobians form a partial function:
 \[ \jacobian(f) \type   \R^n \pr \Mat(m,n)\]
where $\Mat(m,n)$ is  the collection  of $m$ by $n$ matrices. Viewing $\Mat(m,n)$ as  $\R^{m \times n}$,
we say 
$f$ is  \emph{continuously differentiable}  if $\jacobian(f)$ is continuous and  has the same domain as $f$ (equivalently 
if each component $\pi_i \hspace{-0.65pt}\circ\hspace{-0.65pt} f$ of $f$ is continuously differentiable).

 The \emph{ differential}
 \footnote{Differentials are discussed in~\cite{Tr13} see: p325 for differentials of functions of several variables;
p348 for higher-order differentials; 
p381 for  differentials of vector-valued functions of several variables; and
p388 for  the chain rule in differential terms.}
 \[\myd(f) \type   \R^n \dtimes \R^n \pr \R^m\]
   of $f$  is defined by:
  \begin{equation}\label{fdiff-eq}
  \myd(f)(\bx, \by)  \simeq  \jacobian_{\bx}(f)\cdot\by
  \end{equation}
  %
 %
and $f$ is continuously differentiable iff $\myd(f)$ has domain $ \Dom(f) \times \R^n$ and is continuous there. 


 We write $\myd_{\bx}(f)$ for the partial function $\myd(f)(\bx,-)$; it is either  $\perp$ or everywhere defined and linear, the latter occurring precisely when $\jacobian_{\bx}(f)$ is defined.
If $f$ is linear then $\myd_{\bx}f = f$, for all $\bx \in \R^n$.



 In the automatic differentiation literature,  $\myd_{\bx}(f)$ is called the \emph{forward-mode} derivative of $f$ at $\bx$. For the \emph{reverse-mode} derivative we 
 define:
  \[\mydr(f) \type   \R^n \dtimes \R^m \pr \R^n\]
 by:
   \begin{equation}\label{rdiff-eq}
   \mydr(f)(\bx, \by)  \simeq  \jacobian_{\bx}(f)^t\cdot\by
   \end{equation}
 and write $\mydr_{\bx}(f)$ for the partial function $\mydr(f)(\bx,-)$; 
 %
 $f$ is continuously differentiable iff $\mydr(f)$ has domain $ \Dom(f) \times \R^m$ and is continuous there.


 In terms of the differentials, the two modes are related by:
  \begin{equation}\label{mode-rel}
  \mydr_{\bx}(f) \, = \, \myd_{\bx}(f)^{\dagger}
  \end{equation}
(setting $\perp^{\dagger} = \perp$). So, if $f$ is linear, then $\mydr_{\bx}(f) = \myd_{\bx}(f)^{\dagger} = f^{\dagger}$.

The semantic content of the definition of forward-mode from reverse-mode given in Section~\ref{language} is the  following equality:
  \begin{equation}\label{f-from-r}
  \myd_{\bx} f = \myd^R_{\mathbf{0}}(\myd^R_{\bx}f)
  \end{equation}
This holds as: $\myd^R_{\mathbf{0}}(\myd^R_{\bx}f) = (\myd^R_{\bx}f)^{\dagger} =  ((\myd_{\bx}f)^{\dagger})^{\dagger} = \myd_{\bx}f$  (the first equality holds as $\myd^R_{\bx}f$ is linear if $\neq \perp$).


The continuously differentiable functions are closed under composition. If  $h\type \R^n \pr \R^l$ is the composition of two such  functions $f\type \R^n \pr \R^m$  and $g\type\R^m \pr \R^l$, then  the chain rule expresses the derivative of $h$ in terms of those of $f$ and $g$. In terms of Jacobians, the chain rule is:
\[\begin{array}{lcll} \jacobian_{\bx}(h)&  \simeq  & \jacobian_{f(\bx)}(g)\cdot \jacobian_{\bx}(f) & (\bx \in   \Dom(h))\\
\end{array}\]
(Note that $\bx \in   \Dom(h)$ iff $\bx \in   \Dom(f)$ and $f(\bx) \in   \Dom(g)$.)
In terms of forward-mode derivatives the chain rule is:
  \begin{equation}\label{for-chain}
  \begin{array}{lcll} \myd_{\bx}(h) &  =  & \myd_{f(\bx)}(g)\circ \myd_{\bx}(f) & (\bx \in   \Dom(h))\\
\end{array}
\end{equation}
and in terms of reverse-mode derivatives it is:
  \begin{equation}\label{rev-chain}
  \begin{array}{lcll} \mydr_{\bx}(h) &  =  & \mydr_{\bx}(f)\circ\mydr_{f(\bx)}(g) & (\bx \in   \Dom(h))\\
\end{array}
\end{equation}
%
%
%




Derivatives with respect to two variables can be reduced to derivatives in each separately. Specifically, suppose 
$f \type \R^{n + n'} \pr \R^m$. Then, for $\bx \in \R^n$ and $\by \in \R^{n'}$, we have:
\begin{equation}\label{double-forward}
 \myd_{\tuple{\bx,\by}}(f)(\bu,\bv) \;\simeq\; \myd_{\bx}(f(-,\by))(\bu) + \myd_{\by}(f(\bx,-))(\bv) \quad (\bu \in \R^n, \bv \in \R^{n'})
\end{equation}
and
\begin{equation} \label{double-reverse} \myd^R_{\tuple{\bx,\by}}(f) = \tuple{\myd^R_{\bx}(f(-,\by)), \myd^R_{\by}(f(\bx,-))}
\end{equation}
These equations are useful for dealing with fan-in. 


Our programming language  has all finite product types of the reals. We will therefore need to work with partial  functions  with open domain 
$f \type   \R^n \pr \R^m$ where $n$ or $m$  is zero. To this end we regard $\R^0$ as having as sole element the empty vector, the trivial topology, and the trivial vector space structure. Every such total function is linear, and this determines its adjoint.

We take such a function $f \type   \R^n \pr \R^m$,  where $n$ or $m$  is zero,  to be continuously differentiable if it is continuous, and we define 
$\myd(f)\type \R^n \dtimes \R^n \pr \R^m$ and $\myd^r(f)\type \R^n \dtimes \R^m \pr \R^n$ by:
\[\myd(f)(\bx, \by) \simeq \left\{ \begin{array}{ll} 0 & (f(\bx)\mydef)\\ 
                                                                          \myundef & (\mbox{otherwise})
                                                  \end{array}\right . 
\qquad
\myd^r(f)(\bx, \by) \simeq \left\{ \begin{array}{ll} 0 & (f(\bx)\mydef)\\ 
                                                                          \myundef & (\mbox{otherwise})
                                                  \end{array}\right .\]
%
The derivative $\myd(f)$ has domain $ \Dom(f) \times \R^n$ and is continuous (and so continuously differentiable) iff $f$ is, and a similar remark applies to $\myd^r(f)$. We understand  $\myd_{\bx}(f)$ and  $\myd^r_{\bx}(f)$ similarly to before. In case $f$ is linear (i.e., total) 
we have  $\myd_{\bx}(f) = f$ and $\myd^r_{\bx}(f) = f^{\dagger}$ as before and equation~(\ref{mode-rel}) relating the two modes continues to hold, as does equation~(\ref{f-from-r}) and also equations 
(\ref{double-reverse}) and (\ref{double-forward}) concerning derivatives with respect to two variables. In particular for $\mathrm{t}\type \R^n \rightarrow \R^0$ we have:
\[\begin{array}{lcl}(\myd_{\bx}\mathrm{t})\bx' & = &  \ast \qquad (\myd^r_{\bx} \mathrm{t})\ast = \mathbf{0}\end{array}\]

Regarding compositions we note  a useful fact:
\begin{fact} \label{fact} If $h\type \R^n \pr \R^l \; (l,n \geq 0)$ is constant on its domain, then it is continuously differentiable iff it is continuous and then, for any $\bx$ in its domain, $\myd_{\bx}(h)$ is the constantly $0$ function, as is $\myd^r_{\bx}(h)$.
\end{fact}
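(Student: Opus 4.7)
The plan is to split into cases based on whether $n$ and $l$ are positive or zero, and in each case to reduce to the definitions of $\myd$ and $\myd^r$ given just above the fact.

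First, suppose $n, l > 0$. The forward direction (continuously differentiable implies continuous) is immediate from the definition, since continuous differentiability requires $\jacobian(h)$ (equivalently $\myd(h)$) to be continuous on $\Dom(h)\!\times\!\R^n$, which in particular forces $h$ itself to be continuous on its open domain. For the converse, suppose $h$ is continuous and constant with value $\mathbf{c}\in\R^l$ on its open domain. Fix $\bx \in \Dom(h)$. Since $\Dom(h)$ is open, there is an open neighborhood $U$ of $\bx$ contained in $\Dom(h)$, on which $\pi_i\circ h$ is constantly $c_i$ for each $i$. Hence every partial derivative $\partial_j(\pi_i\circ h)(\bx)$ exists and equals $0$, so $\jacobian_{\bx}(h)$ is the $l\!\times\!n$ zero matrix. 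In particular $\jacobian(h)$ has domain $\Dom(h)$ and is constantly zero there, hence continuous, so $h$ is continuously differentiable. From $\jacobian_{\bx}(h) = \mathbf{0}$ and equations~(\ref{fdiff-eq}) and~(\ref{rdiff-eq}) we obtain $\myd_{\bx}(h)(\by) = \mathbf{0}\cdot\by = 0$ and $\myd^r_{\bx}(h)(\by) = \mathbf{0}^t\cdot\by = 0$ for all $\by$.

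For the degenerate cases $n = 0$ or $l = 0$, continuous differentiability was defined to coincide with continuity, so the biconditional is immediate. For such $\bx \in \Dom(h)$ the defining clauses for $\myd$ and $\myd^r$ in these cases give $\myd_{\bx}(h)(\by) = 0$ and $\myd^r_{\bx}(h)(\by) = 0$ directly, which is what we want (recalling that the zero vector in $\R^0$ is $\ast$).

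There is no real obstacle here; the only care needed is to handle the boundary cases $n = 0$ and $l = 0$ separately, since Jacobians of functions with a zero-dimensional source or target are not literally captured by the general definition and must be read off from the special clauses introduced immediately before the fact.
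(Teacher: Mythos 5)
Your proof is correct: the paper states this as a Fact without supplying any proof, and your argument (zero Jacobian on a neighborhood for $n,l>0$, and the special defining clauses for $n=0$ or $l=0$) is exactly the routine verification the authors evidently intend. One tiny simplification worth noting: in the forward direction for $n,l>0$ you need not pass through continuity of $\jacobian(h)$ at all, since a function that is constant on an open domain is automatically continuous (preimages of open sets are $\emptyset$ or $\Dom(h)$), so the ``iff'' is trivially an ``always'' in that case.
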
 
It follows that the continuously differentiable functions, if taken in our wider sense, remain closed under composition and pointwise addition, and that the chain rule for forward derivatives continues to hold, as does that for reverse derivatives.  From now on whenever we consider partial functions from an $\R^n$ to an $\R^m$, we include the cases where $n$ or $m$ is $0$.

The projections 
$\pi_i^{m_0,\ldots, m_{k-1}}\type  \R^{m_0} \dtimes \ldots \dtimes\R^{m_{k-1}} \rightarrow \R^{m_i}$
%
are total linear functions, so we have:
\[\begin{array}{lcl}\myd_{\bx}\pi_i^{m_0,\ldots, m_{k-1}} & = & \pi_i^{m_0,\ldots, m_{k-1}}  \\[.3em]
(\myd^r_{\bx}\pi_i^{m_0,\ldots, m_{k-1}})\by & = & (0,\ldots,0,\by, 0,\ldots,0)\end{array}\]
%
Regarding   the tupling  
$\tuple{f_0,\ldots, f_{k-1}}\type  \R^n \pr  \R^{m_0} \dtimes \ldots \dtimes\R^{m_{k-1}} $ of  $f_i\type  \R^n \pr \R^{m_i}$
we have:
\[\begin{array}{lcl}\myd_{\bx} \tuple{f_0,\ldots, f_{k-1}} & = &   \tuple{\myd_{\bx}f_0,\ldots, \myd_{\bx}f_{k-1}} \\[0.3em]
 (\myd^r_{\bx} \tuple{f_0,\ldots, f_{k-1}})(\by_0,\ldots,\by_{k-1}) &  \simeq  & (\myd^r_{\bx}f_0)\by_0 + \ldots +  (\myd^r_{\bx}f_{k-1})\by_{k-1}   \end{array}\]
%
%
%

For the semantics of our  language we  work with infinitely differentiable functions, i.e., smooth ones. First we define smoothness classes $C^k$. 
We say that a  partial function $f \type  \R^n \pr \R^m$ is $C^0$ if it is continuous, and, inductively,  is $C^{k +1}$ if  $\myd(f)$ has domain $ \Dom(f) \times \R^n$ and is $C^{k}$.  This defines a decreasing sequence of classes of functions, and we say that $f$ is \emph{smooth} or $C^{\infty}$ if it is $C^k$ for all $k$. The $C^1$  functions are precisely the continuously differentiable ones. Using the chain rule for differentials one shows that the $C^k$ functions, and so too the smooth ones, are closed under composition. The projections are smooth, as are all linear functions and the $C^k$ functions, and so too the smooth ones, are closed under tupling. 




%

\subsection{Cppos of Differentiable Functions}

The subgraph partial order on partial functions between  powers of $\R$ interacts well with the differential structure:
%
\begin{proposition}\label{osupdiff}
\hspace{2em}
\begin{enumerate}
\item For any $f \leq g \type   \R^n\pr \R^m$ with open domain  we have:
\[\myd f = (\myd g)\res( \Dom(f) \times \R^n) \qquad \mydr f = (\mydr g)\res( \Dom(f) \times \R^m)\]
\item
For any compatible family of functions with open domain $f_i \type   \R^n \pr \R^m\; (i \in I)$, we have: 
\[\myd \bigvee_{i \in I} f_i = \bigvee_{i \in I} \myd  f_i \qquad  \qquad \mydr \bigvee_{i \in I} f_i = \bigvee_{i \in I} \mydr  f_i \]

\end{enumerate}
\end{proposition}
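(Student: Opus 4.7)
The plan is to exploit the locality of differentiation, namely that partial derivatives, Jacobians, and hence differentials at a point depend only on the behavior of the function on an arbitrarily small open neighborhood of that point.

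For part (1), I would argue pointwise. If $\bx \in \Dom(f)$, then since $\Dom(f)$ is open and $f \leq g$ forces $f(\by) = g(\by)$ for every $\by \in \Dom(f)$, there is an open neighborhood $U \subseteq \Dom(f)$ of $\bx$ on which $f$ and $g$ coincide. By the definition of partial derivative via limits, each $\partial_j(\pi_i \circ f)(\bx)$ is defined iff $\partial_j(\pi_i \circ g)(\bx)$ is, with the same value. Hence $\jacobian_{\bx}(f) \simeq \jacobian_{\bx}(g)$, and from~(\ref{fdiff-eq}) and~(\ref{rdiff-eq}) we obtain $\myd f(\bx,\by) \simeq \myd g(\bx,\by)$ and $\mydr f(\bx,\by) \simeq \mydr g(\bx,\by)$ for every $\by$. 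Conversely, if $\bx \notin \Dom(f)$ then the Jacobian of $f$ at $\bx$ is undefined (its entries require $\bx$ to be a limit point where the partial derivatives exist, which at the very least requires $\bx \in \Dom(f)$), so $\myd f(\bx,\by)$ is undefined for every $\by$; this shows the domain of $\myd f$ is contained in $\Dom(f) \times \R^n$, giving the asserted restriction identity. The degenerate cases where $n = 0$ or $m = 0$ are handled separately using the special definition of $\myd$ and $\mydr$: locality again forces $f(\bx) \mydef$ iff $g(\bx) \mydef$ on $\Dom(f)$, so the constant-zero outputs agree.

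For part (2), first note that $\bigvee_{i \in I} f_i$ has open domain $\bigcup_i \Dom(f_i)$ and restricts to $f_i$ on each $\Dom(f_i)$. Applying part (1) with $f_i \leq \bigvee_j f_j$ gives $\myd f_i = (\myd \bigvee_j f_j) \res (\Dom(f_i) \times \R^n)$, so the family $\{\myd f_i\}$ is compatible with common upper bound $\myd \bigvee_j f_j$, and hence $\bigvee_i \myd f_i \leq \myd \bigvee_j f_j$. For the reverse inequality, suppose $(\bx,\by) \in \Dom(\myd \bigvee_j f_j)$; then $\bx$ lies in $\Dom(\bigvee_j f_j) = \bigcup_i \Dom(f_i)$, so $\bx \in \Dom(f_{i_0})$ for some index $i_0$, and part (1) yields $\myd f_{i_0}(\bx,\by) \simeq \myd \bigvee_j f_j(\bx,\by)$, so $(\bx,\by) \in \Dom(\bigvee_i \myd f_i)$ with the same value. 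The identical argument works for $\mydr$.

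The main obstacle will simply be a careful bookkeeping of when the (partial) Jacobian is defined: one must verify that $\Dom(\myd f) \subseteq \Dom(f) \times \R^n$, which in the standard case $n, m > 0$ follows from the definition via partial derivatives but in the degenerate cases $n = 0$ or $m = 0$ requires invoking the separate definition given near Fact~\ref{fact}. Once this locality property is established, both parts reduce to straightforward pointwise reasoning on open sets.
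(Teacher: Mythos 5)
Your proposal is correct and follows essentially the same route as the paper: part (1) rests on the locality of the Jacobian on an open set where $f$ and $g$ agree (with the degenerate cases $n=0$ or $m=0$ dispatched via the separate definition near Fact~\ref{fact}), and part (2) combines part (1) with the identity $\Dom(\myd f)\subseteq\Dom(f)\times\R^n=\bigcup_i\Dom(f_i)\times\R^n$, which is exactly the content of the paper's chain of restriction equalities, merely rephrased as two inequalities.
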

\ap{\begin{proof}For the first part,  if $m$ or $n$ is 0, the conclusion is immediate using Fact~\ref{fact}.
Otherwise, as $f$ and $g$ agree on an open set including $x$ we have: 
%
\[\jacobian_{\bx}(f)  \simeq  \jacobian_{\bx}(g)\]
and the conclusion follows.
For the second part, set $f = \bigvee f_i $. For the forward-mode derivative, using the first part we calculate:

\[\begin{array}{lclcl}
\myd f & = & \myd f \res (\Dom(f) \times \R^n)
           & = & \myd f \res \bigcup_{i \in I} (\Dom(f_i) \times \R^n)\\
           & = & \bigvee_{i \in I} \myd f \res (\Dom(f_i) \times \R^n)
           & = & \bigvee_{i \in I} \myd f_i\\
\end{array}\]
%
%
The proof for the reverse-mode  derivative is similar.

\end{proof}
}
\POPLomit{ \begin{proof}
For the first part,  if $m$ or $n$ is 0, the conclusion is immediate using Fact~\ref{fact}. Otherwise, as $f$ and $g$ agree on an open set including $x$, we have
%
\[\jacobian_{\bx}(f)  \simeq  \jacobian_{\bx}(g)\]
and the conclusion follows.

For the second part, set $f = \bigvee f_i $. For the forward-mode derivative, using the first part we calculate:
\[\begin{array}{lcl}
\myd f & = & \myd f \res ( \Dom(f) \times \R^n)\\
           & = & \myd f \res \bigcup_{i \in I} ( \Dom(f_i) \times \R^n)\\
           & = & \bigvee_{i \in I} \myd f \res ( \Dom(f_i) \times \R^n)\\
           & = & \bigvee_{i \in I} \myd f_i\\
\end{array}\]
The proof for the reverse-mode  derivative is similar.
\end{proof}}

\begin{proposition} \label{smoothdiff}
\hspace{2em}
\begin{enumerate}
\item Let $f \leq g \type   \R^n \pr \R^m$ be partial functions with open domain. Then $f$ is 
smooth 
if $g$ is.
\item Let $f_i \type   \R^n \pr \R^m$ be a compatible family of partial  functions with open domain, and with sup $f$. Then   $f$ is 
smooth 
if all the $f_i$ are. 
\end{enumerate}
\end{proposition}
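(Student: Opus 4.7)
The plan is to prove both parts simultaneously by induction on $k$, showing that if $g$ (respectively each $f_i$) is $C^k$ then $f$ is $C^k$; smoothness then follows since it is defined as membership in every $C^k$. The key input at each inductive step is Proposition~\ref{osupdiff}, which tells us that the differential operators $\myd$ and $\mydr$ commute both with restriction to smaller open subdomains and with sups of compatible families. This lets us transport the induction hypothesis through the differential. Throughout I treat the boundary cases where $n$ or $m$ is $0$ uniformly with the others, using Fact~\ref{fact} and the conventions fixed after it.

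For part (1) at the base case, $g$ being $C^0$ means $g$ is continuous, and the restriction of a continuous partial function (with open domain) to an open subdomain is continuous, so $f = g\res\Dom(f)$ is $C^0$. For the inductive step, assume the statement holds at level $k$ and that $g$ is $C^{k+1}$, so $\myd g$ has domain $\Dom(g) \times \R^n$ and is $C^k$. By Proposition~\ref{osupdiff}(1), $\myd f = (\myd g)\res(\Dom(f) \times \R^n)$; in particular $\myd f \leq \myd g$ as partial functions with open domain, and $\Dom(\myd f) = \Dom(f) \times \R^n$. Applying the induction hypothesis to $\myd f \leq \myd g$ yields that $\myd f$ is $C^k$, so $f$ is $C^{k+1}$.

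For part (2) at the base case, the continuous partial functions with open domain form a subdcppo of the function order (as noted earlier in Section~\ref{math}), so $f = \bigvee_i f_i$ is continuous. For the inductive step, assume the statement holds at level $k$ and that each $f_i$ is $C^{k+1}$, so each $\myd f_i$ has domain $\Dom(f_i) \times \R^n$ and is $C^k$. The family $\{\myd f_i\}$ is compatible: whenever $f_i, f_j$ share an upper bound $h$ (with open domain), Proposition~\ref{osupdiff}(1) gives $\myd f_i, \myd f_j \leq \myd h$. Hence by Proposition~\ref{osupdiff}(2), $\myd f = \bigvee_i \myd f_i$, which is $C^k$ by the induction hypothesis. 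Finally, $\Dom(\myd f) = \bigcup_i \Dom(\myd f_i) = \bigcup_i (\Dom(f_i) \times \R^n) = \Dom(f) \times \R^n$, so $f$ is $C^{k+1}$.

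The only subtlety I anticipate is verifying compatibility of the derived family $\{\myd f_i\}$ in the inductive step of (2), but as sketched it reduces directly to Proposition~\ref{osupdiff}(1); the rest is a mechanical unfolding of the definition of $C^{k+1}$ combined with the commutation properties of $\myd$.
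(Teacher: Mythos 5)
Your proof is correct and follows essentially the same route as the paper's: induction on $k$ for each part, with the inductive steps driven by parts (1) and (2) of Proposition~\ref{osupdiff} respectively, and the same domain bookkeeping $\Dom(\myd f) = \Dom(f)\times\R^n$. The only cosmetic difference is that you explicitly verify compatibility of the family $\{\myd f_i\}$, which the paper leaves implicit in its appeal to Proposition~\ref{osupdiff}(2).
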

\ap{\begin{proof}  For the first part, we prove by induction that if $g$ is $C^k$ then so is $f$.  For $k = 0$ we note that for any open $V \subseteq \R^m$, $f^{-1}(V) = g^{-1}(V) \cap \Dom(f)$.
For $k + 1$, as  $g$ is $C^{k+ 1}$, $\Dom(\myd g) = \Dom(g) \times \R^n$ and $\myd g$ is $C^k$. From part (1) of Proposition~\ref{osupdiff} we have $\myd f \leq \myd g$ and $\Dom(\myd f ) = \Dom(f) \times \R^n$. 
So $\myd f$ and $\myd g$ have open domain, $\myd f \leq \myd g$, and $\myd g$ is $C^k$. It follows from the induction hypothesis that $\myd f$ is $C^k$.
%
So $f$ is $C^{k + 1}$, as required.

For the second part we prove, by induction on $k$ that if all the $f_i$  are $C^k$, then so is $f$. For $k = 0$ this is clear. 
For $k+1$, we have, for all $i$, that $\Dom(\myd f_i) = \Dom(f_i) \times \R^n$ and $f_i$ is $C^k$.
From part (2) of Proposition~\ref{osupdiff} we have $\myd f = \bigvee_i \myd f_i$. So, first,
\[\begin{array}{lclcl}\Dom(\myd f) &  = &  \Dom(\bigvee_i \myd f_i) & = & \bigcup_i \Dom(\myd f_i)\\
                                                     &  = &  \bigcup_i  \Dom(f_i) \times \R^n & = & \Dom(f) \times \R^n\end{array}\]
%
%
and second, also using the induction hypothesis, we have that $\myd f$ is $C^k$. So $f$ is $C^{k + 1}$, as required.
\
\end{proof}
}
\POPLomit{ \begin{proof}  For the first part, we prove by induction that if $g$ is $C^k$ then so is $f$.  For $k = 0$ we note that for any open $V \subseteq \R^m$, $f^{-1}(V) = g^{-1}(V) \cap  \Dom(f)$.
For $k + 1$, as  $g$ is $C^{k+ 1}$, $ \Dom(\myd g) =  \Dom(g) \times \R^n$ and $\myd g$ is $C^k$. From part (1) of Proposition~\ref{osupdiff} we have $\myd f \leq \myd g$ and $ \Dom(\myd f ) =  \Dom(f) \times \R^n$. 
So $\myd f$ and $\myd g$ have open domain, $\myd f \leq \myd g$, and $\myd g$ is $C^k$. It follows from the induction hypothesis that $\myd f$ is $C^k$.
%
So $f$ is $C^{k + 1}$, as required.

For the second part we prove, by induction on $k$ that if all the $f_i$  are $C^k$, then so is $f$. For $k = 0$ this is clear. 
For $k+1$, we have, for all $i$, that $ \Dom(\myd f_i) =  \Dom(f_i) \times \R^n$ and $f_i$ is $C^k$.
From part (2) of Proposition~\ref{osupdiff} we have $\myd f = \bigvee_i \myd f_i$. So, first,
\[\begin{array}{lcl} \Dom(\myd f) & \hspace{-6pt} = & \hspace{-4pt}  \Dom(\bigvee_i \myd f_i) = \bigcup_i  \Dom(\myd f_i)\\
                                                  & \hspace{-6pt}  = & \hspace{-4pt} \bigcup_i   \Dom(f_i) \times \R^n =   \Dom(f) \times \R^n\end{array}\]
and second, additionally using the induction hypothesis, we have that $\myd f$ is $C^k$. So $f$ is $C^{k + 1}$, as required.
\end{proof}}

We write $\smooth{\R^n}{\R^m}$ for the coherent dcppo of smooth partial functions between $\R^n$ and $\R^m$.

\POPLsomit{\myacomment{write on connection with sheafs}}

%
%

 
\newcommand{\Cond}{\mathrm{Cond}}

\subsection{Conditionals and Recursion}

Differentiation and conditionals interact well.
The \emph{conditional combinator} $\Cond_{n,m}$ 
%
%
is defined for $p \type  R^n \pr \T$ and $f,g \type  R^n \pr \R^m$   by:
\[\Cond_{n,m}(p,f,g)(\bx)  \;\simeq\;   \left \{ \begin{array}{ll}
                                                                   f(\bx) & (p(\bx) = \dtrue)\\
                                                                   g(\bx) & (p(\bx) = \dfalse)\\
                                                                      \myundef & (\mbox{otherwise})
                                                              \end{array}\right .\]
%
where $\T= \{\dtrue,\dfalse\}$. The conditional combinator is  continuous. For differentiability, with $\T$  a discrete topological space, we have:
\begin{proposition} \label{cond-diff} Suppose $p$ is continuous (equivalently:  both $p^{-1}(\dtrue)$ and  $p^{-1}(\dfalse)$ are open). Then:
\[\begin{array}{lcl}\myd (\Cond_{n,m}(p,f,g)) & =  & \Cond_{(n+n),m}(p\circ \pi_1,\myd f, \myd g) \end{array}\]
 and 
 \[\begin{array}{lcl}\myd^r (\Cond_{n,m}(p,f,g)) & =  & \Cond_{(n+m),n}(p\circ \pi_1,\myd^r f, \myd^r g) \end{array}\]
 
%
Further, if  $f,g$ are smooth 
so is 
$\Cond_{n,m}(p,f,g)$.
%
%
\end{proposition}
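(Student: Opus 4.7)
The plan is to exploit the fact that, since $p$ is continuous, the conditional locally coincides with either $f$ or $g$ on an open set, and then to leverage the two parts of Proposition~\ref{osupdiff} to transport differentiation through this local decomposition.

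First, set $U = p^{-1}(\dtrue)$ and $V = p^{-1}(\dfalse)$, which are disjoint open subsets of $\R^n$ by continuity of $p$. Define
\[ f_U \;=\; f \res (U \cap \Dom(f)), \qquad g_V \;=\; g \res (V \cap \Dom(g)), \]
both of which have open domain and satisfy $f_U \leq f$, $g_V \leq g$. Since their domains are disjoint, $\{f_U, g_V\}$ is compatible, and by direct unwinding of the definition of $\Cond_{n,m}$ one sees $\Cond_{n,m}(p,f,g) = f_U \vee g_V$. In particular, the domain of the conditional is open.

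For the forward-mode formula, apply part (1) of Proposition~\ref{osupdiff} to get $\myd f_U = (\myd f)\res (\Dom(f_U) \times \R^n)$ and analogously for $g_V$; then apply part (2) to conclude $\myd(\Cond_{n,m}(p,f,g)) = \myd f_U \vee \myd g_V$. It remains to check that this union is exactly $\Cond_{(n+n),m}(p\circ \pi_1, \myd f, \myd g)$. This follows since $(p\circ \pi_1)^{-1}(\dtrue) = U \times \R^n$ and $(p\circ \pi_1)^{-1}(\dfalse) = V \times \R^n$, so the right-hand conditional restricts $\myd f$ to $(U \cap \Dom(\myd f)) \times \R^n = \Dom(\myd f_U)$ on the true branch and analogously on the false branch, agreeing with $\myd f_U \vee \myd g_V$ pointwise. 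The reverse-mode identity is proved identically, replacing $\R^n$ by $\R^m$ in the fiber and using $\myd^r$ throughout.

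For smoothness, I would prove by induction on $k$ that if $f$ and $g$ are $C^k$, then so is $\Cond_{n,m}(p,f,g)$. The base case $k = 0$ is continuity: a preimage under the conditional is the union of a relatively open subset of $U \cap \Dom(f)$ and a relatively open subset of $V \cap \Dom(g)$, both of which are open in $\R^n$. For the inductive step, the forward-mode identity just proved shows that $\myd$ of the conditional equals the conditional of $\myd f, \myd g$ over the continuous predicate $p \circ \pi_1$; by hypothesis $\myd f, \myd g$ are $C^{k-1}$, so the induction hypothesis applied in dimension $n+n$ yields that this conditional is $C^{k-1}$, and hence $\Cond_{n,m}(p,f,g)$ is $C^k$. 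Taking $k \to \infty$ gives smoothness. The main bookkeeping obstacle is verifying that domains align correctly — in particular, that $\Dom(\myd(\Cond_{n,m}(p,f,g))) = \Dom(\Cond_{n,m}(p,f,g)) \times \R^n$, which is needed to invoke the definition of $C^{k+1}$; this follows from $\Dom(\myd f_U) = \Dom(f_U) \times \R^n$ and the corresponding fact for $g_V$, together with disjointness. The degenerate cases $n = 0$ or $m = 0$ are handled separately using Fact~\ref{fact}, but present no real difficulty.
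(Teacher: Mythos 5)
Your proof is correct. For the two differentiation identities you take essentially the paper's route, just packaged globally: where the paper argues pointwise by cases on $p(\bx)$ (undefined / $\dtrue$ / $\dfalse$), using part (1) of Proposition~\ref{osupdiff} to replace $\myd_{\bx}h$ by $\myd_{\bx}f$ on the open set $p^{-1}(\dtrue)$, you instead write $\Cond_{n,m}(p,f,g)$ once and for all as the compatible join $f_U \vee g_V$ and push $\myd$ through the join with part (2) of Proposition~\ref{osupdiff}; the two arguments rest on the same facts and differ only in bookkeeping (your expression ``$(U \cap \Dom(\myd f)) \times \R^n$'' does not typecheck literally, since $\Dom(\myd f) \subseteq \R^n \dtimes \R^n$, but the intended identification of domains is right because $\Dom(\myd f) \subseteq \Dom(f) \times \R^n$). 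Where you genuinely diverge is the smoothness claim: the paper simply invokes Proposition~\ref{smoothdiff} --- $h\res p^{-1}(\dtrue) \leq f$ with open domain is smooth by part (1), and the compatible join of the two restrictions is smooth by part (2) --- whereas you run a fresh induction on $k$, feeding the forward-mode identity back into itself in dimension $n+n$ to show that conditionals preserve $C^k$. Your induction is sound (the domain condition $\Dom(\myd h) = \Dom(h) \times \R^n$ is verified correctly, and the induction hypothesis is properly quantified over all dimensions), and it has the mild virtue of not needing Proposition~\ref{smoothdiff} at all, but it re-derives by hand what that proposition already packages; the paper's version is shorter given the surrounding development.
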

\ap{\begin{proof}
Assume that $p$ is continuous. Set $h = \Cond_{n,m}(p,f,g)$. The domain of $h$ is open, indeed 
$\Dom(h) = (p^{-1}(\dtrue) \cap \Dom(f)) \cup (p^{-1}(\dfalse)  \cap \Dom(g))$
so $\myd h$ is defined. 

To prove the equality, choose $\bx \in \R^n$. There are three cases. First, if $p(\bx)\myundef $ then $h(\bx)\myundef$ and so $(\myd h)(\bx)\myundef$; the equality therefore holds at $\bx$.  Second if $p(\bx) = \dtrue $ then, as $h\res p^{-1}(\dtrue) = f\res p^{-1}(\dtrue)$ and $p^{-1}(\dtrue)$ is open, 
we see that, by part (1) of Proposition~\ref{osupdiff}, $\myd_{\bx}h =  \myd_{\bx} h\res p^{-1}(\dtrue) = \myd_{\bx} f \res p^{-1}(\dtrue) = \myd_{\bx}f $, 
and so the equality again holds at $\bx$. The third case is similar to the second.

Suppose further that $f,g$ are   smooth.  
We have $h\res p^{-1}(\dtrue) \leq f$, and so, by part (1) of Proposition~\ref{smoothdiff},  $h\res p^{-1}(\dtrue)$ is  smooth  as $f$ is and $h\res p^{-1}(\dtrue) $ has open domain.
Similarly $h\res p^{-1}(\dfalse)$ is  smooth. But then,  by part (2) of Proposition~\ref{smoothdiff}, $h$ is  smooth  as $h = h\res p^{-1}(\true) \vee h\res p^{-1}(\dfalse)$. 
\end{proof}}

\POPLomit{ \begin{proof}Assume that $p$ is continuous. Set $h = \Cond_{n,m}(p,f,g)$. The domain of $h$ is open, indeed:
\[ \Dom(h) = (p^{-1}(\dtrue) \cap  \Dom(f)) \cup (p^{-1}(\dfalse)  \cap  \Dom(g))\]
so $\myd h$ is defined. 

To prove the equality, choose $\bx \in \R^n$. There are three cases. First, If $p(\bx)\myundef $ then $h(\bx)\myundef$ and so $(\myd h)(x)\myundef$ and so the equality holds at $\bx$.  Second if $p(\bx) = \dtrue $ then, as $h\res p^{-1}(\dtrue) = f\res p^{-1}(\dtrue)$ and $p^{-1}(\dtrue)$ is open, we see that, by part (1) of Proposition~\ref{osupdiff}, we have $\myd_{\bx}h =  \myd_{\bx} h\res p^{-1}(\dtrue) = \myd_{\bx} f \res p^{-1}(\dtrue) = \myd_{\bx}f $
and so the equality holds at $\bx$. The third case is similar to the second, and the reverse-mode case is proved similarly.

Suppose further that $f,g$ are   smooth.  
We have $h\res p^{-1}(\dtrue) \leq f$, and so, by part (1) of Proposition~\ref{smoothdiff},  $h\res p^{-1}(\dtrue)$ is  smooth  as $f$ is and $h\res p^{-1}(\dtrue) $ has open domain.
Similarly $h\res p^{-1}(\dfalse)$ is  smooth. But then,  by part (2) of Proposition~\ref{smoothdiff}, $h$ is  smooth  as $h = h\res p^{-1}(\true) \vee h\res p^{-1}(\dfalse)$. 

\end{proof}}
In less formal terms than the proof, equality 
holds because if, say, the condition 
$p$ holds at $\bx \in \R^n$, 
it holds in a neighborhood $O$ of $\bx$.  
So $f$ and the conditional are equal throughout $O$, and  therefore have the same derivative there. 
The equality justifies the approaches to the differentiation of conditionals
described in the Introduction.

Differentiation and recursion also interact well.
For any continuous $f \type   P \times Q \rightarrow Q$ ($P$ a dcpo,  $Q$  a dcppo)  we write 
$\mu y \type  Q. f(x,y)$ for the least fixed-point (l.f.p.) of $f(x,-)$. 
It is the sup 
of the \emph{iterates} 
$\mu_n y\type  Q. f(x,y)$, defined inductively by:
\[\begin{array}{lcl}
 \mu_{n+1} y\type  Q. f(x,y) &  = &   f(x,\mu_n y\type  Q. f(x,y)) 
\end{array}\]
starting from  $\perp_Q$. As functions of $P$, the l.f.p. 
and the  iterates 
 are continuous. When $f\type Q \rightarrow Q$,  we  write $\mu y:Q.f(y)$, etc.
%
%

\begin{proposition}  \label{fix-diff}
\hspace{2em}
\begin{enumerate}
\item Set $Q = \smooth{\R^m}{\R^l}$ and $R =  \smooth{\R^m \dtimes \R^m}{\R^l}$.  Then if $F\type \R^n \times Q \rightarrow Q$  and $G\type \R^n \times  Q \times  R \rightarrow R$ 
are such that
%
%
%
\[\myd F(x,f) = G(x,f,\myd f) \quad (x \in \R^n, f \in Q)\]
we have:
\[\myd(\mu f \hspace{-1pt}\type \hspace{-1pt} Q.F(x,f)) \!=\! 
    \mu f' \hspace{-1pt}\type \hspace{-1pt} R. \, G(x,\mu f \hspace{-1pt}\type \hspace{-1pt} Q.F(x,f),f')\]
\item 

Set $Q = \smooth{\R^m}{\R^l}$ and $R =  \smooth{\R^m \dtimes \R^l}{\R^m}$. Then if $F\type \R^n \times Q \rightarrow Q$  and \\$G\type \R^n \times  Q \times  R \rightarrow R$ 
are such that
%
%
%
\[\myd^r F(x,f) = G(x,f,\myd^r f) \quad (x \in \R^n, f \in Q)\]
we have:
\[\myd^r(\mu f \hspace{-1pt}\type \hspace{-1pt} Q.F(x,f)) \!=\! 
    \mu f' \hspace{-1pt}\type \hspace{-1pt} R. \, G(x,\mu f \hspace{-1pt}\type \hspace{-1pt} Q.F(x,f),f')\]

\end{enumerate}

\end{proposition}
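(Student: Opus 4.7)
The plan is to reduce to the standard iterate representation of least fixed points and exploit Proposition~\ref{osupdiff}(2), which tells us that $\myd$ and $\myd^r$ commute with compatible sups of open-domain partial functions. For part~(1), write $g \eqdef \mu f\type Q.\,F(x,f)$, let $g_n$ denote its iterates, and define an auxiliary sequence in $R$ by $f'_0 \eqdef \perp_R$ and $f'_{n+1} \eqdef G(x,g_n,f'_n)$. The first step is a straightforward induction on $n$ showing $\myd g_n = f'_n$: the base case uses that the differential of the everywhere-undefined function is itself everywhere undefined, and the step case is $\myd g_{n+1} = \myd F(x,g_n) = G(x, g_n, \myd g_n) = G(x, g_n, f'_n) = f'_{n+1}$, where the second equality is exactly the hypothesis relating $\myd F$ to $G$. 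Since the chain $\{g_n\}$ consists of smooth partial functions with open domain and is trivially compatible, Proposition~\ref{osupdiff}(2) gives $\myd g = \myd(\bigvee_n g_n) = \bigvee_n \myd g_n = \bigvee_n f'_n$.

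It then remains to identify $\bigvee_n f'_n$ with $h \eqdef \mu f'\type R.\,G(x,g,f')$. For one direction, monotonicity of $G$ together with $g_n \leq g$ yields $f'_n \leq h$ by induction on $n$ (base: $\perp_R \leq h$; step: $f'_{n+1} = G(x,g_n,f'_n) \leq G(x,g,h) = h$ using the inductive hypothesis), so $\bigvee_n f'_n \leq h$. Conversely, $\{f'_n\}$ is itself a chain (a short induction using monotonicity of $G$ and $g_n \leq g_{n+1}$), and by continuity of $G$ in its second and third arguments its sup satisfies
\[
G(x, g, \textstyle\bigvee_n f'_n) = G(x, \textstyle\bigvee_n g_n, \textstyle\bigvee_n f'_n) = \textstyle\bigvee_n G(x,g_n,f'_n) = \textstyle\bigvee_n f'_{n+1} = \textstyle\bigvee_n f'_n.
\]
Hence $\bigvee_n f'_n$ is a fixed point of $G(x,g,-)$, and so dominates its least fixed point $h$. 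Combining the two inequalities with the previous display finishes part~(1).

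Part~(2) is established by a verbatim repetition of the argument with $\myd$ replaced by $\myd^r$ throughout, using the reverse-mode half of Proposition~\ref{osupdiff}(2) and the corresponding hypothesis relating $\myd^r F$ to $G$. I do not anticipate any serious obstacle: the whole proof is a standard l.f.p.\ calculation once the right auxiliary iteration in $R$ is isolated. The only points requiring care are verifying that the hypotheses of Proposition~\ref{osupdiff}(2) genuinely apply to the chain $\{g_n\}$ (it is a chain of smooth functions with open domain, so they do) and checking the base identity $\myd \perp_Q = \perp_R$ (which holds since $\perp_Q$ has empty open domain, and hence so does its differential).
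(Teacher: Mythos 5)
Your proof is correct, and it rests on the same two pillars as the paper's: the representation of each least fixed point as the sup of its iterates, and part (2) of Proposition~\ref{osupdiff} (continuity of $\myd$ and $\myd^r$ with respect to compatible sups). The organization differs in a minor but genuine way. The paper runs two one-sided inductions: it shows $\myd(\mu_n f.\,F(x,f)) \leq \mu f'.\,G(x,g,f')$ for all $n$ and, separately, that the $n$-th iterate of $\mu f'.\,G(x,g,f')$ is $\leq \myd g$, then closes the first inequality with the continuity of $\myd$; only monotonicity of $G$ is used in the inductions. You instead introduce the diagonal iteration $f'_0 = \perp_R$, $f'_{n+1} = G(x,g_n,f'_n)$, prove the exact identity $\myd g_n = f'_n$, and then identify $\bigvee_n f'_n$ with the least fixed point by exhibiting it as a fixed point of $G(x,g,-)$ sandwiched below it. This buys you an exact description of $\myd g_n$ at each stage, at the cost of two extra verifications (that $\{f'_n\}$ is a chain, and the joint-continuity step $G(x,\bigvee_n g_n,\bigvee_n f'_n) = \bigvee_n G(x,g_n,f'_n)$, which uses continuity of $G$ in its second argument rather than mere monotonicity --- harmless here, since the $\mu$ notation already presupposes continuity). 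Your base identity $\myd\perp_Q = \perp_R$ and the appeal to Proposition~\ref{osupdiff}(2) for the chain $\{g_n\}$ are both justified exactly as you say, and the reverse-mode case does go through verbatim.
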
 
\ap{\begin{proof} We only consider the forward-mode case as the reverse-mode case is similar. In one direction we prove by induction on $n$ that
 \[\myd(\mu_n f \hspace{-1pt}\type \hspace{-1pt} Q.F(x,f)) \!\leq\! 
    \mu f' \hspace{-1pt}\type \hspace{-1pt} R. \, G(x,\mu f \hspace{-1pt}\type \hspace{-1pt} Q.F(x,f),f')\]
    This is evident for $n = 0$. For $n+1$ we calculate (missing out  types):
    \[\begin{array}{lcl}
      \myd(\mu_{n+1} f .F(x,f)) 
        &  = &   \myd(F(x, \mu_{n} f .F(x,f))) \\
         &   = &  G(x,\mu_{n} f .F(x,f),\myd \mu_{n} f .F(x,f)) \\
          &  \leq &  G(x,\mu f .F(x,f),    \mu f'. \, G(x,\mu f. F(x,f),f')) \\
           &   =  &  \mu f'. \, G(x,\mu f. F(x,f),f')
      
    \end{array}\]
    
    In the other direction we prove by induction on $n$ that
 \[\mu_n f' \hspace{-1pt}\type \hspace{-1pt} R. \, G(x,\mu f \hspace{-1pt}\type \hspace{-1pt} Q.F(x,f),f')
\!\leq\! 
 \myd(\mu f \hspace{-1pt}\type \hspace{-1pt} Q.F(x,f)) \]
     This is evident for $n = 0$. For $n+1$ we calculate:
      \[\begin{array}{lcl}
       \mu_{n+1} f'. \, G(x,\mu f. F(x,f),f')
     &   =    &  G(x,\mu f .F(x,f),    \mu_n f'. \, G(x,\mu f. F(x,f),f')) \\
       &  \leq &  G(x,\mu f .F(x,f),\myd \mu f .F(x,f))\\
     &   = &  \myd F(x, \mu f .F(x,f))\\
     &   = &  \myd(\mu f .F(x,f)) \\

    \end{array}\]
   The conclusion then follows using the continuity of $\myd$ (part (2) of Proposition~\ref{osupdiff}).

\end{proof}}
\POPLomit{ \begin{proof}
We consider only the forward-mode case, as the reverse-mode case is similar. In one direction we prove by induction on $n$ that
 \[\myd(\mu_n f \hspace{-1pt}\type \hspace{-1pt} Q.F(x,f)) \!\leq\! 
    \mu f' \hspace{-1pt}\type \hspace{-1pt} R. \, G(x,\mu f \hspace{-1pt}\type \hspace{-1pt} Q.F(x,f),f')\]
    This is evident for $n = 0$. For $n+1$ we calculate (missing out  types):
    \[\begin{array}{l}
      \myd(\mu_{n+1} f .F(x,f)) \\
         =   \myd(F(x, \mu_{n} f .F(x,f))) \\
           =  G(x,\mu_{n} f .F(x,f),\myd \mu_{n} f .F(x,f)) \\
          \leq  G(x,\mu f .F(x,f),    \mu f'. \, G(x,\mu f. F(x,f),f')) \\
             =   \mu f'. \, G(x,\mu f. F(x,f),f')
      
    \end{array}\]
    
    In the other direction we prove by induction on $n$ that
 \[\mu_n f' \hspace{-1pt}\type \hspace{-1pt} R. \, G(x,\mu f \hspace{-1pt}\type \hspace{-1pt} Q.F(x,f),f')
\!\leq\! 
 \myd(\mu f \hspace{-1pt}\type \hspace{-1pt} Q.F(x,f)) \]
     This is evident for $n = 0$. For $n+1$ we calculate:
      \[\begin{array}{l}xq
       \mu_{n+1} f'. \, G(x,\mu f. F(x,f),f')\\
       =     G(x,\mu f .F(x,f),    \mu_n f'. \, G(x,\mu f. F(x,f),f')) \\
        \leq  G(x,\mu f .F(x,f),\myd \mu f .F(x,f))\\
       =\myd F(x, \mu f .F(x,f))\\
       = \myd(\mu f .F(x,f)) \\

    \end{array}\]
The conclusion then follows using the continuity of $\myd$ (Part 2 of Proposition~\ref{osupdiff}).
\end{proof}}

Although we do not do so here, this proposition can be used to justify  code transformations of recursive function definitions.



\section{Denotational semantics}\label{densem}


We begin with a denotational semantics of types as  powers of the reals:
\[\begin{array}{lcl}
\den{\real} & = & \R\\
\den{\unit} & = & \R^0\\
\den{T \times U} & = & \den{T} \dtimes \den{U}\\
\end{array}\]
Note that $\den{T} = \R^{|T|}$, where,  $|\real| = 1$, $|\unit| = 0$ and, recursively,  $|T \times U| = |T| + |U|$.
Then, for the semantics of environments $\G = x_0\type T_0, \ldots, x_{n-1} \type T_{n-1}$ 
we set 
\[\den{\G} = \den{T_0} \dtimes \ldots \dtimes \den{T_{n-1}} \]
%
%
and for that of function environments $\Phi =  f_0\type T_0\! \rightarrow \! U _n, \ldots, f_{n-1}\type T_{n-1}\! \rightarrow\! U_{n-1} $  we set 
 \[\den{\Phi} =   \smooth{\den{T_0}}{\den{U_0}}\times  \ldots \times \smooth{\den{T_{n-1}}}{\den{U_{n-1}}} \]
%
%
We use $\gamma,\delta$ to range over $\den{\G}$ and  $\phi$ over $\den{\Phi}$. 
The vectors $\gamma \in \den{\G}$ 
correspond to  semantic environment functions $\rho$ on $\Dom(\G)$ with $\rho(x_i) = \pi_i(\gamma)$, and this correspondence is 1-1. We take advantage of it to write $\gamma[a/x]$, when $a \in \den{T}$,  for an element of $\den{\G[x:T]}$ (assuming $\G$ and $T$  available from the context). We understand $\phi[\alpha/f]$ similarly, for $\phi \in \den{\Phi}$ and 
$\alpha \in \smooth{\den{T}}{\den{U}}$.





The denotational semantics of a term $\Phi \!\mid \!\G \! \vdash \! M\type T$ will be a continuous function:
\[\den{\Phi}  \xrightarrow{\den{\Phi \mid \G \vdash M:T}} \smooth{\den{\G}}{\den{T}}\]
and that of a boolean term $\Phi\! \mid \! \G \!\vdash \! B$ will be continuous functions:
\[\den{\Phi}  \xrightarrow{\den{\Phi \mid \G \vdash B}} \cont{\den{\G}}{\T}\]
%
When the environments and types are understood from the context, we may just write $\den{M}$ or $\den{B}$.

For the semantics of operation and predicate symbols, for every $\op \type  T\rightarrow U$ we assume available a smooth function $\den{\op}: \den{T} \pr \den{U}$ and for every  $\pred\type  T$  we assume available a a continuous function $\den{\pred}: \den{T} \pr \T$, such
 that, for every closed value $V\type T$:
%
\begin{equation} \label{req1} \den{\op}(\den{V})  \simeq  \den{\ev(\op,V)}\end{equation}
 and, for every closed value $V\type T$:
\begin{equation} \label{req2} \den{\pred}(\den{V})  \simeq  \den{\bev(\pred,V)}\end{equation}
%


\begin{figure}[h]

\[ {
\begin{array}{lcll} \centering
\den{x}\phi \s \gamma   &\hspace{-8pt} \simeq   &\hspace{-3pt} \gamma(x)\\[12pt]
  \den{r}\phi \s \gamma   &\hspace{-8pt} \simeq   &\hspace{-5pt} r \qquad (r \in \R)\\[12pt] 
  \den{M + N}\phi \s \gamma   &\hspace{-8pt} \simeq   &\hspace{-7pt}  \den{M}\phi \s \gamma    +  \den{N}\phi \s \gamma   \\[12pt] 
    \den{\op(M)}\phi \s \gamma   &\hspace{-8pt} \simeq   &\hspace{-5pt}   \den{\op}(\den{M}\phi \s \gamma )\\[12pt] 
  \den{\mylet  x\type T  \hspace{-3pt} \mybe \hspace{-3pt} M \hspace{-2pt} \myin \hspace{-2pt} N}\phi \s \gamma   &\hspace{-8pt} \simeq   &\hspace{-5pt} \den{N} \phi \s (\gamma[\den{M}\phi \s \gamma /x])\\[12pt]
    \den{\ast}\phi \s \gamma   &\hspace{-8pt} \simeq   &\hspace{-5pt} \ast\\[12pt] 
   \den{\tuple{M, N}_{T,U}}\phi \s \gamma   &\hspace{-8pt} \simeq   &\hspace{-5pt} 
       \tuple{\den{M}\phi \s \gamma ,\den{N}\phi \s \gamma } \\[12pt] 
         \den{\ttfst_{T,U}(M)}\phi \s \gamma   &\hspace{-8pt} \simeq   &\hspace{-5pt}   \den{\fst}(\den{M}\phi \s \gamma )\\[12pt] 
     \den{\ttsnd_{T,U}(M)}\phi \s \gamma   &\hspace{-8pt} \simeq   &\hspace{-5pt}   \den{\snd}(\den{M}\phi \s \gamma )\\[12pt] 

         \den{\myif \! B \! \mythen \! M \! \myelse \! N}\phi \s \gamma  &\hspace{-8pt} \simeq   &\hspace{-5pt}
   \left \{ \hspace{-5pt}\begin{array}{cl}
 \den{M}\phi \s \gamma  & ( \den{B}\phi \s \gamma   \simeq  \dtrue)\\
 \den{N}\phi \s \gamma  & ( \den{B}\phi \s \gamma   \simeq  \dfalse)\\
 \myundef & (\mbox{otherwise})
  \end{array}
  \right . \\[16pt] 
 \begin{array}{l}\hspace{-5pt}[\![\hspace{-2.5pt}\myletrec f(x\type T)\type U 
  \mybe M \myin N]\!]\phi \s \gamma  \end{array} &\hspace{-8pt} \simeq   &\hspace{-5pt} 
\hspace{-5pt}\begin{array}{l}\den{N}(\phi[\mu \alpha\type  \smooth{\den{T}}{\!\den{U}}.\,\\
\hspace{32pt}  \lambda a\type  \den{T}.\,\den{M}(\phi[\alpha/f])a/f]) \s \gamma\end{array}\\[12pt] 
 \den{f(M)}\phi \s \gamma  &\hspace{-8pt} \simeq   &\hspace{-5pt} \phi(f)(\den{M}\phi \s \gamma )\\[12pt] 
  \den{\dRD{x\type T.\,N}{ L}{M}}\phi \s \gamma  &\hspace{-8pt} \simeq   &\hspace{-5pt}
 \hspace{-5pt}\begin{array}{l} \mydr_{\den{L}\phi \s \gamma }(a \in \den{T} \mapsto \den{N}\phi \s (\gamma[a/x]))
 (\den{M}\phi \s \gamma )\end{array}\\[12pt]
       \den{\true}\phi \s \gamma   &\hspace{-8pt} \simeq   &\hspace{-5pt} \dtrue\\[12pt] 
     \den{\false}\phi \s \gamma   &\hspace{-8pt} \simeq   &\hspace{-5pt} \dfalse\\[12pt] 
    \den{\pred(M)}\phi \s \gamma   &\hspace{-8pt} \simeq   &\hspace{-5pt} \den{\pred}(\den{M}\phi \s \gamma )\\\ 
\end{array}}\]
\vspace{-10pt}
\caption{Denotational semantics}
\label{den-sems}
\end{figure}

The denotational semantics is given in Figure~\ref{den-sems}. Note that it uses  the no-free-variable assumption in the clause for recursive functions. Apart from the semantics of reverse differentiation, which uses the reverse-mode derivative $\myd^r$, it is quite standard. However, the facts that the denotations of terms  carry smooth functions to smooth functions, and that the denotations of boolean terms carry smooth functions to continuous ones, use the mathematics developed in the previous section, particularly: the chain rule, e.g., for function application and let constructs; the preservation of smooth functions by the conditional combinator; the remarks on products; and, for recursive function definitions, the fact that the lub of an increasing sequence of smooth functions is smooth.

If a term $M$ contains no function variables (or variables), $\den{M}\phi \s \gamma$ is independent of $\phi$ (and $\gamma$), and we  write 
$\den{M}\gamma$ (resp., $\den{M}$) for it. Trace terms $C$ have no function variables,  and  closed values $V$ have no function variables or variables. 



Using the semantics of terms, we can define the denotational semantics of value environments, function environments, and closures, the latter two by a mutual structural induction.

\begin{itemize}
\item
For every $\rho\type \G$, where $\G = x_0\type T_0, \ldots, x_{n-1} \type T_{n-1}$,
we define $\den{\rho\type\G} \in \den{\G}$ by:
\[\den{\rho\type\G} = (\den{\rho(x_0)},\ldots,\den{\rho(x_{n-1})})\]
%
\item
For every $\varphi\type \Phi$, where
$\Phi =  f_0\type T_0\! \rightarrow \! U _n, \ldots, f_{n-1}\type T_{n-1}\! \rightarrow\! U_{n-1} $,
 we define $\den{\varphi \type \Phi} \in \den{\Phi}$ by:
\[\den{\varphi \type \Phi} = (\den{\varphi(f_0)},\ldots,\den{\varphi(f_{n-1})})\]
%
\item For every $\Cl = \fclo{\varphi}{f(x \type T)\type  U.\, M}\type T \!\rightarrow\! U $ we define $\den{\Cl} \in \smooth{\den{T}}{\den {U}}$
by:
\[\begin{array}{lcl}\den{\Cl} & =  & \mu \alpha\type \smooth{\den{T}}{\den{U}}.\, \lambda a\type \den{T}.\den{M}(\den{\varphi\type \Phi_{\varphi}}[\alpha/f]) \s a
\end{array}\]
%
%
\end{itemize}
We  omit $\G$ and $\Phi$ from $\den{\rho\type\G}$ and $\den{\varphi \type \Phi} $ when they can be understood from the context.
\begin{lemma}  \label{val-den} For any type $T$ we have:
\begin{enumerate}
\item The denotation $\den{V}$ of any value $V:T$ exists.
\item For any $v \in \den{T}$ there is a unique value $V:T$ such that $v = \den{V}$.
\end{enumerate}
\end{lemma}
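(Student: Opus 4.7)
The plan is to prove both parts by straightforward structural induction, using that closed values have a very restricted syntactic shape and that the type-subscripts on pair values rule out any decomposition ambiguity.

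For part (1), I would induct on the structure of the closed value $V$. Since $V$ has no free variables, the variable production $V \BEQ x$ does not apply, leaving three cases. If $V = r$ with $r \in \R$, then $\vdash V : \real$ and $\den{V} = r \in \R = \den{\real}$ by the semantic clause. If $V = \ast$, then $\vdash V : \unit$ and $\den{V} = \ast \in \R^0 = \den{\unit}$. If $V = \tuple{V_1,V_2}_{T_1,T_2}$, then by inversion of the typing rule for pairs we have $\vdash V_i : T_i$, so the inductive hypothesis gives $\den{V_i} \in \den{T_i}$, and hence $\den{V} = (\den{V_1},\den{V_2}) \in \den{T_1} \dtimes \den{T_2} = \den{T_1 \times T_2}$. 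In each case the defining clause in Figure~\ref{den-sems} uses only total operations, so $\den{V}$ is indeed defined.

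For part (2), I would induct on the type $T$. At $T = \real$ the map $V \mapsto \den{V}$ restricted to closed values of type $\real$ is the identity on $\R$, so it is bijective. At $T = \unit$ the set $\den{\unit} = \R^0$ has a single element, and $\ast$ is the only closed value of type $\unit$ (the variable and pair productions cannot have type $\unit$), giving a trivial bijection. At $T = T_1 \times T_2$, given $v = (v_1,v_2) \in \den{T_1} \dtimes \den{T_2}$, the inductive hypothesis produces unique closed values $V_i : T_i$ with $\den{V_i} = v_i$, and then $V = \tuple{V_1,V_2}_{T_1,T_2}$ satisfies $\den{V} = v$. For uniqueness, any closed value of type $T_1 \times T_2$ must, by inspection of the value grammar together with the typing rules, be of the form $\tuple{V_1',V_2'}_{T_1',T_2'}$; uniqueness of typing forces $T_i' = T_i$, and then $\den{V_i'} = v_i$ forces $V_i' = V_i$ by the inductive hypothesis.

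There is no serious obstacle here; the only point that deserves care is the product case of part (2), where the explicit type decorations on pair values together with uniqueness of typing are what prevent two distinct pair values from denoting the same element. Without those subscripts, the same element of $\den{T_1 \times T_2}$ could in principle be the denotation of pairings at different component types, so it is worth flagging that their presence is exactly what makes the bijection well-defined.
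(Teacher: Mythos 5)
Your proof is correct, and it is the routine double structural induction (on closed values for part (1), on types for part (2)) that the paper evidently intends, since it states Lemma~\ref{val-den} with no proof at all. One small quibble with your closing remark: because $\dtimes$ flattens $\den{T_1 \times T_2}$ to $\R^{|T_1|+|T_2|}$ and inversion of the pair typing rule already forces the component types of any closed value of type $T_1 \times T_2$, uniqueness at a fixed $T$ would hold even without the type subscripts, so they are convenient for reading off the types but not what the bijection hinges on.
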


\mycomment{Martin: the following two lemmas could be omitted if needed. Maybe they should anyway be omitted as they don't show up below}

The following two fairly standard results about evaluation contexts are needed to show adequacy.
\begin{lemma}[Evaluation context compositionality] \label{ctxt-sem1} 
Suppose $\Phi\!\mid\! \G \vdash M\type U$, $\Phi\!\mid\! \G \vdash E[M]\type T$, and $\Phi\!\mid\! \G \vdash N\type U$. Then:
\[\den{M}\phi \s \gamma  \simeq  \den{N}\phi \s \gamma \implies \den{E[M]}\phi \s \gamma  \simeq  \den{E[N]}\phi \s \gamma \qquad (\phi\in \den{\Phi}, \gamma \in \den{\G})\]

Analogous results hold for: boolean terms in evaluation contexts, $E[B]$; terms in boolean evaluation contexts $E_{\bool}[M]$; and boolean terms in boolean evaluation contexts $E_{\bool}[B]$.
\POPLomit{
\hspace{2em}
\begin{enumerate}
\item Suppose $\Phi\!\mid\! \G \vdash M\type U$, $\Phi\!\mid\! \G \vdash E[M]\type T$, and $\Phi\!\mid\! \G \vdash N\type U$. Then:
\[\den{M}\phi \s \gamma  \simeq  \den{N}\phi \s \gamma \implies \den{E[M]}\phi \s \gamma  \simeq  \den{E[N]}\phi \s \gamma \qquad (\phi\in \den{\Phi}, \gamma \in \den{\G})\]
\item  Suppose $\Phi\!\mid\! \G \vdash B$, $\Phi\!\mid\! \G \vdash E[B]\type T$, and $\Phi\!\mid\! \G \vdash B'$. Then:
\[\den{B}\phi \s \gamma  \simeq  \den{B'}\phi \s \gamma \implies \den{E[B]}\phi \s \gamma  \simeq  \den{E[B']}\phi \s \gamma \qquad (\phi\in \den{\Phi}, \gamma \in \den{\G})\]

\item  Suppose $\Phi\!\mid\! \G \vdash M\type U$, $\Phi\!\mid\! \G \vdash E_{\bool}[M]$, and $\Phi\!\mid\! \G \vdash N\type U$. Then:
\[\den{M}\phi \s \gamma  \simeq  \den{N}\phi \s \gamma \implies \den{E_{\bool}[M]}\phi \s \gamma  \simeq  \den{E_{\bool}[N]}\phi \s \gamma \qquad (\phi\in \den{\Phi}, \gamma \in \den{\G})\]
\item  Suppose $\Phi\!\mid\! \G \vdash B$, $\Phi\!\mid\! \G \vdash E_{\bool}[B]$, and $\Phi\!\mid\! \G \vdash B'$. Then:
\[\den{B}\phi \s \gamma  \simeq  \den{B'}\phi \s \gamma \implies \den{E_{\bool}[B]}\phi \s \gamma  \simeq  \den{E_{\bool}[B']}\phi \s \gamma \qquad (\phi\in \den{\Phi}, \gamma \in \den{\G})\]
\end{enumerate}
}
\end{lemma}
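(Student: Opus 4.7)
My plan is to prove all four assertions (concerning $E[M]$, $E[B]$, $E_{\bool}[M]$, and $E_{\bool}[B]$) simultaneously by structural induction on the evaluation context, $E$ or $E_{\bool}$. The base case is the trivial context $E = \hole$: here $E[M] = M$ and $E[N] = N$, so the hypothesis $\den{M}\phi\s\gamma \simeq \den{N}\phi\s\gamma$ is literally the conclusion, and the analogue for a boolean hole is identical. Lemma~\ref{ctx-type} (evaluation context polymorphism) is used at the outset to extract, for each sub-context $E'$ occurring in the induction, a unique type for its hole that matches the type of $M,N$ (or $B,B'$); this guarantees that every intermediate expression $E'[M]$, $E'[N]$ appearing in the argument is well typed and so has a well-defined denotation.

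For the inductive step, the key observation is that every clause of the denotational semantics in Figure~\ref{den-sems} is compositional: the denotation of a compound term is built from the denotations of its immediate sub-terms via an $\simeq$-respecting operation (application of $\den{\op}$ or $\den{\pred}$, addition in $\R$, pairing, projection, function application to $\phi(f)$, the reverse derivative $\myd^r$, etc.). So for each non-trivial production of the grammar, I can read off the inductive step directly. For example, for $E = \op(E')$ I have $\den{E[M]}\phi\s\gamma \simeq \den{\op}(\den{E'[M]}\phi\s\gamma)$; the IH applied to $E'$ gives $\den{E'[M]}\phi\s\gamma \simeq \den{E'[N]}\phi\s\gamma$, whence the conclusion follows by strictness of $\den{\op}$. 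The same pattern dispatches the productions $E' + N'$, $V + E'$, $\tuple{E',N'}_{T,U}$, $\tuple{V,E'}_{T,U}$, $\ttfst_{T,U}(E')$, $\ttsnd_{T,U}(E')$, $f(E')$, both positions inside $\dRD{x\type T.\,N'}{\cdot}{\cdot}$, and $E_{\bool} = \pred(E')$.

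Two productions deserve a brief comment. The let-context $E = \mylet x\type T \mybe E' \myin N'$ is handled easily because the hole sits inside $E'$, which is evaluated in the outer environment $\gamma$ \emph{before} the let-bound variable is introduced: the IH for $E'$ yields $\den{E'[M]}\phi\s\gamma \simeq \den{E'[N]}\phi\s\gamma$, and the map $a \mapsto \den{N'}\phi\s(\gamma[a/x])$ then preserves the equivalence. The conditional $E = \myif E_{\bool}' \mythen M' \myelse N'$ is precisely the place where the simultaneous structure of the induction is essential: we apply the IH for $E_{\bool}'$ (assertion 3 or 4 of the lemma, depending on the type of the hole) to conclude that the boolean denotations of $E_{\bool}'[M]$ and $E_{\bool}'[N]$ are $\simeq$, and then the three-way case split in the semantic clause for $\myif$ preserves this.

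I anticipate no substantive obstacle: the lemma is essentially a routine syntactic induction once compositionality of the semantics is observed, and the only bookkeeping subtlety is the coordination between the $E$-cases and the $E_{\bool}$-cases, which is precisely why all four statements must be proved together in a single mutual induction.
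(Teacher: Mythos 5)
Your proof is correct: the paper omits the argument for this lemma as ``fairly standard,'' and your simultaneous structural induction on $E$ and $E_{\bool}$, using that every clause of Figure~\ref{den-sems} is built from the denotations of immediate subterms by operations respecting Kleene equality, is exactly the intended routine argument. The points you single out --- the hole of an evaluation context always being evaluated at the outer environment $\gamma$ (so the let and differentiation positions cause no trouble) and the mutual dependence between the $E$- and $E_{\bool}$-cases through the conditional and $\pred$ productions --- are the only places where care is needed, and you handle both.
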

\POPLomit{ \begin{proof}
\end{proof}}


\begin{lemma}[Evaluation context strictness] \label{ctxt-sem2}
\hspace{2em}
\begin{enumerate}
\item Suppose that $\Phi\!\mid\! \G \vdash E[M]\type T$ and $\Phi\!\mid\! \G \vdash M\type U$. Then, for any \ $x \notin \FV(E)$, we have:
\[\den{E[M]}\phi \s \gamma \mydef \implies \den{M}\phi \s \gamma\mydef \quad (\phi\in \den{\Phi}, \gamma \in \den{\G})\]
and so:
\[\den{E[M]}\phi \s \gamma \simeq \den{E[x]}\phi \s (\gamma[\den{M}\phi \s \gamma/x]) \quad (\phi\in \den{\Phi}, \gamma \in \den{\G})\]
The analogous result holds for terms in boolean contexts  $E_{\bool}[M]$.
\POPLomit{
\item Suppose that $\Phi\!\mid\! \G \vdash E_{\bool}[M]$ and $\Phi\!\mid\! \G \vdash M\type U$. Then, for any \ $x \notin \FV(E_{\bool})$, we have:
\[\den{E_{\bool}[M]}\phi \s \gamma \mydef \implies \den{M}\phi \s \gamma\mydef \quad (\phi\in \den{\Phi}, \gamma \in \den{\G})\]
and so:
\[\den{E_{\bool}[M]}\phi \s \gamma \simeq \den{E_{\bool}[x]}\phi \s \gamma[\den{M}\phi \s \gamma/x] \quad (\phi\in \den{\Phi}, \gamma \in \den{\G})\]
}
\item Suppose that $\Phi\!\mid\! \G \vdash E[B]\type T$ and $\Phi\!\mid\! \G \vdash B$. Then:
\[\den{E[B]}\phi \s \gamma \mydef \implies \den{B}\phi \s \gamma\mydef \quad (\phi\in \den{\Phi}, \gamma \in \den{\G})\]
The analogous result holds for boolean terms in boolean contexts  $E_{\bool}[B]$.

\POPLomit{
\item Suppose that $\Phi\!\mid\! \G \vdash E_{\bool}[B]$ and $\Phi\!\mid\! \G \vdash B$. Then:
\[\den{E_{\bool}[B]}\phi \s \gamma \mydef \implies \den{B}\phi \s \gamma\mydef \quad (\phi\in \den{\Phi}, \gamma \in \den{\G})\]
}

\end{enumerate}

\end{lemma}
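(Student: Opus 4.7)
My plan is to prove both parts of the lemma, together with their boolean-context analogues, by a simultaneous structural induction on the evaluation contexts $E$ and $E_{\bool}$. The underlying observation is that every denotational clause in Figure~\ref{den-sems} corresponding to an evaluation-context constructor is Kleene-strict in the position that hosts the hole, so definedness propagates inward from the whole to the subterm at the hole. The "and so" equation of part~(1) is then a routine consequence of strictness and the compositionality lemma, Lemma~\ref{ctxt-sem1}.

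The base case $E=\hole$ is immediate. For the inductive step I walk through each production of the $E$-grammar. For the arithmetic, operation, projection, pairing, function-application, and let-binding constructors, the semantic clause is a composition in which each outer stage is undefined whenever its relevant argument is: addition on $\R$ is strict, $\den{\op}$ and the projections are partial functions, $\phi(f)$ is a partial smooth function, and the environment extension $\gamma[\den{E'[M]}\phi\gamma/x]$ in the let clause requires its filling to be defined. Hence $\den{E[M]}\phi\gamma\mydef$ forces $\den{E'[M]}\phi\gamma\mydef$ for the immediate subcontext $E'$, to which the induction hypothesis applies to give $\den{M}\phi\gamma\mydef$. For the conditional $\myif E_{\bool}\mythen M_1\myelse N_1$, the three-way clause is undefined whenever $\den{E_{\bool}[M]}\phi\gamma$ is, and the boolean-context IH then yields $\den{M}\phi\gamma\mydef$. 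The two reverse-mode differentiation contexts $\dRD{x\type T.\,N}{E'}{M}$ and $\dRD{x\type T.\,N}{V}{E'}$ are handled by noting that $\myd^r_{\den{L}\phi\gamma}(\cdot)(\den{M}\phi\gamma)$ is undefined unless both $\den{L}\phi\gamma$ and $\den{M}\phi\gamma$ are defined. The boolean-evaluation-context case is a one-line reduction to the ordinary case, since $E_{\bool}=\pred(E)$ and $\den{\pred}$ is strict.

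For the displayed equation in part~(1), suppose $\den{E[M]}\phi\gamma$ is defined; the strictness above then delivers $a=\den{M}\phi\gamma\in\den{U}$. Since $x\notin\FV(E)$, and by $\alpha$-renaming we may further assume $x$ is fresh for $\G$ and $M$, we have $\den{M}\phi(\gamma[a/x])=\den{M}\phi\gamma=a=\den{x}\phi(\gamma[a/x])$. Lemma~\ref{ctxt-sem1} then gives $\den{E[M]}\phi(\gamma[a/x])\simeq\den{E[x]}\phi(\gamma[a/x])$, and freshness of $x$ for $E[M]$ rewrites the left side as $\den{E[M]}\phi\gamma$, as required. If instead $\den{E[M]}\phi\gamma$ is undefined, so is $\den{M}\phi\gamma$, hence the environment extension $\gamma[\den{M}\phi\gamma/x]$ is undefined, so the right-hand side of the claimed equation is also undefined, and both sides agree in the Kleene sense. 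Part~(2) is entirely parallel: the denotational clauses for $\pred(\cdot)$, for $\myif$, and for boolean contexts are strict in their boolean subexpression, so the same structural induction yields $\den{B}\phi\gamma\mydef$.

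There is no genuine obstacle; the argument is almost purely mechanical. The only points that warrant light care are the joint treatment of the $E$ and $E_{\bool}$ inductions (so that the conditional and predicate productions can cross-invoke each other's IH), the Kleene bookkeeping for the right-hand side when $\den{M}\phi\gamma$ is undefined, and the implicit $\alpha$-renaming to ensure that the chosen $x$ is fresh not merely for $E$ but also for $M$ and $\G$.
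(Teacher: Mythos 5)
Your overall route coincides with the paper's: the strictness assertions are proved by a structural induction on evaluation contexts (jointly with boolean contexts), observing that each semantic clause of Figure~\ref{den-sems} is strict in the hole position, and the displayed equation then follows from Lemma~\ref{ctxt-sem1} together with the freshness of $x$. That part of your argument is fine.

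There is, however, one step that is wrong as written, in your treatment of the ``and so'' equation. You split on whether $\den{E[M]}\phi \s \gamma$ is defined and, in the undefined branch, assert ``If instead $\den{E[M]}\phi\s\gamma$ is undefined, so is $\den{M}\phi\s\gamma$.'' This is the \emph{converse} of the strictness you just proved, and it is false: take $E = \hole + N$ with $\den{N}\phi\s\gamma$ undefined but $\den{M}\phi\s\gamma$ defined, or a conditional context whose guard is defined but whose selected branch diverges. In that situation your undefined branch gives no argument that the right-hand side $\den{E[x]}\phi\s(\gamma[\den{M}\phi\s\gamma/x])$ is undefined. The repair is immediate and is exactly how the paper phrases it: case-split on definedness of $\den{M}\phi\s\gamma$ instead. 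If $\den{M}\phi\s\gamma$ is undefined, both sides are undefined (the left by the strictness assertion, the right because the environment extension $\gamma[\den{M}\phi\s\gamma/x]$ does not exist). If $\den{M}\phi\s\gamma = a$ is defined, then your first branch already goes through verbatim --- it uses only the existence of $a$, not the definedness of $\den{E[M]}\phi\s\gamma$ --- since $\den{M}\phi\s(\gamma[a/x]) \simeq \den{x}\phi\s(\gamma[a/x])$ and Lemma~\ref{ctxt-sem1} plus $x \notin \FV(E)$ yield the claimed Kleene equality whether or not either side is defined. With that one-line correction the proof is complete and agrees with the paper's.
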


\POPLomit{
\begin{proof}
The proof of the existence assertions in the first four parts are straightforward  structural inductions on the evaluation contexts. For the equation of part (1) one can assume that $\den{M}\phi \s \gamma$ exists (else the two sides are undefined).  One then has:
\[\den{E[M]}\phi \s \gamma \simeq  E[M]\phi \s \gamma[\den{M}\phi \s \gamma/x]  \simeq  \den{E[x]}\phi \s \gamma[\den{M}\phi \s \gamma/x]\]
with the second equality following from part (1) of Lemma~\ref{ctxt-sem1}.  The equation of part (2) is proved similarly.

\end{proof}
}

\POPLomit{
The following lemma should be handy for showing program manipulations correct,  e.g., establishing A-normal forms:
\begin{lemma}\label{ctxt-sem2}
\begin{enumerate}
\item
Suppose  
$\Phi\!\mid\! \G \vdash E[\mylet x\type U \mybe M \myin N] \type T$.
Then $\Phi\!\mid\! \G \vdash \mylet x\type U \mybe M \myin E[N] \type T$ and, for $x \notin \FV(E)$:
\[\begin{array}{l}\den{E[\mylet x\type U \mybe M \myin N]} = \den{\mylet x\type U \mybe M \myin E[N]}\end{array}\]

\item

Suppose  
$\Phi\!\!\mid\! \! \G \! \vdash \! E_{\bool}[\mylet x\type U \!\!\mybe\!\! M \! \myin \! N] \type T$.
Then  $\Phi\!\!\mid\! \! \G \!\vdash \! \mylet x\type U \!\!\mybe\!\! M \! \myin \!  E_{\bool}[N] \type T$ and, for $x \notin \FV(E)$:
\[\begin{array}{l}\den{E_{\bool}[\mylet x\type U \mybe M \myin N]} = \den{\mylet x\type U \mybe M \myin E_{\bool}[N]}\end{array}\]

\end{enumerate}
\end{lemma}
\begin{proof}
For part (5) we calculate (with $x,y \notin \FV(E)$ and $ y \neq x$)
\[\begin{array}{lcl} \den{E[\mylet x\type U \mybe M \myin N]}\phi \s \gamma & \simeq  & \den{E[y]}\phi \s \gamma[\den{\mylet x\type U \mybe M \myin N}\phi \s \gamma/y]\\
                                                                                                                  & \simeq  & \den{E[y]}\phi \s \gamma[\den{N}\phi \s \gamma[\den{M}\phi \s \gamma/x]/y]\\
\end{array}\]
and
\[\begin{array}{lcl} \den{\mylet x\type U \mybe M \myin E[N]}\phi \s \gamma & \simeq  & \den{E[N]}\phi \s \gamma[\den{M}\phi \s \gamma/x] \\
                                                                                                                  & \simeq  & \den{E[y]}\phi \s \gamma[\den{M}\phi \s \gamma/x][\den{N}\phi \s \gamma[\den{M}\phi \s \gamma/x]/y] \\
                                                                                                                  & \simeq  & \den{E[y]}\phi \s \gamma[\den{N}\phi \s \gamma[\den{M}\phi \s \gamma/x]/y] 
\end{array}\]
where we have used part (1) of Lemma~\ref{ctxt-sem2} several times, and where the third equality of the last calculation holds as $x \notin \FV(E)$.
Part (6) is proved similarly, but making use now of part (2) of that lemma.
\end{proof}
}

\section{Adequacy}\label{adequacy}

We present 
our main 
results,  on the correspondence between operational and denotational semantics. Taken together these are our adequacy theorems. As well as the usual correctness and completeness theorems, there are results peculiar to differentiation, both of interest in themselves and also necessary for the others.
Theorem~\ref{for_diff_corr} shows the correctness of our source code transformation of trace terms, in that formal differentiation corresponds to actual differentiation.
Theorem~\ref{op-corr} has two parts. The first is the usual statement of correctness for the ordinary evaluation relation. The second is a statement of correctness of symbolic evaluation. This states that the trace term resulting from symbolic evaluation of a term has the same denotation not only at the environment used for the symbolic evaluation but on a whole open set including it.  This, and Theorem~\ref{for_diff_corr}, are needed to prove the ordinary correctness of the ordinary evaluation relation in the case of differentiation as in that case ordinary evaluation proceeds by first symbolically differentiating and then applying our source code transformation. 
Finally Theorem~\ref{op-com} is the expected completeness theorem, that if the semantics of a term is defined, then both its ordinary and symbolic evaluation terminate. Its proof makes use of Proposition~\ref{ord-symb-con} which interpolates a symbolic evaluation inside any ordinary (non-boolean) evaluation.

\subsection{Operational Correctness}

\begin{theorem}[Reverse-mode differentiation] \label{for_diff_corr}
Suppose that $  \G[x\type T] \vdash C\type U$, $ \G\vdash V\type T$ and  $ \G\vdash W\type U$ (and so $\G \vdash \dRD{ x\type T.\,C}{ V}{W}\type T$).
Then, for any $\gamma \in \den{\G}$, we have:  
\[ \den{\newcalRprop{ x \type T}{C}{V}{W}{}}\gamma \simeq \den{\dRD{x\type T.\,C}{ V}{W}}\gamma   \]
\end{theorem}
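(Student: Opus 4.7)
The plan is to prove the theorem by structural induction on the trace term $C$, matching the syntactic definition in Figure~\ref{for-diff} clause by clause against the denotational definition $\den{\dRD{x:T.C}{V}{W}}\gamma \simeq \myd^r_{\den{V}\gamma}(a \mapsto \den{C}\gamma[a/x])(\den{W}\gamma)$. Throughout, I would freely use the side conditions on bound and free variables from Figure~\ref{for-diff} (notably $x \notin \FV(V,W)$ and the freshness conditions on let-bound variables), together with the typing rule of Proposition~\ref{symb-diff-type} to make sure every sub-term I differentiate is well-typed in the right environment.

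The base cases are almost immediate: for $C = y$, the function $a \mapsto \gamma[a/x](y)$ is either the identity (when $y=x$, giving $\myd^r = \text{id}$, matching $W$) or constant (when $y \neq x$, giving $\myd^r = \bf 0$ by Fact~\ref{fact}, matching $0_T$); the cases $C = r$ and $C = \ast$ are constant-function cases handled analogously. For $C = D + E$, I use that $\myd^r$ distributes over sums (a consequence of linearity of $\myd^r_{\bv}$ on its second argument together with the tupling formula applied to $+$). For $C = \op(D)$, the chain rule (\ref{rev-chain}) gives $\myd^r_{\bv}(\op \circ D)(\bw) = \myd^r_{\bv}(D)(\myd^r_{D(\bv)}(\op)(\bw))$, which matches the syntactic definition provided that $\den{\op_r}$ realizes $\myd^r\den{\op}$ (an assumption tacit in the semantics of the symbols $\op_r$); the inner let binding $y = W.\op_r(D)$ then threads this intermediate value into the inductive call on $D$. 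The pair and projection cases are similar: tupling is handled by the tupling formula for $\myd^r$ displayed just before Section~4.2, while for $\ttfst_{U,S}(D)$ one uses the chain rule together with $\myd^r_{\bv}(\pi_0)(\bw) = (\bw, 0)$ to explain the embedding $\tuple{W, 0_S}$ in the definition (and symmetrically for $\ttsnd$); the extra let-binding $y = D$ appearing in these clauses is only there to force definedness of $D$ at $V$, matching the strictness of the semantic reverse-mode derivative.

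The principal obstacle is the let case $C = \mylet y\type S \mybe D \myin E$, where the same differentiation variable $x$ appears in both $D$ and $E$, producing fan-in. Here I would first write the semantics of the let as a composition of $G := a \mapsto (a, h(a))$ (with $h(a) = \den{D}\gamma[a/x]$) with $F := (a,b) \mapsto \den{E}\gamma[a/x][b/y]$, so that the target derivative is $\myd^r_{\bv}(F \circ G)(\bw)$. Applying the reverse chain rule (\ref{rev-chain}), the double-variable formula (\ref{double-reverse}) to split $\myd^r_{(\bv,\bu)}(F)$, and the tupling formula to $G$, I obtain the sum
\[
\myd^r_{\bv}\bigl(a \mapsto \den{E}\gamma[a/x][\bu/y]\bigr)(\bw) \;+\; \myd^r_{\bv}\bigl(a \mapsto \den{D}\gamma[a/x]\bigr)\!\bigl(\myd^r_{\bu}(b \mapsto \den{E}\gamma[\bv/x][b/y])(\bw)\bigr),
\]
with $\bu = h(\bv)$. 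I would then compute the denotation of the syntactic definition: the outer lets bind $x \mapsto \bv$ and $y \mapsto \bu$, and applying the induction hypothesis to $\newcalR{x\type T}{E}{V}{W}$, $\newcalR{y\type S}{E}{y}{W}$, and $\newcalR{x\type T}{D}{V}{\ov y}$ (each time carefully using the freshness conditions $y, \ov y \notin \FV(V,D)$ and $y \notin \FV(W)$ to discard irrelevant environment updates) reproduces exactly the two summands above.

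The remaining work is bookkeeping. The only real risks are (i) making sure that the definedness conditions match on both sides (Fact~\ref{fact} plus Proposition~\ref{osupdiff} guarantee this once one verifies that $\den{C}$ has open domain, a property preserved by each clause), and (ii) checking that every sub-term of the syntactic differentiation has a well-formed denotation in the enriched environment, for which Proposition~\ref{symb-diff-type} suffices. Completeness cases (pair and projection) reduce to the same pattern as the let case but with simpler fan-in structure, so once the let case is dispatched the rest is essentially automatic.
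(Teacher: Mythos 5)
Your proposal is correct and follows essentially the same route as the paper's proof: structural induction on $C$, with the let case handled by factoring the denotation as the composition of $a \mapsto (a,\den{D}\gamma[a/x])$ with the two-variable semantics of $E$, then applying the reverse chain rule~(\ref{rev-chain}), the two-variable splitting~(\ref{double-reverse}), and the tupling formula, exactly as in the paper's representative case. The treatment of the base, sum, operation, pairing, and projection cases (including the tacit requirement that $\den{\op_r}$ realize $\myd^r\den{\op}$ and the role of the freshness side conditions) likewise matches the paper's argument.
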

\ap{\begin{proof} 
The proof is by structural induction on $C$. We give a representative case. Suppose  $C$ is $ \mylet   y\type U  \mybe D \myin E$. 
Set $\gamma_a = \gamma[a/x] \in \den{\G[x\type T]}$, for any $a \in \den{T}$, $\gamma_V = \gamma_{\den{V}\gamma}$, and  
$\gamma' = \gamma_V[\den{D}\gamma_V/y] \in \den{\G[x\type T][y \type U]}$. 
 Then
\[ \den{\newcalRfig{ x\type T } {C}{V}{W}}\gamma  \simeq  
\den{\newcalRfig{   x\type T  }{E} { V}{W}   +_T  (\mylet  \ov{y}\type  S\! \mybe\! \newcalRfig{ y\type S }{E} { y}{W} \! \myin\! \, \newcalRfig{   x\type T }{D}{V}{\ov{y}})
}\gamma'\]
with $x \notin \FV(V,W)$, $y \notin \FV(W)$, and $y,\ov{y}\notin \FV(V,D)$.
%
  %
We calculate: 
\[\hspace{-0pt}\begin{array}{lcl}

   \mydr_{\den{V} \gamma }(a \in \den{T} \mapsto \den{ \mylet   y\type U  \mybe D \myin E} \gamma_a) \den{W} \gamma \\
   
  \simeq \quad  \mydr_{\den{V} \gamma }(a \in \den{T} \mapsto  
                                                  \den{E}\gamma_a[\den{D}\gamma_a/y]  )\den{W} \gamma \\

 \simeq \quad  \mydr_{\den{V} \gamma }
  ((c \in \den{T \times U} \mapsto \den{E}\gamma_{\pi_0(c)}[\pi_1(c)/y]) 
   \; \circ 
   \tuple{a \in \den{T} \mapsto a , \; a \in \den{T} \mapsto \den{D}\gamma_a})\den{W} \gamma\\

 \simeq \quad  \mydr_{\den{V}\gamma}(\tuple{a \in \den{T} \mapsto a , \; a \in \den{T} \mapsto \den{D}\gamma_a})
   [ \\ \hspace{60pt}(\mydr_{\tuple{\den{V} \gamma , \den{D}\gamma_V}}
  (c \in \den{T \times U} \mapsto \den{E}\gamma_{\pi_0(c)}[\pi_1(c)/y]) )
\den{W} \gamma]\\

 \simeq \quad  \mydr_{\den{V}\gamma}(\tuple{a \in \den{T} \mapsto a , \; a \in \den{T} \mapsto \den{D}\gamma_a})
   [ \\ \hspace{54pt}\tuple{\mydr_{\den{V} \gamma}
                                      (a \in \den{T} \mapsto \den{E}\gamma_{a}[\den{D}\gamma_V/y]) , 
   \mydr_{ \den{D}\gamma_V}
                                     (b \in \den{U} \mapsto \den{E}\gamma_{\den{V} \gamma}[b/y])}
\den{W} \gamma]\\
   
 \simeq \quad  \mydr_{\den{V}\gamma}(\tuple{a \in \den{T} \mapsto a , \; a \in \den{T} \mapsto \den{D}\gamma_a})
   [ \\ \hspace{30pt}\tuple{\mydr_{\den{V} \gamma}
                                      (a \in \den{T} \mapsto \den{E}\gamma_{a}[\den{D}\gamma_V/y]) \den{W} \gamma,  \mydr_{ \den{D}\gamma_V}
                                     (b \in \den{U} \mapsto \den{E}\gamma_{\den{V} \gamma}[b/y])\den{W} \gamma}]\\

  \simeq \quad \mydr_{\den{V}\gamma'}(a \in \den{T} \mapsto a)  
                          [\mydr_{\den{V} \gamma}  (a \in \den{T} \mapsto \den{E}\gamma_{a}[\den{D}\gamma_V/y]) \den{W} \gamma]\\
     \hspace{35pt}  +  \; \mydr_{\den{V}\gamma'}(a \in \den{T} \mapsto \den{D}\gamma') 
                            [\mydr_{ \den{D}\gamma_V} (b \in \den{U} \mapsto \den{E}\gamma_{\den{V} \gamma}[b/y])\den{W} \gamma ]\\

 \simeq \quad  \mydr_{\den{V} \gamma}  (a \in \den{T} \mapsto \den{E}\gamma_{a}[\den{D}\gamma_V/y]) \den{W} \gamma\\
     \hspace{35pt}  +  \; \mydr_{\den{V}\gamma'}(a \in \den{T} \mapsto \den{D}\gamma') 
                            [\mydr_{ \den{D}\gamma_V} (b \in \den{U} \mapsto \den{E}\gamma_{\den{V} \gamma}[b/y])\den{W} \gamma ]
                            \\



 
 



\simeq   \quad \den{\newcalRfig{   x\type T  }{E} { V}{W}   +_T  (\mylet  \ov{y}\type  S\! \mybe\! \newcalRfig{ y\type S }{E} { y}{W} \! \myin\! \, \newcalRfig{   x\type T }{D}{V}{\ov{y}})
}\gamma'

\end{array}\]
Note the use of Equation (\ref{double-reverse}) in the the fourth step.
\end{proof}}
\POPLomit{ \begin{proof} 


We proceed by structural induction on $C$. It will prove convenient to write $\gamma_a$ for the value environment
$\gamma[\pi_0(a)/x_0, \ldots, \pi_{n-1}(a)/x_{n-1}]$, where $a \in \den{T}$. We may assume that no $x_i$ is in $\FV(V,W)$. We set: $\gamma_V = \gamma_{\den{V}\gamma}$.
\begin{enumerate}
\item Suppose that $C$ is a variable $y$. If $y$ is some $x_i$ we have:
\[\begin{array}{lcl}
\myd^r_{\den{V}\gamma} (a \in \den{T}  \mapsto \den{x_i}\gamma_a )\den{W}\gamma
&  \simeq  & \myd^r_{\den{V}\gamma} (\pi_i)\den{W}\gamma \\ 
&  \simeq & ( 0_{\den{T_0}}, \ldots, 0_{\den{T_{i-1}}},\den{W}\gamma, 0_{\den{T_{i+1}}}, \ldots, 0_{\den{T_{n-1}}})     \\ 
&  \simeq  &  \den{(0_{T_0},\ldots, 0_{T_{i-1}},W ,0_{T_{i + 1}},\ldots, 0_{T_{n-1}})}\gamma   \\
&  \simeq  &   \den{\newcalRprop{   \D \,  }{x_i}{V}{W}{}}\gamma
   \end{array}\]

 and if $y$ is no $x_i$ we have:
 \[\begin{array}{lcl}
\myd^r_{\den{V}\gamma} (a \in \den{T}  \mapsto \den{y}\gamma_a )\den{W}\gamma
&  \simeq &  \myd^r_{\den{V}\gamma} (a \in \den{T}  \mapsto \gamma(y))\den{W}\gamma \\ 
&  \simeq &  0_{\den{T}}\\
&  \simeq &  \den{0_T}\\
&  \simeq &   \den{\newcalRprop{   \D \,  }{y}{V}{W}{}}\gamma
   \end{array}\]

%

\item
The case where $C$ is a constant $r \in \R$ is trivial.

\item
Suppose that $C$ has the form $D + E$. We calculate:

\[\begin{array}{lcl} 
\myd^r_{\den{V}\gamma} (a \in \den{T}  \!\!\mapsto\!\! \den{D + E}\gamma_a )\den{W}\gamma
& \simeq & \mydr_{\den{V} \gamma }(a \in \den{T} \!\!\mapsto\!\! \den{D + E} \gamma_a) \den{W} \gamma \\
 & \simeq & \mydr_{\den{V} \gamma }(a \in \den{T} \!\!\mapsto\!\! \den{D} \gamma_a + \den{E} \gamma_a) \den{W} \gamma \\
 & \simeq & \mydr_{\den{V} \gamma }((a \in \den{T} \!\!\mapsto\!\! \den{D} \gamma_a) + (a \in \den{T} \!\!\mapsto\!\!  \den{E} \gamma_a)) \den{W} \gamma \\
 & \simeq & \mydr_{\den{V} \gamma }(a \in \den{T} \!\!\mapsto\!\! \den{D} \gamma_a) \den{W} \gamma  +
\mydr_{\den{V} \gamma }(a \in \den{T} \!\!\mapsto\!\! \den{ E} \gamma_a) \den{W} \gamma\\
& \simeq & \den{\newcalRprop{   \D \,  }{D}{V}{W}{}}\gamma + \den{\newcalRprop{   \D \,  }{E}{V}{W}{}}\gamma\\
& \simeq & \den{\newcalRprop{   \D \,  }{D}{V}{W}{} + \newcalRprop{   \D \,  }{E}{V}{W}{}}\gamma\\
& \simeq & \den{\newcalRprop{   \D \,  }{D + E}{V}{W}{}}\gamma\\
\end{array}\]

%

This calculation uses the fact that reverse differentiation commutes with the pointwise sums of continuously differentiable functions.



\item

Suppose that $C$ has the form $\op(D)$.
We have
\[\begin{array}{lcl}

\den{\newcalRprop{   \D \,  }{\op(D[x])}{V}{W}{}}\gamma & \simeq & 
                                                                              \den{ \newcalRfig{  \D \,  }{D}{V}{y} }\gamma''\\
\end{array}\]

where $y \notin \FV(W)$ and $\gamma' \simeq \gamma'[\den{W.\rop(D)}\gamma'/y])$.

and so then:
\[\begin{array}{lcl}
\mydr_{\den{V} \gamma }(a \in \den{T} \mapsto \den{\op(D)} \gamma_a) \den{W} \gamma 
  & \simeq&\mydr_{\den{V} \gamma }(a \in \den{T} \mapsto \den{\op}(\den{D} \gamma_a)) \den{W} \gamma \\
    & \simeq&\mydr_{\den{V} \gamma }(\den{\op} \circ (a \in \den{T} \mapsto \den{D} \gamma_a)) \den{W} \gamma \\
       & \simeq&\mydr_{\den{V} \gamma }(a \in \den{T} \mapsto \den{D} \gamma_a) 
       (\den{\rop}(\den{D} \gamma_V,\den{W} \gamma)) \\
            & \simeq&\mydr_{\den{V} \gamma }(a \in \den{T} \mapsto \den{D} \gamma_a) 
       (\den{W.\rop(D)} \gamma_V) \\
& \simeq &  \den{ \newcalR{  \D \,  }{D}{V}{y}}\gamma'\\
   & \simeq &    \den{\newcalRprop{  \D \,  }{\op(D)}{V}{W}{}}\gamma\end{array}
\]

%

Note that this calculation uses chain rule, as do other calculations below (particularly the next).

\item
Suppose that $C$ has the form $ \mylet   y\type U  \mybe D \myin E$. Set $\gamma' = \gamma_V[\den{D}\gamma_V/y] $. Then we have:
 \[ \den{ \newcalRfig{   \D \,  }
                      {\mylet   y \type S  \mybe D \myin E}{V}{W}}\gamma  =   
                       \hspace{-4pt} \begin{array}{l} 
  \den{ \mylet \ov{x} \type  T, \ov{y}\type  S \mybe \newcalRfig{   \D \,  , y\type S }{E} { \tuple{V,y	}}{W} \myin\\
        \quad    \ov{x} +_T  \newcalRfig{   \D \,  }{D}{V}{\ov{y}}}\gamma'
  \end{array}  \]
  with  $y, \mbox{all}\, x_i \notin \FV(V,W)$, $y,\ov{x},\ov{y}\notin \FV(D)$, $\ov{x},\ov{y}\notin \FV(V)$  and $y \neq \mbox{any}\, x_i$.
  
We now calculate: 
\[\hspace{-0pt}\begin{array}{lcl}

   \mydr_{\den{V} \gamma }(a \in \den{T} \mapsto \den{ \mylet   y\type U  \mybe D \myin E} \gamma_a) \den{W} \gamma \\\\
  \simeq \quad  \mydr_{\den{V} \gamma }(a \in \den{T} \mapsto  
                                                  \den{E}\gamma_a[\den{D}\gamma_a/y]  )\den{W} \gamma \\\\

\simeq \quad  \mydr_{\den{V} \gamma }
  ((b \in \den{T \times U} \mapsto \den{E}\gamma_{\pi_0(b)}[\pi_1(b)/y]) 
   \; \circ \\ \hspace{70pt} \tuple{a \in \den{T} \mapsto a , \; a \in \den{T} \mapsto \den{D}\gamma_a})\den{W} \gamma\\\\



 \simeq \quad  \mydr_{\den{V}\gamma}(\tuple{a \in \den{T} \mapsto a , \; a \in \den{T} \mapsto \den{D}\gamma_a})
   [ \\ \hspace{5pt}(\mydr_{\tuple{\den{V} \gamma , \den{D}\gamma_V}}
  (b \in \den{T \times U} \mapsto \den{E}\gamma_{\pi_0(b)}[\pi_1(b)/y]) )
\den{W} \gamma]\\\\

\simeq \quad   \mydr_{\den{V}\gamma}(\tuple{a \in \den{T} \mapsto a , \; a \in \den{T} \mapsto \den{D}\gamma_a})
   [ \\ \hspace{70pt}(\mydr_{\den{\tuple{V, \, y}}\gamma' }
  (b \in \den{T \times U} \mapsto \den{E}\gamma'_{\pi_0(b)}[\pi_1(b)/y]) )
\den{W} \gamma']\\\\

\simeq \quad   \mydr_{\den{V}\gamma'}(\tuple{a \in \den{T} \mapsto a , \; a \in \den{T} \mapsto \den{D}\gamma'})
   [ \den{\newcalR{  \D \,  , y\type U }{E}{ \tuple{V,y}}{W} }\gamma']\\\\

  \simeq \quad  \mydr_{\den{V}\gamma'}(\tuple{a \in \den{T} \mapsto a , \; a \in \den{T} \mapsto \den{D}\gamma'})
   c \\
   \hspace{200pt}(\mbox{setting $c = \den{\newcalR{  \D \, , y\type  U }{E}{ \tuple{V,y}}{W}}\gamma'$ })\\\\

   \simeq \quad \mydr_{\den{V}\gamma'}(a \in \den{T} \mapsto a)\pi_0(c)
  +  \mydr_{\den{V}\gamma'}(a \in \den{T} \mapsto \den{D}\gamma') \pi_1(c) \\\\

 \simeq \quad \den{\ov{x} +   \newcalR{ \D \, }{D}{V}{\ov{y}}}\gamma''
\quad (\mbox{setting $\gamma'' = \gamma'[\pi_0(c)/\ov{x}, \pi_1(c)/\ov{y}]$ })\\\\

   \end{array}
  \]

\item
Suppose that $C$ has the form $\tuple{D,E}_{U,S}$.
We calculate:
\[\begin{array}{lcl}
 \mydr_{\den{V} \gamma }(a \in \den{T} \mapsto \den{\tuple{D,E}} \gamma_a) \den{W} \gamma 
         & \simeq & \mydr_{\den{V} \gamma }(\tuple{a \in \den{T} \mapsto \den{D} \gamma_a,
a \in \den{T} \mapsto \den{E} \gamma_a    }) \den{W} \gamma \\
      & \simeq & \mydr_{\den{V} \gamma }(a \in \den{T} \mapsto \den{D} \gamma_a) \fst (\den{W} \gamma)
+\\

&&\qquad  \mydr_{\den{V} \gamma }(
a \in \den{T} \mapsto \den{E} \gamma_a) \snd(\den{W} \gamma) \\
  & \simeq & \mydr_{\den{V} \gamma }(a \in \den{T} \mapsto \den{D} \gamma_a) \den{\ttfst_{U,S}(W)} \gamma
+\\

&&\qquad  \mydr_{\den{V} \gamma }(
a \in \den{T} \mapsto \den{E} \gamma_a) \den{\ttsnd_{U,S}(W)} \gamma \\
      & \simeq &           \den{\newcalR{   \D \,   }{D}{V}{\ttfst_{U,S}(W)}}\gamma + \den{\newcalR{   \D \,   }{E}{V}{\ttsnd_{U,S}(W)}}\gamma\\
  & \simeq &    \den{\newcalRprop{   \D \,  }{\tuple{D,E}}{V}{W}{}}\gamma
\end{array}
\]

%
%


\item
The case where $C$ is $\ast$ is trivial.

%
%


\item
The cases where  $C$ has one of the forms $\ttfst_{S,U}(D)$ or $\ttsnd_{S,U}(D)$ are similar, and so we only consider the first. We have:

\[ \den{\newcalRprop{  \D \, }{\ttfst_{U,S}(D[x])}{V}{W}{}}\gamma \simeq 
   \left \{ \begin{array}{ll} \den{\newcalRfig{   \D \,  }{ D}{V}{\tuple{W,0_{S}}}} 
                                                     & (\den{D}\gamma_V\mydef) \\
                                        \myundef & (\den{D} \gamma_V \myundef)
\end{array}\right.\]
and so:
\[\begin{array}{lcl}
 \mydr_{\den{V} \gamma }(a \! \in \! \den{T} \!\mapsto\! \den{\ttfst_{U,S}(D)} \gamma_a) \den{W} \gamma 
  & \simeq&\mydr_{\den{V} \gamma }(a \in \den{T} \mapsto \pi_0(\den{D} \gamma_a)) \den{W} \gamma \\
    & \simeq&\mydr_{\den{V} \gamma }(\pi_0 \circ (a \in \den{T} \mapsto \den{D} \gamma_a) )\den{W} \gamma \\
       
       & \simeq&\mydr_{\den{V} \gamma }(a \in \den{T} \mapsto \den{D} \gamma_a) 
       ((\mydr_{\den{D} \gamma_V}\pi_0)\den{W} \gamma) \\

        & \simeq&\left \{\begin{array}{ll}\mydr_{\den{V} \gamma }(a \in \den{T} \mapsto \den{D} \gamma_a) 
       (\tuple{\den{W} \gamma, 0})&  (\den{D} \gamma_V \mydef)\\
       \myundef  &  (\den{D} \gamma_V \myundef)
       \end{array}\right . \\
       

       
           & \simeq&\left \{\begin{array}{ll}\den{\newcalR{  \D \,  }{ D}{V}{\tuple{W,0_S}}}\gamma &  (\den{D} \gamma_V \mydef)\\
       \myundef  &  (\den{D} \gamma_V \myundef)
       \end{array}\right . \\

  & \simeq &    \den{\newcalRprop{  \D \, }{\ttfst_{U,S}(D[x])}{V}{W}{}}\gamma\end{array}
\]

\end{enumerate}

\end{proof}}


%
%

          \begin{theorem}[Operational correctness] \label{op-corr}
Suppose  that $\Phi\! \mid\! \G \! \vdash\!  M\! \type\! T$, $\vdash \varphi \type \Phi$, and $\vdash  \rho : \G$. Then:
\begin{enumerate}
\item \textbf{Operational semantics.}  
\[\varphi \mid \rho \vdash M \impe V \implies \den{M}\den{\varphi}\den{\rho} =  \den{V}\]
 (and similarly for boolean terms). 
\item 
\textbf{Symbolic operational semantics.} 
%
%
\[\begin{array}{l}\varphi \mid \rho \vdash M \imps C \implies 
\exists \, O \subseteq_{\mbox{open}} \den{\G}.\, 
\den{\rho} \in O \wedge
\forall \gamma \in O.\, \den{M}\den{\varphi}\gamma  \simeq  \den{C}\gamma
 \end{array}\] 
%

%
\end{enumerate}
 \end{theorem}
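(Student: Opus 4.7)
The plan is to prove parts~(1) and~(2) together by simultaneous induction on the sizes of derivations of all three evaluation judgments $\varphi \mid \rho \vdash M \impe V$, $\varphi \mid \rho \vdash B \impe V_{\bool}$, and $\varphi \mid \rho \vdash M \imps C$. Throughout I would use Proposition~\ref{type-safety} to propagate well-typedness and Lemmas~\ref{ctx-type}, \ref{ctxt-sem1}, \ref{ctxt-sem2} to handle evaluation contexts uniformly, so that each context rule reduces to its underlying redex case.

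For part~(1), most redex cases are routine because each operational rule in Figures~\ref{op-sem-vr}--\ref{op-sem-c} mirrors a clause of Figure~\ref{den-sems}, using compatibility requirements~(\ref{req1}) and~(\ref{req2}) for operation and predicate symbols. The $\myletrec$ and function-call rules are handled by unfolding the least fixed point defining $\den{\Cl}$; that fixed point matches the $\mu$-expression in the semantics of $\myletrec$ exactly because function bodies have no free ordinary variables (the restriction invoked by the theorem statement). The reverse-differentiation case is handled by first invoking part~(2) of the induction hypothesis on the symbolic derivation to obtain a trace term $C'$ with $\den{C'}\den{\rho} \simeq \den{\dRD{x \type T.\,N}{V}{W}}\den{\varphi}\den{\rho}$, and then invoking part~(1) on the ordinary evaluation of $C'$.

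For part~(2), I would maintain as invariant that an open neighborhood $O$ of $\den{\rho}$ in $\den{\G}$ can be chosen on which the equation $\den{M}\den{\varphi}\gamma \simeq \den{C}\gamma$ holds pointwise. For value rules and the syntactic-identity redexes ($V+W$, $\op(V)$, projections, pairing) I can take $O = \den{\G}$. For $\mylet$ and function call I pull the inductive neighborhood back along the smooth (hence continuous) maps $\gamma \mapsto \gamma[\den{V}\gamma/x]$ and $\gamma \mapsto \den{V}\gamma$. For conditionals I use that $\T$ is discrete, so $\den{B}\den{\varphi}^{-1}(V_{\bool})$ is open, allowing me to trim $O$ to a neighborhood on which the guard is constant. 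The symbolic context rule $\varphi \mid \rho \vdash E[R] \imps \mylet x \type T_V \mybe C \myin D$ is handled by composing the continuous pullback with the inductive neighborhood obtained for $E[x]$.

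The main obstacle is the symbolic differentiation redex. From $\varphi \mid \rho \vdash V \impe V'$ and $\varphi \mid \rho[V'/x] \vdash N \imps C$, the induction yields an open $O' \subseteq \den{\G}\dtimes\den{T}$ containing $(\den{\rho},\den{V'})$ on which $\den{N}\den{\varphi}\gamma' \simeq \den{C}\gamma'$. Using continuity of $\den{V}\den{\varphi}$, I shrink to an open box $O_1 \times O_2 \subseteq O'$ with $\den{\rho} \in O_1$, $\den{V'} \in O_2$, and $\den{V}\den{\varphi}(O_1) \subseteq O_2$. For each $\gamma \in O_1$ the partial functions $a \mapsto \den{N}\den{\varphi}(\gamma[a/x])$ and $a \mapsto \den{C}(\gamma[a/x])$ then agree on the open set $O_2$ containing $\den{V}\gamma$, so by Proposition~\ref{osupdiff}(1) their reverse-mode derivatives agree at $\den{V}\gamma$, giving $\den{\dRD{x \type T.\,N}{V}{W}}\den{\varphi}\gamma \simeq \den{\dRD{x \type T.\,C}{V}{W}}\gamma$. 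Finally Theorem~\ref{for_diff_corr} identifies the right-hand side with $\den{\newcalRprop{x \type T}{C}{V}{W}{}}\gamma$, closing the case on $O_1$. The delicate bookkeeping here is keeping the openness invariant pristine through nested differentiations, which is what forced part~(2) to be phrased with an open neighborhood in the first place rather than pointwise equality.
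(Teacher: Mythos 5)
Your proposal matches the paper's own proof in all essentials: a mutual induction on the sizes of the derivations of the three evaluation judgments, with contexts dispatched via Lemmas~\ref{ctx-type}, \ref{ctxt-sem1} and \ref{ctxt-sem2}, open neighborhoods pulled back along the continuous environment-update maps for $\mathtt{let}$, application, and the symbolic context rule, openness of the guard's preimage in the discrete space $\T$ for branching, and, for the differentiation redex, agreement of $\den{N}$ and $\den{C}$ on an open set yielding agreement of reverse-mode derivatives via Proposition~\ref{osupdiff}(1) followed by Theorem~\ref{for_diff_corr}. Your explicit shrinking to a box $O_1 \times O_2$ with $\den{V}\den{\varphi}(O_1) \subseteq O_2$ is just a more carefully bookkept rendering of the step the paper performs when it relates the neighborhood of $\den{\rho[V'/x]}$ in $\den{\G[x\type T]}$ to the required neighborhood of $\den{\rho}$ in $\den{\G}$.
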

 \ap{\begin{proof} The two parts are proved by mutual induction on the size of the proofs that establish the given operational relations, and by cases on the form of $M$. As an example case of the second part, suppose $M$ has the form $ \mylet   x:T  \mybe  V \myin L$. Then for some  $D$ and $V'$ we have a smaller proof of
 $ \varphi \mid \rho[V'/x]  \vdash L \imps D$, where $V' = \rho(V)$,
and $C$ has the form $\mylet   x:T  \mybe V \myin D$.

By the induction hypothesis there is an open set $O$ such that $\den{\rho[V'/x]} \in O$ and, for all $\delta \in O$, we have $\den{L}\den{\varphi}\delta  \simeq  \den{D}\delta$.
Set  $\theta = \gamma \in \den{\Gamma} \mapsto \gamma[\den{V}\gamma/x]$.
As $\theta$ is continuous, $O' \eqdef \theta^{-1}(O)$ is open. We show it is the required open set. 
\begin{itemize}
\item[-] First, $\den{\rho} \in O'$ as we have: 
$\theta(\den{\rho})  =   \den{\rho}[\den{V}\den{\rho}/x]   =  \den{\rho}[\den{V'}/x]   =  \den{\rho[V'/x]}  \in   O$.
\item[-]  Second, for any $\gamma \in O'$ we have:
\[\begin{array}{lcll}
\den{\mylet   x:T  \mybe  V \myin L}\den{\varphi}{\gamma} 
& \simeq &  \den{L}\den{\varphi}\gamma[\den{V}\gamma/x] \\ 
& \simeq  & \den{D}\gamma[\den{V}\gamma/x] & (\mbox{as } \theta(\gamma) \in O)\\
  &\simeq &  \den{\mylet   x:T  \mybe  V \myin D}{\gamma} \\
\end{array}\]
\end{itemize}
 \end{proof}}
\POPLomit{ \begin{proof} We prove the two parts by a mutual induction on the size of the proofs that establish the given operational relations, and by cases on the form of the term $M$.
\begin{enumerate}
\item 
Here all cases are standard except that for differentiation. So we just give the proofs for differentiation and for a few illustrative standard cases:
\begin{enumerate}

\item 

 Suppose that $M$ has the form $\mylet   x:T  \mybe V \myin N$. 
  Then there is a  smaller proof of  $\varphi \mid \rho[V'/x]  \vdash N \impe W$, where $V' = \rho(V)$.
So, by the induction hypothesis we have 
$\den{N}\den{\varphi}\den{\rho[V'/x]} \simeq \den{W}$
and we calculate:
\[\begin{array}{lcl}
\den{\mylet   x:T  \mybe V \myin N}\den{\varphi}\den{\rho} 
& \simeq &        \den{N}\den{\varphi}\den{\rho}[\den{V}\den{\varphi}\den{\rho}/x]\\
& \simeq &        \den{N}\den{\varphi}\den{\rho}[\den{V'}/x]\\
& \simeq &        \den{N}\den{\varphi}\den{\rho[V'/x]}\\
& \simeq &         \den{W}
\end{array}\]


\item  Suppose that $M$ has the form $\myletrec  f(x:T): U \mybe N \myin L$. Then there is a smaller proof of $\varphi[\fclo{\varphi}{f(x:T):  U.\, N}/f] \mid \rho  \vdash L  \impe  V$.
So, by the induction hypothesis, we have 
\[\den{L}\den{\varphi[\fclo{\varphi}{f(x:T):  U.\, N}/f]}\den{\rho} \simeq \den{V}\]

and we calculate:
\[\begin{array}{lcl}
 \den{\myletrec f(x: T):U \mybe N \myin L}\den{\varphi} \den{\rho}  
   & \simeq  &   \den{L}\den{\varphi}[\mu \alpha.\, \lambda a.\, \den{N}\den{\varphi}[\alpha/f] \{x \mapsto a\}/f]\den{\rho}\\
   & \simeq  & \den{L}\den{\varphi[\fclo{\varphi}{f(x:T):  U.\, N}/f]}\den{\rho} \\
    &  \simeq & \den{V}
    \end{array}\]
\item

Suppose $M$ has the form $f(V)$.
Then there is a smaller proof of 
\[\varphi'[\varphi(f)/f]\mid \{x \mapsto V'\}\vdash L \impe W\]
 where $V' = \rho(V)$ and $\varphi(f) = \fclo{\varphi'}{f(x:T):  U.\, L})$.
So, by the induction hypothesis, we have:
\[\den{L}\den{\varphi'[\varphi(f)/f]}\den{\{x \mapsto V'\}} \simeq \den{W}
\]

and we calculate:
\[\begin{array}{lcl}
\den{f(V)}\den{\varphi}\den{\rho} & \simeq & \den{\varphi}(f)(\den{V}\den{\varphi}\den{\rho})\\
      & \simeq &  (\mu \alpha.\,\lambda a.\, \den{L}\den{\varphi'}[\alpha/f]\{x \mapsto a\})
                       (\den{V'})\\
      & \simeq &  (\lambda a.\, \den{L}\den{\varphi'}[\den{\varphi(f)}/f]\{x \mapsto a\})
                       (\den{V'})\\
      & \simeq &  \den{L}\den{\varphi'}[\den{\varphi(f)}/f]\{x \mapsto \den{V'}\}\\
      & \simeq &  \den{L}\den{\varphi'[\varphi(f)/f]}\den{\{x \mapsto V'\}}\\
      & \simeq & \den{W}      
 \end{array}\]

%

\item Suppose next that $M$ has the form  $\dRD{x:T.\,P}{V}{W}$. Then there is are smaller proofs of
$\varphi \mid \rho  \vdash \dRD{x:T.\,P}{ V}{W} \imps C$,
and
$\varphi \mid \rho  \vdash C \impe X$.
So, by the induction hypothesis,  we have that 
$\den{C}\den{\varphi}\den{\rho} \simeq \den{X}$
and that there is an open set $O$, with $\den{\rho} \in O$, such that, for all $\gamma \in O$ we have:
$\den{\dRD{x:T.\,P}{ V}{W}}\den{\varphi}\gamma \simeq \den{C}\gamma$
Putting these together, we have $\den{\dRD{x:T.\,P}{ V}{W}}\den{\varphi}\den{\gamma} \simeq  \den{X}$, as required.


\item
 Suppose  $M$ has the form $E[R]$. Then there are smaller proofs of $\varphi \mid \rho \vdash R  \impe V$ and 
 $ \varphi \mid \rho[V/x] \vdash E[x] \impe W$ with $E$ nontrivial and $x \notin  \Dom(\rho)$ (and so $x \notin \FV(R)$). By the induction hypothesis, we have $\den{R}\den{\varphi}\den{\rho} = \den{V}$ and $\den{E[x]}\den{\varphi}\den{\rho[\den{V}/x]}  = \den{W} $. From the first of these, using part (1) of Lemma~\ref{ctxt-sem1} we have 
 $\den{E[R]}\den{\varphi}\den{\rho} = \den{E[V]}\den{\varphi}\den{\rho} $, and the conclusion follows using the second of these and part (1) of Lemma~\ref{ctxt-sem2}. 
 \item Suppose  $M$ has the form $E[R_{\bool}]$. Here one uses part (2) of Lemma~\ref{ctxt-sem1}.

  \item Turning to boolean terms, the cases where $B$ is $\true$ or $\false$ are trivial. 
  Suppose instead that $B$ has the form $\pred(V)$ and 
$\ev(\pred,V') = \true$, where $V' = \rho(V)$ (the case where it is $\false$ is similar).
Then, $\varphi \mid \rho \vdash B\impe \true$, and as 
$\ev(\pred,V') = \true$, by Equation~\ref{req2} we have $\den{\pred}(\den{V'}) = \dtrue$ and so $\den{B}\den{\varphi}\den{\rho} = \den{\true}$.

 \item For the first evaluation context case, $B$ has the form $E_{\bool}[R]$. The proof in this case is similar to that of case (e), but now using parts (3)  of Lemma~\ref{ctxt-sem1} and (2) of Lemma~\ref{ctxt-sem2}. For the second, $B$ has the form $E_{\bool}[R_{\bool}]$; here one uses part (4) of Lemma~\ref{ctxt-sem1}.

\end{enumerate}

\item
\begin{enumerate}
 \item 
 
 The result is evident in the cases where $M$ has any of the forms $V$, $V + W$, $\op(V)$, $\ttfst_{T,U}(V)$, or $\ttfst_{T,U}(V)$, since then $M$ and $C$ are identical.  It is also evident if $M$ has the form $\myif \true \mythen L \myelse N$, since then $\den{M} = \den{L}$. A similar remark holds if $M$ has the form 
 $\myif \false \mythen L \myelse N$.

   \item 
   Suppose  next that $M$ has the form $ \mylet   x:T  \mybe  V \myin L$. Then for some  $D$, and $V'$ we have a smaller proof of
 $ \varphi \mid \rho[V'/x]  \vdash L \imps D$, where $V' = \rho(V)$,
and $C$ has the form $ \mylet   x:T  \mybe V \myin D$.

By the induction hypothesis there is an open set $O$ such that:

\begin{itemize}
\item[-] $\den{\rho[V'/x]} \in O$ and, for all $  \delta \,   \in O$, we have $\den{L}\den{\varphi}  \delta \,   \simeq \den{D}  \delta \,  $
\end{itemize}

Now consider the function $\theta \eqdef \gamma \in \den{ \G} \mapsto \gamma[\den{V}\gamma/x]$. As this is continuous, the set $O' \eqdef \theta^{-1}(O)$ is open. We show that   
$O' $
 is the required open set. 
\begin{itemize}
\item[-] First, $\den{\rho} \in O'$ as 
we have:
\[\theta(\den{\rho}) =  \den{\rho}[\den{V}\den{\rho}/x]   =  \den{\rho}[\den{V'}/x] = \den{\rho[V'/x]} \in O\]
\item[-]  Second, for any $\gamma \in O'$ we have:
\[\begin{array}{lcl}
\den{\mylet   x:T  \mybe  V \myin L}\den{\varphi}{\gamma} & \simeq & \den{L}\den{\varphi}\gamma[\den{V}\gamma/x]\\
                                                                         & \simeq & \den{D}\gamma[\den{V}\gamma/x]\\
                                                                         & \simeq &  \den{\mylet   x:T  \mybe  V \myin D}{\gamma} \\
\end{array}\]
with the second equality holding as $\theta(\gamma) \in O$.

  \end{itemize}
%
%
%
%
%

\item   
Suppose next that $M$ has the form $\myletrec f(x: T):U \mybe N \myin L$.  Then there is a smaller proof of $\varphi[\fclo{\varphi}{f(x:T):  U.\, N}/f] \mid \rho  \vdash L  \imps  C$. So, by the induction hypothesis, there is an open set $O$ with $\den{\rho} \in O$ such that, for all $\gamma \in O$, we have 
\[\den{L}\den{\varphi[\fclo{\varphi}{f(x:T):  U.\, N}/f]}\gamma \simeq \den{C}\gamma\]

But then we have:
\[\begin{array}{lcl}\den{\myletrec f(x: T):U \mybe N \myin L}\den{\varphi}\gamma\\

\hspace{100pt} \simeq \; \den{L}\den{\varphi}[(\mu \alpha.\, \lambda a: \den{T}.\, \den{N}\den{\varphi}[\alpha/f]\{x \mapsto a\})/f]\gamma\\
\hspace{100pt}\simeq  \;\den{L}\den{\varphi[\fclo{\varphi}{f(x:T):  U.\, N}/f]}\gamma\\
\hspace{100pt}\simeq \; \den{C}\gamma
\end{array}\]
and so $O$ is the required open set.

\item   Suppose next that $M$ has the form $f(V)$. Then  the judgement

\[\varphi'[\varphi(f)/f]\mid \{x \mapsto V' \} \vdash L \imps C\]
 has a smaller proof, where $V' = \rho(V)$ and   $\varphi(f) = \fclo{\varphi'}{f(x:T):  U.\, L})$.

By the induction hypothesis there is an open set $O$ such that:
\begin{itemize}
%
\item[-] $\den{\{x \mapsto V' \}}\in O$ and, for all $\gamma' \in O$, we have:
\[ \den{L}\den{\varphi'[\varphi(f)/f]}\gamma' \simeq \den{C}\gamma' \]
\end{itemize}

We calculate, for any $\gamma \in \den{ \G}$, that:
\[\begin{array}{lcl}
\den{f(V)}\den{\varphi}\gamma & \simeq & \den{\varphi}(f)(\den{V}\gamma)\\
      & \simeq &  (\mu \alpha.\,\lambda a.\, \den{L}\den{\varphi'}[\alpha/f]\{x \mapsto a\})
                       (\den{V}\gamma)\\
      & \simeq &  (\lambda a.\, \den{L}\den{\varphi'}[\den{\varphi(f)}/f]\{x \mapsto a\})
                       (\den{V}\gamma)\\
      & \simeq &  \den{L}\den{\varphi'}[\den{\varphi(f)}/f]\{x \mapsto \den{V}\gamma\}\\
      & \simeq &  \den{L}\den{\varphi'[\varphi(f)/f]}\{x \mapsto \den{V}\gamma\}\\\end{array}\]

Let $\beta$ be the function $\gamma \mapsto \{x \mapsto \den{V}\gamma\}$. We claim the relevant open set is 
$O' \eqdef \beta^{-1}(O)$.

First $\den{\rho} \in O'$ as:
\[\beta(\den{\rho}) = \{x \mapsto \den{V}\den{\rho}\}
                             = \den{\{x \mapsto V' \}}\]
and, second, continuing the above calculation we have:
\[\begin{array}{lcl}
\den{f(x)}\den{\varphi}\gamma & \simeq & 
            \den{L}\den{\varphi'[\varphi(f)/f]}  \{x \mapsto \den{V}\gamma\}  \\
& \simeq & 
            \den{C}\{x \mapsto \den{V}\gamma\} \\
& \simeq & 
            \den{C}\gamma\\
\end{array}\]
where the second equality comes from the induction hypothesis, and the third holds as  $FV(C) \subseteq \FV(L) \subseteq \{x\}$, this last using Lemma~\ref{general}.

\item  
Suppose next that $M$ has the form  $\dRD{x:T.\,P}{V}{W}$. Then, for some $F$, we have a smaller proof of
$ \varphi \mid \rho[V'/x ]\vdash P \imps F$, where $V' = \rho(V)$, and also
\[C = \newcalRprop{x:T }{F}{V}{W}{}\]

By the induction hypothesis, noting that $\G[T/x] = \G$,  we then have:
\begin{itemize}
\item[-] There is an open set $O \subseteq \den{\G} $, with $\den{\rho} \in O$ and, for all $\gamma' \in O$ we have:
              \[\den{P}\den{\varphi}\gamma' \simeq \den{F}\gamma' \]
\end{itemize}

From the last of these we have, for all $\gamma' \in O$ that:
              \[\mydr_{\gamma'}(\gamma'  \in  \den{\G}  \mapsto \den{P}\den{\varphi}\gamma') =  \mydr_{\gamma'}(\gamma' \in  \den{\G}  \mapsto \den{F}\gamma') \]
and so, for all $\gamma \in \den{ \G}$ and $a \in \den{T}$ such that $\gamma[a/x] \in O$ we have:
              \[\begin{array}{lcl}
                    \mydr_{a}(a \in \den{T} \mapsto \den{P}\den{\varphi}\gamma[a/x] ) & =  & \pi_x \circ \mydr_{\gamma[a/x]}(\gamma' \mapsto \den{P}\den{\varphi}\gamma') \\
                                                                                                             & =  & \pi_x \circ  \mydr_{\gamma[a/x]}(\gamma' \mapsto \den{F}\gamma') \\
                                                                                                             & =  &   \mydr_{a}(a \in \den{T} \mapsto \den{F}\gamma[a/x] )\end{array}\]
where by $\pi_x$ we mean the ``environment access" function $\pi_x(\gamma') \eqdef \gamma'(x)$.

We claim the appropriate open set is $O$.
For, first, $\den{\rho} \in O$
and, second, for any $\gamma \in O$ we have:
\[\begin{array}{lcll}
\den{\dRD{x:T.\,P}{ V}{W}}\den{\varphi}\gamma & \simeq & \mydr_{\den{V}\den{\varphi}\gamma}(a \in \den{T}   \mapsto \den{P}\den{\varphi}\gamma[a/x])\den{W}\den{\varphi}\gamma\\
                                                                           & \simeq & \mydr_{\den{V}\gamma}(a \in \den{T}   \mapsto \den{P}\den{\varphi}\gamma[a/x])\den{W}\gamma 
                                                                                            & \\
                                                                           & \simeq &   \mydr_{\den{V}\gamma}(a \in \den{T} \mapsto \den{F}\gamma[a/x] ) \den{W}\gamma
                                                                                              & (\mbox{as $\gamma \in O$})\\
                                                                            & \simeq & \den{\dRD{x:T.\,F}{ V}{W}}\den{\varphi}\gamma\\
                                                                           & \simeq &  \den{\newcalRprop{ x :T}{F}{V}{W}{}}(\gamma)
                                                                                              & (\mbox{by Theorem~\ref{for_diff_corr}})\\
                                                                           & \simeq &  C
\end{array}\]

 \item Suppose that $M$ has the form $E[R]$ with $E$ nontrivial. Then there are smaller proofs of 
 $\varphi \mid \rho \vdash R \imps C$,
$\varphi \mid \rho \vdash C \impe V$, and
$\varphi \mid \rho[V/x] \vdash E[x] \imps D$, with $x \notin  \Dom(\rho)$, and $E[R] \imps \mylet x:T_V \mybe C \myin D$.
                       
By induction hypothesis  $\den{C}\den{\rho} = \den{V}\den{\rho}$ and there are open sets $O$ and $O'$ containing $\den{\rho}$ and $\den{\rho[V/x]}$ respectively, such that, for any $\gamma \in O$ we have 
$\den{R}\den{\varphi}\gamma \simeq \den{D}\gamma$ and for any      
 $\gamma' \in O'$ we have 
 $\den{E[x]}\den{\varphi}\gamma' \simeq \den{D}\gamma'$. 

Set
\[O'' = O \cap \{\gamma \mid \gamma[\den{C}\gamma/x] \in O'\}\]
 This defines an open set as  $\den{C}\gamma'$ is a continuous function of $\gamma'$. We claim it is the required open set. First, as  
 $\den{C}\gamma = \den{V}\gamma$ we have $\den{\rho} \in O''$.
 
 Second, for any $\gamma \in O''$ (and so $\in O$) we have  $\den{R}\den{\varphi}\gamma \simeq \den{C}\gamma$ and so, by part (1) of Lemma~\ref{ctxt-sem1}, 
 $\den{E[R]}\den{\varphi}\gamma \simeq \den{E[C]}\gamma$. We then calculate, for any $\gamma \in O''$, that:
 \[\begin{array}{lcl}
 \den{\mylet x:T_V \mybe C \myin D}\gamma & \simeq & \den{D}\gamma[\den{C}\gamma/x]\\
                                                                        & \simeq &  \den{E[x]}\den{\varphi}\gamma[\den{C}\gamma/x]\\
                                                                        & \simeq &  \den{E[C]}\den{\varphi}\gamma\\
                                                                        & \simeq &  \den{E[R]}\den{\varphi}\gamma\\
 \end{array}\]
 (In this calculation the second equality follows from the inductive hypothesis and the fact that $\gamma[\den{C}\gamma/x] \in O'$, and the third equality follows from part (1) of Lemma~\ref{ctxt-sem2}.)

\item 
Suppose finally that $M$ has the form $E[R_{\bool}]$. Then there are smaller proofs of $\varphi \mid \rho \vdash R_{\bool} \impe V_{\bool}$ and $\varphi \mid \rho \vdash E[V_{\bool}] \imps C$. By the induction hypothesis  $ \den{R_{\bool}}\den{\varphi} \den{\rho} = \den{V_{\bool}}$ and 
 there is an open set $O$ containing $\den{\rho}$ such that 
$\den{E[V_{\bool}]}\den{\varphi}\gamma \simeq \den{C}\gamma$, for  $\gamma \in O'$.

As $ \den{R_{\bool}}\den{\varphi}\gamma$ is continuous in $\gamma$ and $\{\dtrue\}$ is open in $\T$, the set $O'$ of $\gamma$ such that 
$ \den{R_{\bool}}\den{\varphi} \gamma = \dtrue$ (which contains $\den{\rho}$) is open. So then $O \cap O'$ contains $\den{\rho}$ and, 
using part (2) of Lemma~\ref{ctxt-sem1}, we further have:
\[\den{E[R_{\bool}]}\den{\varphi}\gamma  \simeq \den{E[V_{\bool}]}\den{\varphi}\gamma \simeq \den{C}\gamma\]
 for  $\gamma \in O \cap O'$.

\end{enumerate}

\end{enumerate}

\end{proof}}

The following corollary shows that our strategy of first symbolically reducing to produce a trace term,  then symbolically differentiating, is correct.

\mycomment{Martin: note the remark on our proof of this theorem.}

\begin{corollary} \label{M-cor}

Suppose that $ \Phi\mid \G[x\type T] \vdash M \type U$, $\G\vdash V\type T$, and $ \G\vdash W\type U$.
Then, for any $\vdash \varphi : \Phi$ and  $\vdash  \rho : \G$ we have:
\[\begin{array}{l} 
   \varphi \!\mid\! \rho[V\rho/x] \vdash  M \imps C \implies 
   \den{\dRD{x:T.\,M}{V}{W}}\den{\varphi}\den{\rho}  \simeq   \den{\dRD{x:T.\,C}{ V }{ W}}\den{\varphi}\den{\rho}
   \end{array} \]
  \end{corollary}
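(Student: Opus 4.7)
The plan is to combine part~(2) of Theorem~\ref{op-corr} with the locality of reverse-mode differentiation recorded in part~(1) of Proposition~\ref{osupdiff}. Write $V_\rho$ for $\rho(V)$, so $\den{V_\rho} = \den{V}\den{\rho}$ by the standard compositionality of the denotation on values. Applying Theorem~\ref{op-corr}(2) to the hypothesis $\varphi \mid \rho[V_\rho/x] \vdash M \imps C$ yields an open set $O \subseteq \den{\G[x\type T]} = \den{\G}\dtimes\den{T}$ that contains $\den{\rho[V_\rho/x]} = \den{\rho}[\den{V}\den{\rho}/x]$ and on which $\den{M}\den{\varphi}\gamma \simeq \den{C}\gamma$ for every $\gamma \in O$.

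Next, unfolding the $\mathtt{rd}$ clause of Figure~\ref{den-sems}, the two sides of the desired equality become $\mydr_{\den{V}\den{\rho}}(f)(\den{W}\den{\rho})$ and $\mydr_{\den{V}\den{\rho}}(g)(\den{W}\den{\rho})$, where $f,g \type \den{T} \pr \den{U}$ are the smooth partial functions $f(a) = \den{M}\den{\varphi}(\den{\rho}[a/x])$ and $g(a) = \den{C}(\den{\rho}[a/x])$. The continuous section $\iota \type \den{T} \to \den{\G}\dtimes\den{T}$ given by $\iota(a) = \den{\rho}[a/x]$ pulls $O$ back to an open neighborhood $O' = \iota^{-1}(O)$ of $\den{V}\den{\rho}$ in $\den{T}$, and by construction $f$ and $g$ coincide on $O'$.

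It therefore suffices to show $\mydr_{\den{V}\den{\rho}}(f) = \mydr_{\den{V}\den{\rho}}(g)$. Set $h = f\res O' = g\res O'$; this is an open-domain smooth partial function with $h \leq f$ and $h \leq g$. Part~(1) of Proposition~\ref{osupdiff} then delivers $(\mydr f)\res(\Dom(h) \dtimes \den{U}) = \mydr h = (\mydr g)\res(\Dom(h) \dtimes \den{U})$, and evaluating at $(\den{V}\den{\rho}, \den{W}\den{\rho})$ yields the required $\simeq$-equality (in the degenerate case $\den{V}\den{\rho}\notin\Dom(f)$ both reverse-mode derivatives are $\perp$, so $\simeq$ holds trivially). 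The main obstacle is really only the bookkeeping that equates $\den{\rho[V_\rho/x]}$ with $\den{\rho}[\den{V}\den{\rho}/x]$ and confirms that $\iota$ is continuous; all the differential content is supplied by Proposition~\ref{osupdiff}(1), so no fresh calculation is required.
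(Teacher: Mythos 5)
Your proposal is correct and follows essentially the same route as the paper's own proof: apply Theorem~\ref{op-corr}(2) to get an open neighbourhood on which $\den{M}\den{\varphi}$ and $\den{C}$ agree, pull it back along $a \mapsto \den{\rho}[a/x]$ to an open neighbourhood of $\den{V}\den{\rho}$ in $\den{T}$, and conclude that the two functions being differentiated have the same reverse-mode derivative there. The only difference is that you make the final locality step explicit via Proposition~\ref{osupdiff}(1), which the paper leaves implicit; this is a harmless (indeed welcome) elaboration, not a different argument.
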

\POPLomit{ \begin{proof} By part (2) of Theorem~\ref{op-corr} there is an $O \subseteq_{\mbox{open}} \den{\G}$ such that $\den{\rho[V\rho/x] } \in O$ and for all $\gamma \in O$, we have 
$ \den{M}\den{\varphi}\gamma  \simeq  \den{C}\gamma$.
Let  $O' \subseteq \den{T}$ be the $b \in \den{T}$ such that $\den{\rho}[b/x] \in O$. This defines an open set containing $\den{V}\den{\rho}$.
 
 Since $ \den{M}\den{\varphi}\gamma  \simeq  \den{C}\gamma$ for all $\gamma \in O$, the functions
\[a \in \den{T} \mapsto  \den{M}\den{\varphi}\den{\rho}[a/x]\] 
and 
\[a \in \den{T} \mapsto  \den{C}\den{\rho}[a/x]\]
are equal on the open set $O'$. So, as that set contains $\den{V}\den{\rho}$ the functions have the same reverse-mode derivative there, and the conclusion follows using the semantics 
of reverse-mode  differentiation.
\end{proof}}

\subsection{Operational Completeness}

 
We turn to proving operational completeness, that the evaluation, or symbolic evaluation, of a term terminates when it should, i.e., when its denotation is defined. For terms $M$ write 
 $\varphi \mid \rho \vdash M \myDef$ when, for some $V$,
 $\varphi \mid \rho \vdash M \Rightarrow V$  and, assuming $\varphi$ and $\rho$ known from the context, say that $M$ \emph{terminates}; we adopt similar  terminology  for boolean terms.

To prove operational completeness 
we use a standard  strategy: first proving operational completeness for an auxiliary ``approximation language'' in which recursive definitions are replaced by approximations to them, which we call \emph{limited recursive definitions}, and then lifting that result to the main language. Specifically we replace the syntactic form 
for recursive definitions of the main language
%
%
 by the family of syntactic forms:
\[\amyletrec{n} f(x: T):U \mybe M \myin N \qquad (n \in \N)\]
with the evident typing rule,
and make analogous consequential changes in the various definitions. 
%
%
%

In the definition of  function environments and closures, the clause for closures becomes:
\begin{itemize}
\item[-]  If $\FFV(M)\backslash \{f\} \subseteq  \Dom(\varphi)$, $\FV(M) \subseteq \{x\}$, and $n \in \N$, 
  then \[\tuple{n,\varphi, f,x,T,U,M} \] is a closure, written as: $\afclo{n}{\varphi}{f(x:T):  U.\, M}$.
\end{itemize}
The limited recursive definition redexes are 
$\!\!\amyletrec{n} f(x \type  T) \type  U \mybe M \myin N$;
%
their evaluation rules are  in Figure~\ref{osbr}.  There, and below, we write $\ldash$ for limited recursion language judgements.

 \begin{figure}[h]
  \vspace{7.5pt}
   \[\hspace{-5.5pt}\begin{array}{l}\frac{\varphi[\afclo{n}{\varphi}{f(x:T):  U.\, M}/f] \mid \rho \ldash N \impe V' }
                     {\varphi \mid \rho \ldash \amyletrec{n} f(x: T):U \mybe M \myin N \impe  V'}
            \\\\

           \frac{\varphi \mid \rho \ldash V \impe V' \;\;
            \varphi'[\afclo{n}{\varphi'}{f(x:T):  U.\, M}/f]\mid \{x \mapsto V'\} \ldash M \impe W}
                     {\varphi \mid \rho \ldash f(V) \impe  W}
            
        \quad  \ms{ \mbox{where   $ \varphi(f) = \afclo{n+1}{\varphi'}{f(x:T):  U.\, M}$}}\\\\
 \frac{\varphi[\afclo{n}{\varphi}{f(x:T):  U.\, M}/f] \mid \rho \ldash N \imps C }
                     {\varphi \mid \rho \ldash \amyletrec{n} f(x: T):U \mybe M \myin N \imps  C}
            \\\\

              \frac{\varphi \mid \rho \ldash V \impe V' \;\;
                       \varphi'[ \afclo{n}{\varphi'}{f(x:T):  U.\, M}/f]\mid \{x \mapsto V' \} \ldash M \imps C}
                     {\varphi \mid \rho \ldash f(V) \imps   \mylet x \type T \mybe V \myin C}
        \quad  \ms{ \mbox{where   $ \varphi(f) = \afclo{n+1}{\varphi'}{f(x:T):  U.\, M}$}}
            \end{array}\]
              \vspace{7.5pt}
\caption{Operational semantics of bounded recursion}
\label{osbr}
\end{figure}

 For the closure typing judgement $\Cl\type  T\rightarrow U$ 
 we substitute:
 \[\frac
 {\ldash \varphi: \Phi  \quad \Phi[f\type T \rightarrow U] \mid  x \type T \ldash M\type  U}
 {\ldash  \afclo{n}{\varphi}{f(x:T):  U.\, M}\,\type \, T \rightarrow U}\]
%




As regards  the denotational semantics,  
the clause for limited recursive definitions is:
\[  \begin{array}{l} \den{\amyletrec{n} f(x \type  T) \type U \mybe M \myin N}(\phi)(\gamma) \;\;   \simeq  \\
\hspace{80pt}  \den{N}(\phi[(\mu_n \alpha \type  \smooth{\den{T}}{\den{U}}.\, \lambda a \type  \den{T}.\, \den{M}(\phi[\alpha/f])\s a)/f]) \s (\gamma)
\end{array}\]
The results for the operational and denotational semantics carry over to the restricted setting, and we refer to them in the same way as we do to the unrestricted versions. 

\mylcomment{as above had wrong bracketing, check if this persists to proofs}

Relating the two languages, the \emph{$n$-th approximant $M^{(n)}$} of a  language term $M$ 
is obtained by replacing every recursive definition  in $M$ by an $n$-limited one (similarly for boolean terms) and 
$n$-th approximants of closures and function  environments are defined by structural recursion:
\[\begin{array}{lcl}\fclo{\varphi}{f(x\type T)\type   U.\, M}^{(n)} & =  & \afclo{n}{\varphi^{(n)}}{f(x\type T)\type   U.\, M^{(n)}}\end{array}\]
\[\{\ldots, f_i \mapsto \Cl_i, \ldots \}^{(n)} = \{\ldots, f_i \mapsto \Cl^{(n)}_i, \ldots\}\]


The terms $M^{(n)}$ can be defined by structural recursion; we just give one clause of the definition:
\[\begin{array}{l}(\!\myletrec f(x: T):U \mybe M \myin N)^{(n)} \;\; = \;\; 
\amyletrec{n} f(x: T):U \mybe M^{(n)} \myin N^{(n)}\end{array}\]
%

Approximation preserves typing judgments. 

Termination is proved for the approximation language  by structural induction via a suitable notion of \emph{computability}.
\begin{itemize}[leftmargin = 0.25cm]
\item[-]   A  closure \[\ldash  \afclo{n}{\varphi}{f(x:T):  U.\, M}: T \rightarrow U\] is computable iff 
$n = 0$ or $n > 0$ and, for all $\ldash V:T$ we have:
%

\[\begin{array}{l} \hspace{-6pt}\den{M}\den{(\varphi [\afclo{n-1}{\varphi}{f(x\type T)\type  U.\, M}/f]) \type \Phi_{\varphi}[f\type T \rightarrow U]}\den{V}\!\mydef  
  \implies  \\
 \hspace{40pt} \varphi[\afclo{n-1}{\varphi}{f(x\type  T)\type   U.\, M}/f] \!\mid\!\{ x \!\mapsto\!  V\} \ldash M \!\myDef\end{array}\]

 %
\item[-] A function environment $\ldash  \varphi:\Phi$ is computable iff $\ldash  \varphi(f): \Phi(f)$ is a computable closure, for every $f \in  \Dom(\varphi)$.
\item[-] A  term $\Phi \mid \G \ldash M: T$ is computable iff for every computable $\ldash  \varphi : \Phi$ and every $\vdash  \rho:\G$
%
 \[ \den{M}\den{\varphi}\den{\rho}\mydef\;\; \implies \;\;  \varphi \mid \rho \ldash M \myDef\]
 (and similarly for boolean terms).
 
 \end{itemize}
Strictly speaking, in the above we should say that it is the sequent $\Phi \mid \G \ldash M: T$ that is computable and similarly for closures and  function environments. \POPLomit{However we make sure below that any missing information can be supplied from the context.}


\POPLomit{\begin{lemma} \label{helpful}Suppose we have that $\vdash \varphi: \Phi$ and $\Phi[f\type T\rightarrow U] \mid x \type T \vdash M:U$, and 
also that $\varphi$ and $M$ are computable. Then, for all $n$,  the closure 
\[\vdash  \afclo{n}{\varphi}{f(x:T):  U.\, M}: T \rightarrow U\]
 is computable.
\end{lemma}}
\POPLomit{ \begin{proof}
 We prove this by induction on $n$. The result is immediate for $n=0$. For $n +1$,
 we have to show that for all $\vdash V:T$ we have:
 \[\begin{array}{l}\den{M}\den{\varphi[\afclo{n-1}{\varphi}{f(x\type T)\type  U.\, M}/f]}\{ x \mapsto \den{V}\}\mydef \\
   \hspace{50pt} \implies  \varphi[\afclo{n-1}{\varphi}{f(x\type T)\type  U.\, M}/f], \{ x \mapsto  V\} \vdash M \myDef
 \end{array}\]
 But as $\varphi$ and $\afclo{n-1}{\varphi}{f(x\type T):  U.\, M}$ are computable, this follows immediately from the computability of $M$. 
\end{proof}}

%
\begin{lemma} \label{comp}
\hspace{2em}
\begin{enumerate}
\item Every closure $\ldash  \afclo{n}{\varphi}{f(x:T):  U.\, M}: T \rightarrow U$ is computable.
\item Every function environment $\ldash  \varphi:\Phi$ is computable.
\item Every term $\Phi \mid \G \ldash M: T$ is computable.
\item Every  boolean term $\Phi \mid \G \ldash B$ is computable.
\end{enumerate}
\end{lemma}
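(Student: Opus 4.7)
\emph{Plan.} Part (2) unfolds by definition to ``every closure in $\varphi$ is computable,'' so it reduces to (1). I would prove (3) and (4) by simultaneous structural induction on $M$ and $B$, invoking (1) only in the single case that constructs a fresh closure (the $\amyletrec{n}$ case of (3)); part (1) itself I would prove by an inner induction on the bound $n$, so the overall well-founded structure is lexicographic in (outer syntactic position, $n$), with part (2) appealed to via the hypothesis on $\varphi$ in the definition of a computable term.

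Most cases of (3) and (4) are routine: unfold the denotational semantics, apply the IH to subterms to extract termination, and invoke the matching evaluation rule (for evaluation-context cases, first on the redex and then on the context with the resulting value substituted). The case $f(V)$ uses computability of $\varphi$: $\varphi(f)$ must have the form $\afclo{n+1}{\varphi'}{f(x:T):U.M'}$ with $n+1 \geq 1$ (since the $\mu_0$ approximant denotes $\bot$), and the definition of closure computability then supplies termination directly. For $\amyletrec{n} f(x:T):U \mybe M' \myin N$, the outer IH gives computability of $M'$ and $N$; applying (1) to the freshly formed closure---whose body $M'$ is a strict subterm---upgrades $\varphi$ to the extended computable function environment $\varphi[\afclo{n}{\varphi}{f(x:T):U.M'}/f]$, and computability of $N$ supplies termination of the whole. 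The inner induction on $n$ for part (1) is routine: $n=0$ is vacuous, and the step $n{+}1$ uses the inner IH together with computability of the body.

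The main obstacle is the $\dRD{x:T.N}{V}{W}$ case, which weaves together several earlier results. From definedness of the reverse derivative I would extract definedness of $\den{N}\den{\varphi}\den{\rho[V'/x]}$ with $V' = \rho(V)$; the IH for $N$ then yields ordinary termination, and operational interpolation (Proposition~\ref{ord-symb-con}) supplies a trace term $C$ with $\varphi \mid \rho[V'/x] \ldash N \imps C$. Hence the whole differential symbolically evaluates to $\newcalR{x\type T}{C}{V}{W}$. Combining Corollary~\ref{M-cor} and Theorem~\ref{for_diff_corr} gives that the denotation of this trace term at $\den{\rho}$ is defined, and an auxiliary lemma---proved by direct structural induction on trace terms, whose grammar excludes recursion, function application, conditionals, and differentiation---shows that every trace term with defined denotation ordinarily evaluates. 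Applying the evaluation rule for $\dRD$ then closes the case. The delicate point is that $\newcalR{x\type T}{C}{V}{W}$ is not a syntactic subterm of the original $\dRD$ expression, so its termination cannot be supplied by the main structural IH; this is precisely why the auxiliary lemma on trace terms must be established as an independent fact.
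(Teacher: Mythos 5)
Your proposal is correct and for the most part tracks the paper's own argument: computability as the inductive invariant, the $f(V)$ case discharged directly from computability of $\varphi$, part (1) via an inner induction on $n$ given computability of the body (the paper factors this out as a separate helper lemma), and the $\dRD{x:T.\,N}{V}{W}$ case via exactly the chain you describe (definedness of the derivative gives definedness of $\den{N}$ at $\den{\rho[V'/x]}$, the induction hypothesis gives termination, Proposition~\ref{ord-symb-con} gives a symbolic evaluation $N \imps C$, and Corollary~\ref{M-cor} together with Theorem~\ref{for_diff_corr} gives definedness of the resulting trace term's denotation). The one genuine divergence is how termination of $\newcalRprop{x:T}{C}{V}{W}{}$ is justified. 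The paper introduces no auxiliary lemma: its induction is on the lexicographic product of the number of occurrences of reverse-mode derivatives with term size, and since the symbolic derivative of a trace term contains no derivatives while the redex contains at least one, the main induction hypothesis applies to that trace term directly. Your alternative---a standalone lemma that every well-typed trace term with defined denotation terminates, proved by an induction confined to the derivative-free trace-term grammar---is equally sound and arguably cleaner, since it isolates the single non-structural step rather than threading a bespoke measure through the whole simultaneous induction over closures, environments, terms, and boolean terms. Two small caveats: your primary induction should be on term size (or proof size) rather than literally structural, because the evaluation-context step passes from $E[R]$ to $E[x]$, which is smaller but not a subterm; and your auxiliary lemma on trace terms needs the same size-based induction for the same reason. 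Neither affects correctness.
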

\POPLomit{ \begin{proof} 
The proof is by a simultaneous induction on the lexicographic product of the number of occurrences of reverse mode derivatives in   $\afclo{n}{\varphi}{f(x:T):  U.\, M}$, $\varphi$, or $M$ and their sizes.
\begin{enumerate}
\item  Suppose we have a closure 
\[\vdash  \afclo{n}{\varphi}{f(x:T):  U.\, M}: T \rightarrow U\] 
We have both
$\vdash \varphi: \Phi$ (for a unique $\Phi$) and $\Phi[f\type T\rightarrow U] \mid x \type T \vdash M:U$. By induction hypothesis both
$\varphi$ and $M$ are computable. The conclusion follows from Lemma~\ref{helpful}.


\item This is an immediate consequence of the first part.
\item Suppose we have a term $\Phi \mid \G \vdash M: T$. Choose a computable 
$\vdash  \varphi : \Phi$ and a $\vdash  \rho:\G$. We need to show that
 \[ \den{M}\den{\varphi}\den{\rho}\mydef\;\; \implies \;\;  \varphi, \rho \vdash M \myDef\]
 We proceed by cases according to the form of $M$, making use of Lemma~\ref{context-form-anal}. 

  \begin{enumerate}
  \item Suppose $M$ is a value. The result is then immediate. 
  \item

   Suppose $M$ has the form $E[R]$ for some context $E$ and redex $R$. We first assume that $E$ is trivial (i.e., that $E = [\;]$) and then we then further divide into cases according to the form of $R$:
        \begin{enumerate}
        \item Here $R$ has the form $V + W$. In this case both $\rho(V)$ and $\rho(W)$ are in $\R$ and so $R = V + W$ terminates.
        \item Here $R$ has the form $\op(V)$, and we have $\den{\op}\den{\rho(V)} \simeq \den{\op}(\den{V}\rho)\mydef$. Then by equation~(\ref{req1}) we have that $\ev(\op,V)\myDef$, and so $R$ terminates.
        \item Here $R$ has the form $\mylet   x:U  \mybe V \myin N$.
        Here $\den{N}\den{\varphi}\den{\rho}[\den{V}\den{\rho}/x] \mydef$ and $\den{N}\den{\varphi}\den{\rho[\rho(V)/x}\simeq  \den{N}\den{\varphi}\den{\rho}[\den{V}\den{\rho}/x] $. By the induction hypothesis $N$ is computable, so we have $\varphi,\rho[\rho(V)/x] \vdash N \myDef$, and so $\mylet   x:U  \mybe V \myin N$ terminates.
       \item  Here $R$ has one of the forms  $\ttfst_{U,S}(V)$ or $ \ttsnd_{U,S}(V)$, and, in both cases, $R$ terminates.
        \item Here $R$ is  either $\myif \true \mythen N \myelse L$ or $\myif \false \mythen N \myelse L$. We just consider the first possibility. Here  
 $ \den{R} = \den{N}$. So, by the induction hypothesis, $N$ terminates. So too, therefore, does 
 $R$.
        \item 
        
         Here $R$ has the form \[ \amyletrec{n} f(x: U):S \mybe N \myin L\] Here, we have 
        $\Phi \mid \G \vdash L: T$ and $\vdash  \afclo{n}{\varphi}{f(x:U): S.\, N}: T \rightarrow U$. By the induction hypothesis
        and Lemma~\ref{helpful}, we therefore have that both $L: T$ and $\vdash  \afclo{n}{\varphi}{f(x:U): S.\, N}$ are computable. We therefore  see that 
        $\varphi' = \varphi[ \afclo{n}{\varphi}{f(x:U): S.\, N}: T \rightarrow U/f]$ is computable and so that $\varphi' \mid \rho \vdash L \myDef$, and so that  $\varphi' \mid \rho \vdash  \amyletrec{n} f(x: U):S \mybe N \myin L  \myDef$.
        

        
        

        \item

Here $R$ has the form $f(V)$. 
Set  $\varphi(f) = \afclo{n}{\varphi'}{f(x:T):  U.\, L}$ and $V' = \rho(V)$.
We calculate:
\[\begin{array}{lcl}
\den{f(V)}\den{\varphi}\den{\rho} & \simeq & \den{\varphi}(f)(\den{V}\den{\varphi}\den{\rho})\\
      & \simeq &  (\mu_n \alpha.\,\lambda a.\, \den{L}\den{\varphi'}[\alpha/f]\{x \mapsto a\})
                       (\den{V'})   
 \end{array}\]
 Since $\den{f(V)}\den{\varphi}\den{\rho}\mydef$, we have $n > 0$.We may therefore further calculate:
 \[\begin{array}{l}

 (\mu_n \alpha.\,\lambda a.\, \den{L}\den{\varphi'}[\alpha/f]\{x \mapsto a\})
                       (\den{V'})\\
      \qquad \simeq  (\lambda a.\, \den{L}\den{\varphi'}[\den{\afclo{n-1}{\varphi'}{f(x:T):  U.\, L}}/f]\{x \mapsto a\})
                       (\den{V'})\\
      \qquad \simeq  \den{L}\den{\varphi'}[\den{\afclo{n-1}{\varphi'}{f(x:T):  U.\, L}}/f]\{x \mapsto \den{V'}\}\\
 \end{array}\]
 and  we now have
 $\den{L}\den{\varphi'[\afclo{n-1}{\varphi'}{f(x:T):  U.\, L}/f]}\den{\{x \mapsto V'\}}\mydef$, again using the fact that $\den{f(V)}\den{\varphi}\den{\rho}\mydef$.
But then we have 
\[\varphi'[\afclo{n-1}{\varphi'}{f(x:T):  U.\, M}/f]\mid \{x \mapsto V'\}  \vdash L\myDef\]
(since $\afclo{n}{\varphi'}{f(x:T):  U.\, L} = \varphi(f)$ is computable as $\varphi$ is) and we conclude that $R$ terminates.

        \item Here $R$ has the form $\dRD{x:U.\,N}{V}{W}$. Set $v = \den{\rho(V)}$ and $w = \den{\rho(W)}$.
        We have  $ \mydr_{v}(a \in \den{U} \mapsto \den{N}\den{\varphi}(\den{\rho}[a/x])) (w)\mydef$. It follows that $\den{N}\den{\varphi}(\den{\rho}[v/x])\mydef$.
        By the induction hypothesis we then have $\varphi \mid \rho[\rho(V)/x] \vdash N\myDef$, and so, by Proposition~\ref{ord-symb-con} for the restricted language, we  have  
        $\varphi \mid \rho[\rho(V)/x] \vdash N \imps C$, for some $C$.
        So $\varphi \mid \rho \vdash \dRD{x:U.\,N}{V}{W} \imps \newcalRprop{ x :U}{C}{V}{W}{}$. Noting that 
        $\newcalRprop{ x :U}{C}{V}{W}{}$ contains no derivatives but $R$ does, the induction hypothesis applies and we have
        $\varphi \mid \rho \vdash \newcalRprop{ x :U}{C}{V}{W}{}\myDef$. So  $R$ terminates.
        
      \end{enumerate}
Finally, suppose that $M$ has the form $E[R]$ with $E$ non-trivial. By part (1) of  Lemma~\ref{ctx-type}, for a unique type $U$ we have  $\Phi \mid \G \vdash R:U$ and 
$\Phi \mid \G[U/x] \vdash E[x]:T$, where $x \notin \FV(E)$. By Part (1) of Lemma~\ref{ctxt-sem2}
$\den{R}\den{\varphi}\den{\rho}$ exists, being, say, $v$, and $\den{E[x]}\den{\varphi}\den{\rho}[v/x]\mydef$. By the  induction hypothesis (recall that $E$ is non-trivial) we have 
$\varphi \mid \rho \vdash R \impe V$ for some $V$. By Theorem~\ref{op-corr}, part (1),  we have $v = \den{V}$. So, again using the induction hypothesis, we have $\varphi \mid \rho[V/x] \vdash E[x]\myDef$,
and we conclude that $M$ terminates, as required.

 \item Suppose $M$ has the form $E[R_{\bool}]$ for some context $E$ and boolean redex $R_{\bool}$.  
  In this case $R_{\bool}$ will have the form $\pred(V)$ and $E$ will be nontrivial. 
  By part (1) of  Lemma~\ref{ctx-type},  we have  $\Phi \mid \G \vdash R_{\bool}$ and 
$\Phi \mid \G \vdash E[B']:T$, whenever  $\Phi \mid \G \vdash B'$. 
  By Part (3) of Lemma~\ref{ctxt-sem2} $\den{R_{\bool}}\den{\varphi}\den{\rho}$ exists, being $b$, say. So, using the induction hypothesis, we see that  $\varphi \mid \rho \vdash R_{\bool} \impe V_{\bool}$, for some $V_{\bool}$.
By Theorem~\ref{op-corr}, part (1),  we have $b = \den{V_{\bool}}$. So, by Part (2) of Lemma~\ref{ctxt-sem1}
we have $\den{E[V_{\bool}]}\den{\varphi}\den{\rho}$ exists as $\den{E[R_{\bool}]}\den{\varphi}\den{\rho}$  does.
So, by the induction hypothesis we have  $\varphi \mid \rho \vdash E[V_{\bool}]\myDef$.
So  $\varphi \mid \rho \vdash E[R_{\bool}]\myDef$ as required.
 
     \end{enumerate}

\item Suppose we have a boolean term $\Phi \mid \G \vdash B$. Choose a computable 
$\vdash  \varphi : \Phi$ and a $\vdash  \rho:\G$. We need to show that
 \[ \den{B}\den{\varphi}\den{\rho}\mydef\;\; \implies \;\;  \varphi, \rho \vdash B \myDef\]
 We proceed by cases according to the form of $B$, making use of Lemma~\ref{context-form-anal}. 

  \begin{enumerate}
  \item Suppose $B$ is a value. The result is then immediate. 
  \item Suppose $B$ has the form $E_{\bool}[R]$ for some boolean evaluation context $E_{\bool}$ and redex $R$. Then $E_{\bool}$ is non-trivial.
  
  By part (3) of  Lemma~\ref{ctx-type}, for a unique type $U$ we have  $\Phi \mid \G \vdash R:U$ and 
$\Phi \mid \G[U/x] \vdash E_{\bool}[x]$, where $x \notin \FV(E_{\bool})$. By Part (2) of Lemma~\ref{ctxt-sem2},
$\den{R}\den{\varphi}\den{\rho}$ exists, being, say, $v$, and $\den{E_{\bool}[x]}\den{\varphi}\den{\rho}[v/x]\mydef$. By the  induction hypothesis (recall that $E_{\bool}$ is non-trivial) we have 
$\varphi \mid \rho \vdash R \impe V$ for some $V$. By Theorem~\ref{op-corr}, part (1),  we have $v = \den{V}$. So, again using the induction hypothesis, we have $\varphi \mid \rho[V/x] \vdash E_{\bool}[x]\myDef$,
and we conclude that $M$ terminates, as required.

  \item Suppose $B$ has the form  $E_{\bool}[R_{\bool}]$ for some boolean evaluation context $E_{\bool}$ and boolean redex $R_{\bool}$. Then $E_{\bool}$ is non-trivial. 
  By part (3) of  Lemma~\ref{ctx-type},  we have  $\Phi \mid \G \vdash R_{\bool}$ and 
$\Phi \mid \G \vdash E_{\bool}[B']$, whenever  $\Phi \mid \G \vdash B'$. 
  By Part (3) of Lemma~\ref{ctxt-sem2},
$\den{R_{\bool}}\den{\varphi}\den{\rho}$ exists, being $b$, say. So, using the induction hypothesis, we see that  $\varphi \mid \rho \vdash R_{\bool} \impe V_{\bool}$, for some $V_{\bool}$.
By Theorem~\ref{op-corr}, part (3),  we have $b = \den{V_{\bool}}$. So, by part (4) of Lemma~\ref{ctxt-sem1}, 
we have $\den{E_{\bool}[V_{\bool}]}\den{\varphi}\den{\rho}$ exists as $\den{E_{\bool}[R_{\bool}]}\den{\varphi}\den{\rho}$  does.
So, by the induction hypothesis, we have  $\varphi \mid \rho \vdash E_{\bool}[V_{\bool}]\myDef$.
So  $\varphi \mid \rho \vdash E_{\bool}[R_{\bool}]\myDef$ as required.

\end{enumerate}

\end{enumerate}
\end{proof}}


The next two lemmas enable us to lift completeness from the approximation language to the main one. The first lets us pass from semantic existence in the main language to semantic existence in the approximation language; the second allows us to pass in the opposite direction from termination in the approximation language to termination in the main one.
\myacomment{expand statement when doing long version!}
\begin{lemma} \label{supps}
For any  well-typed term $M$ of the main language we have:
%
%
\[\den{M} = \bigvee_{n \in \N}\den{M^{(n)}}\]
 and similarly for boolean terms, closures, and function environments.
%
\end{lemma}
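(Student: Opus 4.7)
The plan is to prove this by a simultaneous structural induction on terms, boolean terms, closures, and function environments, using only continuity of the semantic operations together with the standard characterisation of the least fixed point as the sup of its iterates.

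First I would dispense with all the cases in which the syntactic form is unchanged by taking approximants: values, constants, sums, operation and predicate applications, let-bindings, pairing and projections, conditionals, function applications, and reverse-mode differentiations. In each, the denotation of the whole term is built from the denotations of its subterms by a semantic operation (addition, the interpretation of $\op$, the conditional combinator, application, the reverse-mode derivative $\mydr$, and so on). Each such operation is continuous in every argument, and hence preserves sups of compatible chains; combined with the induction hypothesis on subterms, these cases reduce to routine verifications that sups commute with the relevant semantic operation.

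The genuinely new case is $\myletrec f(x\type T)\type U \mybe M \myin N$, whose $n$-th approximant is $\amyletrec{n} f(x\type T)\type U \mybe M^{(n)} \myin N^{(n)}$. Setting $G \eqdef \lambda \alpha.\,\lambda a.\,\den{M}(\phi[\alpha/f])\s a$ and $G_n \eqdef \lambda \alpha.\,\lambda a.\,\den{M^{(n)}}(\phi[\alpha/f])\s a$, the induction hypothesis on $M$ together with continuity of $\den{M^{(n)}}$ in its function-environment argument yields $G = \bigvee_n G_n$. Using that $\mu H = \bigvee_n \mu_n H$, that each $\mu_n$ is continuous in its functional argument, and that $\mu_n G_m$ is monotone in both $n$ and $m$, a diagonalisation gives $\mu G = \bigvee_n \mu_n G_n$. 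Composing with the induction hypothesis on $N$ and continuity of $\den{N^{(n)}}$ in $\phi$ then closes the case.

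The closure case $\fclo{\varphi}{f(x\type T)\type U.\, M}$ is analogous: its $n$-th approximant is $\afclo{n}{\varphi^{(n)}}{f(x\type T)\type U.\, M^{(n)}}$, and now the induction hypotheses on both $\varphi$ and $M$ supply approximating chains for $\den{\varphi}$ and, uniformly in $\alpha$, for $\lambda\alpha.\,\lambda a.\,\den{M}(\den{\varphi}[\alpha/f])\s a$, while the outer $\mu_n$ versus $\mu$ supplies the iterate chain. The function environment case follows componentwise from the closure case. The principal bookkeeping obstacle is precisely this diagonalisation: three monotone parameters (the iterate level, the term approximant, and the environment approximant) are being driven up simultaneously, and one must exploit monotonicity in each to collapse a nested triple sup onto the diagonal.
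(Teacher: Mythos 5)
Your proof is correct and takes essentially the same approach as the paper, whose own argument is just the remark that the result follows by induction on $M$ using the fact that least fixed points are the sups of their iterates (with the remaining parts following immediately). The diagonalisation bookkeeping you make explicit for the $\myletrecop$ and closure cases is precisely the content the paper leaves implicit.
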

\POPLomit{
\begin{proof}
The first part is proved by induction on $M$ using the fact that least fixed points are the sup of their approximants.  The other two parts follows immediately.
\end{proof}
}

\begin{lemma} \label{going-up} For any term $M$ of the main language we have:
\[\varphi^{(n)} \!\mid \!\rho \ldash M^{(n)} \imp V \implies \varphi \!\mid\! \rho \vdash M \imp V \]
and
\[\varphi^{(n)} \!\mid \!\rho \ldash M^{(n)} \imps C \implies \varphi \!\mid\! \rho \vdash M \imps C \]
and similarly for boolean terms.
\end{lemma}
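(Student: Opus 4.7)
The plan is to proceed by simultaneous induction on the size of the derivations of the approximation-language judgments $\varphi^{(n)} \mid \rho \ldash M^{(n)} \imp V$ and $\varphi^{(n)} \mid \rho \ldash M^{(n)} \imps C$. To make the induction go through cleanly---since approximation levels decrease each time a recursive function is unfolded within a derivation---I would first generalize the statement. Define a forgetful operation $(-)^{\uparrow}$ that uniformly erases approximation levels, turning every $\amyletrec{n}$ into $\myletrec$ and every closure $\afclo{n}{\psi}{f(x{:}T){:}U.\,M}$ into $\fclo{\psi^{\uparrow}}{f(x{:}T){:}U.\,M^{\uparrow}}$, extended pointwise to function environments. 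The generalization asserts that for any approximation-language judgment $\psi \mid \rho \ldash L \imp V$ (respectively $\imps C$) we have $\psi^{\uparrow} \mid \rho \vdash L^{\uparrow} \imp V$ (respectively $\imps C^{\uparrow}$). Since $(M^{(n)})^{\uparrow} = M$ and $(\varphi^{(n)})^{\uparrow} = \varphi$, the stated lemma then follows immediately.

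Most cases of the induction are entirely routine, because the evaluation rules for values, addition, operations, lets, pairs, projections, conditionals, predicates, and evaluation contexts are syntactically identical in the two languages; one applies the induction hypothesis to the premises and then the corresponding main-language rule, observing in each case that $(-)^{\uparrow}$ commutes with the syntactic constructions involved. The symbolic rule for reverse-mode differentiation is likewise immediate, as trace terms and values contain no recursive definitions, so $C^{\uparrow} = C$, $V^{\uparrow} = V$, $W^{\uparrow} = W$, and $\newcalRprop{x:T}{C}{V}{W}{}$ is unchanged by lifting.

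The two substantive cases concern recursion. For the $\amyletrec{n}$ rule, the premise evaluates the body under $\varphi[\afclo{n}{\varphi}{f(x{:}T){:}U.\,M}/f]$; applying $(-)^{\uparrow}$ yields exactly the function environment $\varphi^{\uparrow}[\fclo{\varphi^{\uparrow}}{f(x{:}T){:}U.\,M^{\uparrow}}/f]$ used by the main-language $\myletrec$ rule, and the induction hypothesis closes the case. The function-application rule is where the subtlety of mixed levels manifests, and this is the main obstacle I anticipate: it requires $\varphi(f) = \afclo{n+1}{\varphi'}{f(x{:}T){:}U.\,M}$, yet recurses through the body using the lower-level closure $\afclo{n}{\varphi'}{f(x{:}T){:}U.\,M}$ in place of $f$, so two distinct approximation levels appear in a single derivation step. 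The resolution is built into $(-)^{\uparrow}$: both $\afclo{n+1}{\varphi'}{\cdots}$ and $\afclo{n}{\varphi'}{\cdots}$ are sent to the same main-language closure $\fclo{\varphi'^{\uparrow}}{f(x{:}T){:}U.\,M^{\uparrow}}$, which equals $\varphi^{\uparrow}(f)$. After applying the induction hypothesis to both premises, the main-language function-application rule applies verbatim, completing this case and hence the proof.
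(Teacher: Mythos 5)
Your proof is correct and is essentially the paper's own argument: the paper's proof consists of the single sentence ``Deleting all the $-^{(n)}$s from the proof of $\varphi^{(n)} \mid \rho \ldash M^{(n)} \imp V$, one obtains a proof of $\varphi \mid \rho \vdash M \imp V$,'' and your erasure operation $(-)^{\uparrow}$ together with the generalization to mixed-level judgments is precisely the careful formalization of that sentence, including the one genuinely subtle point (the two levels $n+1$ and $n$ in the application rule collapsing to the same closure).
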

\POPLomit{ \begin{proof} Deleting all the $-^{(n)} $s from the proof of $\varphi^{(n)} \mid \rho \vdash M^{(n)} \!\imp\! V$, one  obtains a proof of 
$\varphi \mid \rho \vdash M \!\imp\! V$.
\end{proof}}
%
Operational completeness  follows straightforwardly from these three lemmas:
%
%
 \begin{theorem}[Operational completeness] \label{op-com} 
Suppose  that $\Phi\mid \G \vdash M:T$, $\vdash \varphi\type \Phi$, and  
$\vdash \rho\type \G$. Then:
 \begin{enumerate}
\item \textbf{Operational semantics.}  %
\[\den{M}\den{\varphi}\den{\rho}\mydef \implies \varphi \mid \rho \vdash M\myDef  \]
 (and similarly for boolean terms).
\item \textbf{Symbolic operational semantics.} 
\[\den{M}\den{\varphi}\den{\rho}\mydef \implies \exists C.\, \varphi \mid \rho \vdash M \imps C \]
\end{enumerate}
\end{theorem}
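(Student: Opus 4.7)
The plan is to lift the operational completeness result from the approximation language, which is proved in Lemma~\ref{comp}, to the main language. This is the standard denouement of the approximation-language technique, using Lemmas~\ref{supps} and~\ref{going-up} as the bridge.

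For part (1), the argument proceeds as follows. Suppose $\den{M}\den{\varphi}\den{\rho}\mydef$. By Lemma~\ref{supps} we have
\[
\den{M}\den{\varphi}\den{\rho} \;=\; \bigvee_{n \in \N} \den{M^{(n)}}\den{\varphi^{(n)}}\den{\rho},
\]
so, since the least upper bound is defined, there exists some $n$ with $\den{M^{(n)}}\den{\varphi^{(n)}}\den{\rho}\mydef$. Since approximation preserves typing judgments, $\Phi \mid \G \ldash M^{(n)} \type T$ and $\ldash \varphi^{(n)} \type \Phi$. Lemma~\ref{comp} states that every well-typed term of the approximation language is computable and every well-typed function environment is computable, so from $\den{M^{(n)}}\den{\varphi^{(n)}}\den{\rho}\mydef$ the definition of computability yields some value $V$ with $\varphi^{(n)} \mid \rho \ldash M^{(n)} \impe V$. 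Finally Lemma~\ref{going-up} carries this back to the main language, giving $\varphi \mid \rho \vdash M \impe V$ and hence $\varphi \mid \rho \vdash M \myDef$. The proof for boolean terms is identical, using the boolean clauses of the same three supporting results.

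For part (2), the key observation is that operational interpolation (Proposition~\ref{ord-symb-con}) already bridges the two modes of evaluation: whenever $\varphi \mid \rho \vdash M \impe V$ holds, there exists a trace term $C$ such that $\varphi \mid \rho \vdash M \imps C$ (and $\varphi \mid \rho \vdash C \impe V$). Combining this with part (1), given $\den{M}\den{\varphi}\den{\rho}\mydef$ we first obtain $\varphi \mid \rho \vdash M \impe V$ for some $V$, and then extract the desired $C$ from the interpolation proposition.

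In terms of difficulty, the theorem itself is essentially bookkeeping once the three supporting lemmas are in hand. The substantive work all resides inside Lemma~\ref{comp}, where the computability predicate must be defined at each type of entity (closures, function environments, terms, boolean terms) and the usual case analysis on term form must be carried out; among these cases, the differentiation redex $\dRD{x \type U.\, N}{V}{W}$ is the only genuinely novel one, and it is handled by combining Proposition~\ref{ord-symb-con} to pass from the ordinary evaluation of $N$ (obtained by the induction hypothesis on $N$, whose semantics is forced to be defined at the relevant point because $\mydr_v$ being defined at $w$ requires $\den{N}$ to be defined at $v$) to a symbolic evaluation, after which Theorem~\ref{for_diff_corr} guarantees that the resulting $\newcalRprop{x\type U}{C}{V}{W}{}$ term has strictly fewer reverse-mode derivative occurrences so that the outer induction hypothesis applies. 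None of this additional machinery is needed for the present theorem: all the hard work has been done, and the proof here is simply the final two-step assembly described above.
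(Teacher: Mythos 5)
Your proof is correct and follows exactly the paper's own argument: part (1) chains Lemma~\ref{supps}, Lemma~\ref{comp}, and Lemma~\ref{going-up}, and part (2) combines part (1) with the interpolation result of Proposition~\ref{ord-symb-con}. The additional commentary on where the substantive work lies (inside Lemma~\ref{comp}, especially the differentiation redex case) is accurate but not part of the theorem's proof itself.
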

 \begin{proof}
\hspace{1pt}
\begin{itemize}
\item[(1)]  Suppose $\den{M}\den{\varphi}\den{\rho}\mydef$.  
By Lemma~\ref{supps} $\den{M}\den{\varphi} = \bigvee_n \den{M^{(n)}}\den{\varphi^{(n)}}$.
So $\den{M^{(n)}}\den{\varphi^{(n)}}\den{\rho}\mydef$ for some $n$.
 By Lemma~\ref{comp}, both  $M^{(n)}$ and $\varphi^{(n)}$ are computable. 
 So $\varphi^{(n)} \mid \rho \ldash M^{(n)} \myDef$. By Lemma~\ref{going-up}, we then have $\varphi \mid \rho \vdash M \myDef$, as required.
 The proof for boolean terms is similar.
\item[(2)]  This follows from the first part and  Proposition~\ref{ord-symb-con}.
\end{itemize}
\end{proof}

%



\mycut{\section{Source code transformation}

We show that  differentiation can be removed from any program by a source code transformation.  This is done by extending the transformation of trace terms to all programs. As discussed in the Introduction the transformation for conditionals is well known. In our notation it is:

\[  \dRD{x:T.\, \myif B \mythen M \myelse N}{P}{Q} = \begin{array}{l} \mylet x:T \mybe P \myin \\
                                                                                                            \mylet  Y:U  \mybe Q \myin \\
                                                                                    \myif B \mythen \dRD{x:T.\, M }{x}{y} \myelse  \dRD{x:T.\, N }{x}{y} \

\end{array} \]
where $y$ is not free in $B$, $M$ or $N$, and $\vdash Q:U$.

\myacomment{need to do transformation relative to typing environments.}
\newcommand{\depth}[2]{|#1|_{#2}}
The remaining difficulty is recursive function definitions. Our idea is first to define a measure of the depth a defined function is differentiated and then  to use that to transform the program by adding definitions of sufficiently many derived versions of defined functions. To this end, for any function variable $f$ we define its derivation depth 
$\depth{M}{f}$ in a term $M$, or $\depth{B}{f}$) in a boolean expression $B$ by structural induction, as follows:

\[\begin{array}{c}

 \depth{x}{f} =  \depth{r}{f} = \depth{\ast}{f} = 0\\\\

\depth{op(M)}{f} = \depth{\ttfst_{T,U}(M)}{f} =\depth{\ttsnd_{T,U}(M)}{f} = \depth{M}{f}\\\\

\depth{M + N}{f} =  \depth{\mylet   x:T  \mybe M \myin N}{f} = \depth{M + N}{f} =
\depth{\tuple{M,N}_{T,U}}{f} =  \max(\depth{M}{f}, \depth{N}{f})\\\\

 \depth{ \myif B \mythen M \mythen N}{f} = \max(\depth{B}{f},\depth{M}{f},\depth{N}{f}   )\\\\

  \depth{\myletrec g(x: T):U \mybe M \myin N}{f} = \max( \depth{M}{f},\depth{N}{f}) \quad (g \neq f)\\\\

 \depth{g(M)}{f} = \left \{ \begin{array}{ll} \max(1,\depth{M}) & (g = f)\\
                                                                 \depth{M}{f} & (g \neq f)
                            \end{array} \right . \\\\
 
 \depth{ \dRD{x:T.\,N}{L}{M}}{f} = \max(\depth{L}{f}, \depth{M}{f},  \depth{N}{f} + 1)\\\\

  \depth{\true}{f} =   \depth{\false}{f} = 0 \\\\

  \depth{\pred(M)}{f} = \depth{M}{f}

\end{array}\]}

 \section{Discussion}    \label{query}
 

There is much more to do on the theory of differentiable programming languages, even at a basic level; we  briefly suggest some possibilities.
Trace-based automatic differentiation systems generally work with A-normal forms, or equivalent structures. They also employ optimizations. For example, as mentioned in the Introduction, they may record auxiliary information in evaluation traces to reduce recomputation. Other automatic differentiation systems rely on code transformations for differentiation. It would be interesting to define and study 
such optimizations and alternative approaches, perhaps in the setting of our  language.  

Another interesting 
possibility would be to work with non-differentiable functions 
 like ReLU or with non-smooth functions.   For the former, one might use Clarke sub-gradients~\cite{C90}, following~\cite{GE13} (the Clarke sub-gradient of ReLU at $0$ is the interval $[0,1]$); for the latter, one may use $C^k$ functions.
Yet another possibility
  would be to  work with approximate reals, rather than reals, and to seek numerical accuracy theorems; 
  %
  one might employ a   domain-theoretic notion   of sub-differentiation of functions over Scott's interval domain,  generalizing the Clarke sub-gradient (see~\cite{EdalatL04,EdalatM18}).


 One would like results for richer languages,  with a wider range of types or with computational effects. 
 The problem is then how these additional features interact with differentiation. An extension to side-effects would make contact with the literature on the automatic differentiation of imperative languages, such as Fortran. An extension to probability, in some form, would make contact with stochastic optimization and, further, with probabilistic languages for statistical learning.
 For higher-order types, there may be a domain-theoretic analogue of convenient vector spaces that 
 additionally supports recursion. 
 Further, one might, as suggested in~\cite{VKS18}, 
 seek a domain-theoretic analogue of diffeological spaces  (see~\cite{IZ13}); that would also accommodate sum and recursive types. One might also wish to program with Riemannian manifolds,  to accommodate natural gradient descent~\cite{A96}; these too should fit into a diffeological framework.
 In another direction, the work on categories with differential structure may yield an axiomatic version of adequacy theorems for programming languages with differentiation constructs; such categories further equipped with  structure to model partiality~\cite{cockett2011differential} are of particular
interest.

 Finally, if perhaps orthogonally, it is important to add explicit tensor (multi-dimensional array) types, 
with accompanying shape analysis. There is  a long history of programming-language design in this area; a salient example is the design of Remora~\cite{SOM14}.

\bibliography{full_paper_pre_POPL.bib}



\end{document}